\pdfoutput=1
\documentclass[a4paper,reprint,onecolumn,notitlepage,aip,cha,nofootinbib]{revtex4-2}
\usepackage{geometry}

\usepackage[english]{babel}
\usepackage[T1]{fontenc}
\usepackage{tgtermes} 
\usepackage{tgheros} 
\renewcommand{\selectlanguage}[1]{} 

\usepackage{amsmath}
\usepackage{amssymb}
\usepackage{amsthm}
\usepackage{physics}
\usepackage{mathtools}

\usepackage{bm}
\usepackage{enumerate}
\usepackage{chngcntr}
\usepackage[dvipsnames]{xcolor}
\usepackage{bbold}
\usepackage[makeroom]{cancel}
\usepackage{soul}

\usepackage[normalem]{ulem}

\usepackage{mathrsfs}

\usepackage{hyperref}
\usepackage[capitalise]{cleveref}
\citestyle{plain}                   

\usepackage{tikz}
\usepackage{tikz-3dplot}
\usepackage{tikz-cd} 
\usetikzlibrary{patterns}

\usepackage{subcaption}


\DeclareMathOperator{\ran}{ran}
\DeclareMathOperator{\Span}{Span}

\crefname{equation}{}{}

\newtheorem{theorem}{Theorem}[section]
\newtheorem{lemma}[theorem]{Lemma}
\crefname{lemma}{Lemma}{Lemmas}
\Crefname{lemma}{Lemma}{Lemmas}

\crefname{corollary}{corollary}{Corollaries}
\Crefname{corollary}{Corollary}{Corollaries}
\newtheorem{proposition}[theorem]{Proposition}
\crefname{proposition}{Proposition}{Propositions}
\Crefname{proposition}{Proposition}{Propositions}

\theoremstyle{remark}

\crefname{remark}{Remark}{Remarks}
\Crefname{remark}{Remark}{Remarks}

\newtheorem{example}{Example}
\crefname{example}{Example}{Examples}
\Crefname{example}{Example}{Examples}

\theoremstyle{definition}

\crefname{definition}{Def.}{Defs.}
\Crefname{definition}{Definition}{Definitions}

\DeclareMathOperator{\diag}{diag}
\DeclareMathOperator{\Aff}{Aff}

\newcommand{\RR}{\mathbb{R}}
\newcommand{\CC}{\mathbb{C}}
\newcommand{\wconv}{\rightharpoonup}
\newcommand{\iden}{\mathbb{1}}
\renewcommand{\i}{\mathrm{i}}

\newcommand{\LL}{\mathrm{LL}}
\newcommand{\LI}{\mathrm{L}}

\newcommand{\dua}[2]{\langle #1, #2 \rangle}

\newcommand{\sps}[1]{\mathring{#1}}
\renewcommand{\ol}[1]{\overline{#1}}
\newcommand{\wt}[1]{\widetilde{#1}}

\newcommand{\MC}{\mathcal{M}}
\newcommand{\TC}{\mathcal{T}}
\newcommand{\HC}{\mathcal{H}}
\newcommand{\DC}{\mathcal{D}}

\newcommand{\EC}{\mathcal{E}}

\newcommand{\RC}{\mathcal{R}}

\renewcommand{\phi}{\varphi}

\newcommand{\valpha}{{\bm{\alpha}}}
\newcommand{\vbeta}{{\bm{\beta}}}
\newcommand{\vpsi}{{\bm{\psi}}}
\newcommand{\vphi}{{\bm{\phi}}}
\newcommand{\vsigma}{{\bm{\sigma}}}
\newcommand{\vzeta}{{\bm{\zeta}}}

\newcommand{\vchi}{{\bm{\chi}}}
\newcommand{\vxi}{{\bm{\xi}}}

\newcommand{\vgamma}{{\bm{\gamma}}}

\newcommand{\vomega}{{\bm{\omega}}}

\newcommand{\vmu}{{\bm{\mu}}}

\newcommand{\vGamma}{{\bm{\Gamma}}}
\newcommand{\vLambda}{{\bm{\Lambda}}}

\newcommand{\vOmega}{{\bm{\Omega}}}
\renewcommand{\va}{\mathbf{a}}
\newcommand{\ve}{\mathbf{e}}
\renewcommand{\vu}{\mathbf{u}}

\newcommand{\vw}{\mathbf{w}}
\newcommand{\vy}{\mathbf{y}}
\newcommand{\vc}{\mathbf{c}}

\newcommand{\vh}{\mathbf{h}}
\newcommand{\vx}{\mathbf{x}}
\newcommand{\lapl}{\bm{\Delta}}
\newcommand{\vt}{\mathbf{t}}
\newcommand{\vH}{\mathbf{H}}
\newcommand{\vv}{\mathbf{v}}
\newcommand{\vj}{\mathbf{j}}
\newcommand{\vf}{\mathbf{f}}
\newcommand{\vV}{\mathbf{V}}
\newcommand{\vT}{\mathbf{T}}
\newcommand{\vK}{\mathbf{K}}
\newcommand{\vM}{{\mathbf{M}}}
\newcommand{\vA}{{\mathbf{A}}}
\newcommand{\vW}{{\mathbf{W}}}

\newcommand{\vP}{{\mathbf{P}}}

\begin{document}

\title[DFT for the Dicke Hamiltonian]{Density-functional theory for the Dicke Hamiltonian}

\author{Vebj{\o}rn H. Bakkestuen}
\affiliation{Department of Computer Science, Oslo Metropolitan University, Norway}

\author{Mih\'aly A. Csirik}
\affiliation{Department of Computer Science, Oslo Metropolitan University, Norway}
\affiliation{Hylleraas Centre for Quantum Molecular Sciences, Department of Chemistry, University of Oslo, Norway}

\author{Andre Laestadius}
\affiliation{Department of Computer Science, Oslo Metropolitan University, Norway}
\affiliation{Hylleraas Centre for Quantum Molecular Sciences, Department of Chemistry, University of Oslo, Norway}
\email{andre.laestadius@oslomet.no}

\author{Markus Penz}
\affiliation{Max Planck Institute for the Structure and Dynamics of Matter, Hamburg, Germany}
\affiliation{Department of Computer Science, Oslo Metropolitan University, Norway}

\begin{abstract}
A detailed analysis of density-functional theory for quantum-elec\-tro\-dynamical model systems is provided.  
In particular, the quantum Rabi model, the Dicke model, and a generalization of the latter to multiple modes are considered.
We prove a Hohenberg--Kohn theorem that manifests the magnetization and displacement as internal variables, along with several representability results. The constrained-search functionals for pure states and ensembles are introduced and analyzed.
We find the optimizers for the pure-state constrained-search functional to be low-lying eigenstates of the Hamiltonian and, based on the properties of the optimizers, we formulate an adiabatic-connection formula.
In the reduced case of the Rabi model we can even show differentiability of the universal density functional, which amounts to unique pure-state $v$-representability.
\end{abstract}


\maketitle

\section{Introduction}

Quantum electrodynamics (QED) is the fully quantized theory of matter and light~\cite{ryder1996quantum,greiner2013field}. It describes the interaction between charged particles through their coupling to the electromagnetic field. Apart from high-energy physics, particularly in the domain of equilibrium condensed-matter physics, non-relativistic QED in the shape of the Pauli--Fierz Hamiltonian~\cite{spohn2004dynamics} is considered sufficient to describe interesting effects, such as the modification of chemical and material properties~\cite{ebbesen2016hybrid,garcia2021manipulating,Ruggenthaler2023}. In order to explain those, and due to the high number of involved particles and ensuing complexity of the problem, well-established first-principle approximation methods, such as density-functional theory (DFT), were adapted for non-relativistic QED~\cite{Ruggenthaler2014,Ruggenthaler2017,Ruggenthaler2018,flick2019light,jestadt2019light,konecny2024relativistic}. DFT itself is an approximation technique for many-particle quantum systems, ubiquitous in chemistry and materials science, where the correlated wavefunction is replaced with a reduced, collective variable~\cite{eschrig2003-book,dreizler2012-book,cances-2023-DFT-book}. In the standard formulation this variable is the particle density that, following a famous idea by Hohenberg and Kohn~\cite{Hohenberg1964}, maps to the unique external potential that then again allows to retrieve the wavefunction as the ground state of the Schrödinger equation. Different formulations of DFT allow different levels of mathematical rigor in this construction~\cite{penz2023-Review-I,penz2023-Review-II} and we aim at proving the mathematical cornerstones of DFT for one relatively simple extension of DFT to QED (QEDFT) here.

While most work in QEDFT is based on the Pauli--Fierz Hamiltonian, various approximations to this Hamiltonian are used as starting points for further investigations. These reduced Hamiltonians lead to a hierarchy of QEDFTs and yield a direct connection to well-established models of quantum optics that are designed to describe the photonic subsystem accurately while strongly simplifying the matter part. One such paradigmatic quantum-optical model is the quantum Rabi model. Despite its physical simplicity, it consists of a single two-level system coupled to one photonic mode, mathematically it is a highly non-trivial problem and only relatively recently an analytical expression for its spectrum has been found based on a Bargmann-space reformulation~\cite{Braak2011} (also see the reviews on the topic~\cite{Braak2015,Xie2017}). The same study sparked a controversy over the integrability of the system (and integrability in general)~\cite{Moroz2013,Batchelor2015-Integrability,Cunha2016} and the interest of the mathematics community in the model continues as of today~\cite{Nguyen2024,Hiroshima2025}.  Similar mathematical results have also been achieved for the Dicke model~\cite{Braak2013-Dicke,He2015-Dicke}, which describes multiple two-level systems coupled to a photonic mode and which recently received a QEDFT-type formulation~\cite{Novokreschenov2023}. In this work, that aims at a mathematical formulation of QEDFT for a simple light-matter system, we focus on a generalization of the Dicke model that also allows multiple photonic modes. Due to the relative simplicity of the matter subsystem in the Dicke Hamiltonian, this model allows to focus on the novel aspect of QEDFT, which correlates two physically different subsystems, those of matter and light. We note that this model includes the standard quantum-optical simplification that disregards the all-to-all dipole interaction that arises in the long-wavelength limit of the Pauli--Fierz Hamiltonian~\cite{Ruggenthaler2023}. This term raised a lot of interest in the last years~\cite{rokaj2018light,schaefer2020relevance,schnappinger2023cavity,borges2024extending}, since it can potentially explain the modifications of chemical and material equilibrium properties observed in collective-coupling situations~\cite{sidler2024unraveling}. This form of the light-matter Hamiltonian with an additional interaction will be investigated in a separate work.

One objective of the paper at hand is the extension of Lieb's analysis of standard Coulombic DFT~\cite{Lieb1983} to a model in QEDFT. 
It includes further techniques from more recent approaches based on convex analysis~\cite{Laestadius2024}. Due to the reduced complexity in the model, we are able to achieve considerably more than has been possible for the standard theory so far. This mainly includes results concerning ``$v$-representability'' and many interesting properties of the universal density functionals.

\cref{sec:prelim} first introduces the \emph{multi-mode Dicke model} and the relevant notation. \cref{sec:main} then assembles the main results, starting with a \emph{Hohenberg--Kohn theorem} in \cref{sec:HK}. For a ground state this theorem proves the unique mapping from the magnetization and displacement vectors (as the density variables) to the external potentials. Yet, this comes with an important restriction: a measure-zero set of magnetizations cannot be uniquely mapped, only those that are \emph{regular} by our definition. The other important feature of DFT is the \emph{Levy--Lieb (constrained-search) functional} defined and discussed in \cref{llsec}. This introduces the \emph{constraint manifold}, the set of all wavefunctions with a given magnetization and displacement, into the discussion. The Levy--Lieb functional is an optimization problem on this constraint manifold, and its optimizers are demonstrated to satisfy the Schrödinger equation, while not necessarily being ground states, yet low-lying eigenstates. \cref{sec:AC} shows how the \emph{adiabatic connection} can be constructed for the given model. We then proceed to the \emph{Lieb functional} in \cref{flsec} that extends the search space to mixed states. Finally, the model is reduced to a single two-level system coupled to one photon mode, the \emph{quantum Rabi model}, in \cref{specsec}. This simplification allows to achieve (i) a one-to-one mapping between the density variables and the potentials (except at the boundary values for magnetization), (ii) equality between the Levy--Lieb functional and the Lieb functional, and (iii) differentiability of this functional. These are all properties that are necessary for the formulation of DFT and such we are able to fully extend this theory to a restricted QED setting.

\subsection*{Acknowledgements} 
The authors would like to express their gratitude towards Michael Ruggenthaler for valuable guidance and feedback.
AL, MACs, MP, VHB were supported by ERC-2021-STG grant agreement No. 101041487 REGAL. 
AL and MACs were also supported by Research Council of Norway through funding of the CoE Hylleraas Centre for Quantum Molecular Sciences Grant No. 262695 and CCerror Grant No. 287906.

\section{Preliminary notions}
\label{sec:prelim}

\subsection{Notations and function spaces}

In this work, we consider a set of $N$ two-level systems (matter part) individually coupled to $M$ modes of a quantized radiation field (light part). The latter are conveniently described as quantum harmonic oscillators and the corresponding Hilbert space is thus $\HC=\HC_{\rm ph}\otimes\HC_{\rm f}$, where $\HC_{\rm ph}=\bigotimes^M L^2(\RR)$ and 
$\HC_{\rm f}=\bigotimes^N \CC^2 \simeq \CC^{2^N}$. We have
$$
\HC\simeq L^2(\RR^M)\otimes \CC^{2^N} \simeq L^2(\RR^M, \CC^{2^N}).
$$
Here, as usual, $L^2(\RR)$ is the Hilbert space of square-integrable complex-valued functions, equipped with the usual inner product $\dua{\cdot}{\cdot}$,
conjugate-linear in its first argument; the norm on this space is denoted by $\|\cdot\|$. We use the same notations on $L^2(\RR^M,\CC^{2^N})$,
i.e., 
$$
\dua{\vphi}{\vpsi}=\sum_{\valpha} \dua{\phi^\valpha}{\psi^\valpha} 
= \sum_{\alpha_1,\ldots,\alpha_N\in\{+,-\}} \int_{\RR^M} \ol{\phi^{\alpha_1,\ldots,\alpha_N}(\vx)} \psi^{\alpha_1,\ldots,\alpha_N}(\vx)\,\dd\vx,
$$
where $\psi^\valpha$ is the spin projection of $\vpsi$ corresponding to the eigenvector of the lifted Pauli matrices $\sigma_z^j$ indexed by 
the multiindex $\valpha\in\{+,-\}^N$.
Here, for any $j=1,\ldots,N$, we have set
$$
\sigma_a^j=\iden \otimes \ldots  \otimes \iden \otimes \underbrace{\sigma_a}_{\text{$j$th}} \otimes \iden \otimes \ldots \iden \in \CC^{2^N\times 2^N},
$$
where the Pauli matrices are given by
$$
\sigma_x=\begin{pmatrix}
0 & 1\\
1 & 0
\end{pmatrix}, \quad 
\sigma_y=\begin{pmatrix}
0 & -\i\\
\i & 0
\end{pmatrix}, \quad \text{and} \quad 
\sigma_z=\begin{pmatrix}
1 & 0\\
0 & -1
\end{pmatrix}.
$$
For convenience, we introduce the vector of lifted Pauli matrices,
\begin{equation*}
\vsigma_a=(\sigma_a^1, \ldots, \sigma_a^N)^\top \in \left(\CC^{2^N \times 2^N}\right)^N.
\end{equation*}
For instance, if $N=2$, then 
$$
\vsigma_z=\left(
\begin{pmatrix}
1 & & &\\
& 1 & &\\
& & -1 &\\
& & & -1
\end{pmatrix},
\begin{pmatrix}
1 & & &\\
& -1 & &\\
& & 1 &\\
& & & -1
\end{pmatrix}
\right)^\top,
$$
which has always diagonal form, and
$$
\vsigma_x=\left(
\begin{pmatrix}
0& 0 & 1 & 0\\
0& 0& 0 & 1\\
1& 0 & 0 & 0 \\
0& 1& 0 & 0
\end{pmatrix},
\begin{pmatrix}
0& 1 & 0 & 0\\
1& 0& 0 & 0\\
0& 0 & 0 & 1 \\
0& 0& 1 & 0
\end{pmatrix}
\right)^\top.
$$

\subsection{Multi-mode Dicke Hamiltonian}
\label{sec:Ham}

We first introduce the ``internal'' part of our Hamiltonian $\vH_0 : \HC\to\HC$, given by
\begin{equation}\label{h0def}
\vH_0 = (-\lapl_{\RR^M} + |\vx|^2 )\iden_{\CC^{2^N}} +  \vx\cdot\vLambda \vsigma_z - \vt\cdot \vsigma_x    
\end{equation}
which is essentially self-adjoint on $C_c^\infty(\RR^M, \CC^{2^N})$ as an operator on $\HC$.
Here,  $\lapl_{\RR^M}$ is the usual Laplace operator on $\RR^M$, which we will henceforth simply denote as $\lapl$.
Also, $\vLambda\in \RR^{M\times N}$ and $\vt\in\RR^N$ with $\vt\neq\bm{0}$. The product $\vLambda\vsigma_z$ is to be understood as the $M$-vector of $2^N\times 2^N$ matrices 
$$
\vLambda\vsigma_z=\left(\sum_{n=1}^N\Lambda_{1n}\sigma_z^n,\ldots,\sum_{n=1}^N\Lambda_{Mn}\sigma_z^n\right)^\top.
$$
In the form above, we recognize the Hamiltonian as a variant of the harmonic oscillator with coordinates $\vx$ and non-commuting coefficients, and indeed there is a connection to the field of ``non-commutative harmonic oscillators''~\cite{Wakayama2015}.
We will usually suppress the $\iden_{\CC^{2^N}}$ acting on the two-level systems.
Then, without vector notation, the Hamiltonian reads
$$
\vH_0=\sum_{m=1}^M (-\partial_{x_m}^2 + x_m^2) + \sum_{m=1}^M\sum_{n=1}^N x_m\Lambda_{mn}\sigma_z^n - \sum_{n=1}^N t_n\sigma_x^n.
$$
We may write
\begin{align*}
\left|\vx + \frac{1}{2} \vLambda \vsigma_z\right|^2 
&= |\vx|^2 + \vx \cdot \vLambda \vsigma_z +  \frac{1}{4} \vsigma_z \cdot (\vLambda^\top \vLambda \vsigma_z),
\end{align*}
so that
$$
\vH_0 = -\lapl + \vV - \vt\cdot \vsigma_x - \frac{1}{4} \vsigma_z \cdot (\vLambda^\top \vLambda \vsigma_z),
$$
where $\vV(\vx)=\left(\vx + \frac{1}{2} \vLambda \vsigma_z\right)^2$.
This shows that $\vH_0$ is bounded from below,
\begin{align*}
\dua{\vpsi}{\vH_0\vpsi}&=\int_{\RR^M} \left(|\grad\vpsi|^2 + \left|\vV\vpsi\right|^2\right) - \dua{\vpsi}{\vt\cdot \vsigma_x\vpsi} 
- \frac{1}{4} \dua{\vsigma_z\vpsi}{\vLambda^\top \vLambda \vsigma_z\vpsi}\\
&\ge \int_{\RR^M} \left(|\grad\vpsi|^2 + \left|\vV\vpsi\right|^2\right) - \left(\|\vt\|_\infty + \frac{1}{4} \|\vLambda^\top \vLambda\|_2 \right) \|\vpsi\|^2 \ge C\|\vpsi\|^2.
\end{align*}
In particular, $Q_0:=Q(\vH_0)=Q(-\lapl + \vV)$ is the form domain of $\vH_0$. It is dense and compactly embedded (see proof of \cref{fllexist}) in $\HC$, and it forms a Hilbert space itself with respect to the norm $\sqrt{ \|\vpsi\|^2 + \dua{\vpsi}{(\vH_0+C)\vpsi}}$.

From the discussion above we see that $\vH_0 = -\lapl  + \vV - \vt\cdot \vsigma_x$ would be another possible choice for the basic Hamiltonian that is almost equivalent and that is bounded below even in the limit $M\to\infty$.
Yet, we stick to the form \cref{h0def} that is linear in $\vLambda$ since this feature will be important in \cref{sec:AC} where the adiabatic connection is analyzed.

In this article, we consider the Hamiltonian $\vH_0$ with an additional linear coupling, of both matter and light parts, to external potentials $\vv\in\RR^N$ and $\vj\in\RR^M$ respectively, i.e.,
\begin{equation}\label{hvjdef}
\vH(\vv,\vj)=\vH_0 + \vv\cdot \vsigma_z + \vj\cdot \vx.
\end{equation}
The following virial result is of independent interest.

\begin{theorem}[Virial]\label{thm:first-virial}
For a ground state $\vpsi$ of $\vH(\vv,\vj)$ the relations
\begin{align*}
\|\grad\vpsi\|^2 &=  \|\vx\vpsi\|^2 + \frac{1}{2}\dua{\vpsi}{\vx\cdot\vLambda\vsigma_z\vpsi} +  \frac{1}{2} \vj\cdot\dua{\vpsi}{\vx\vpsi},\\
\dua{\vpsi}{\vt\cdot\vsigma_x\vpsi} &= -\frac{2}{M}\Re \dua{\vpsi}{ (\vt\cdot\vsigma_x)(\vx\cdot\grad)\vpsi}
\end{align*}
hold true.
\end{theorem}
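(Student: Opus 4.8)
The plan is to read both identities as virial relations and to obtain them from the one-parameter family of dilations $\vx \mapsto \lambda\vx$. Concretely, I would introduce the unitary group $U_\lambda$ on $\HC \simeq L^2(\RR^M,\CC^{2^N})$ given by $(U_\lambda\vpsi)(\vx) = \lambda^{M/2}\vpsi(\lambda\vx)$, which acts only on the photonic coordinates and leaves the spin index untouched. The first relation will come from the stationarity of the energy along this family at $\lambda = 1$, whereas the second turns out to be a pure integration-by-parts identity for the (non-self-adjoint) generator $\vx\cdot\grad$ of the dilations and does not actually use minimality beyond regularity.

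For the first relation, I would track how each term of $\vH(\vv,\vj)$ transforms under $U_\lambda$: the kinetic term scales as $\lambda^2$, the harmonic term $|\vx|^2$ as $\lambda^{-2}$, the two terms linear in $\vx$ (namely $\vx\cdot\vLambda\vsigma_z$ and $\vj\cdot\vx$) as $\lambda^{-1}$, and the purely spin terms $\vt\cdot\vsigma_x$ and $\vv\cdot\vsigma_z$ are invariant since they commute with $U_\lambda$. Hence
\[
E(\lambda) := \dua{U_\lambda\vpsi}{\vH(\vv,\vj)\,U_\lambda\vpsi} = \lambda^2\|\grad\vpsi\|^2 + \lambda^{-2}\|\vx\vpsi\|^2 + \lambda^{-1}\bigl(\dua{\vpsi}{\vx\cdot\vLambda\vsigma_z\vpsi} + \vj\cdot\dua{\vpsi}{\vx\vpsi}\bigr) + \text{const}.
\]
Since $U_\lambda\vpsi$ is normalized and $\vpsi$ is a ground state, $E$ attains its minimum at the interior point $\lambda=1$, so $E'(1)=0$. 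Differentiating and evaluating at $\lambda=1$ gives $2\|\grad\vpsi\|^2 - 2\|\vx\vpsi\|^2 - \dua{\vpsi}{\vx\cdot\vLambda\vsigma_z\vpsi} - \vj\cdot\dua{\vpsi}{\vx\vpsi} = 0$, which is exactly the first asserted identity. Equivalently one may phrase this as $\dua{\vpsi}{[\vH(\vv,\vj),\,\vx\cdot\grad]\vpsi} = 0$, valid for any eigenstate.

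For the second relation, set $D := \vx\cdot\grad$ and $B := \vt\cdot\vsigma_x$. Integration by parts on $\RR^M$ yields the formal adjoint $D^* = -D - M$, and since $B$ is Hermitian and acts only on the spin factor it commutes with $D$. Therefore $(BD)^* = D^* B = -(DB + MB) = -(BD + MB)$, so the Hermitian part of $BD$ is $\tfrac12(BD + (BD)^*) = -\tfrac{M}{2}B$. Taking the expectation in $\vpsi$ gives $\Re\dua{\vpsi}{BD\vpsi} = -\tfrac{M}{2}\dua{\vpsi}{B\vpsi}$, which is precisely the second identity.

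The genuine point requiring care in both steps is regularity: I must ensure that $\vpsi$ lies in a domain where the scaled states remain in the form domain $Q_0$ (so that $E(\lambda)$ is finite and smooth in $\lambda$, justifying the dilation derivative) and where the boundary terms in the integration by parts vanish with $\vx\cdot\grad\vpsi \in \HC$. This is where I expect the main, though standard, effort: because $\vV(\vx) = \bigl(\vx + \tfrac12\vLambda\vsigma_z\bigr)^2$ is confining, elliptic regularity together with exponential (Gaussian) decay estimates place each spin component of the ground state effectively in the Schwartz class, which legitimizes both manipulations. Once this regularity is established, both identities follow from the elementary algebra above.
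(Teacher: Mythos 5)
Your proposal is correct, and it splits into two parts relative to the paper's proof. For the first relation you do essentially what the paper does: the paper scales with a diagonal matrix $\vM=\diag(\mu_1,\ldots,\mu_M)$ acting uniformly on all spinor components and differentiates at the identity, which after summing over the coordinates is exactly your single-parameter dilation $U_\lambda$; this part is the same argument. For the second relation your route is genuinely different. The paper obtains it variationally: it uses a \emph{spinor-dependent} scaling $\psi^\valpha_{\vmu}(\vx)=\mu_\valpha^{M/2}\psi^\valpha(\mu_\valpha\vx)$, applies ground-state stationarity in each $\mu_\vbeta$, sums over $\vbeta$, and then subtracts the first relation to isolate the $\vsigma_x$ terms. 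You instead observe that, with $D=\vx\cdot\grad$ and $B=\vt\cdot\vsigma_x$, the operator identity $(BD)^*=-BD-MB$ (from $D^*=-D-M$ and $[B,D]=0$) gives
\begin{equation*}
\Re\dua{\vpsi}{B D\vpsi}=-\tfrac{M}{2}\dua{\vpsi}{B\vpsi}
\end{equation*}
for \emph{any} sufficiently regular state, so the second relation is a pure integration-by-parts identity that uses nothing about ground states. This is a real gain in generality and transparency: it shows the second ``virial'' relation is not variational in nature at all, whereas the paper's derivation makes it look tied to minimality. Conversely, the paper's spinor-wise scaling is a technique it reuses later (e.g., the spinor-dependent shifts in the proof of the coupling formula for $F_\LL$), so its proof doubles as an introduction of that tool. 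One simplification to your regularity discussion: you do not need elliptic regularity or Gaussian decay. Both sides of your identity, written as $\sum_m\dua{x_m B\vpsi}{\partial_{x_m}\vpsi}$, are continuous in the $Q_0$-norm, and $C_c^\infty(\RR^M,\CC^{2^N})$ is dense in $Q_0$, so the identity extends from compactly supported smooth functions to all of $Q_0$ — and likewise the finiteness and smoothness of $\lambda\mapsto E(\lambda)$ only needs $\vpsi\in Q_0$, since each coefficient ($\|\grad\vpsi\|^2$, $\|\vx\vpsi\|^2$, and the linear terms, the latter finite by Cauchy--Schwarz) is then automatically finite.
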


\subsection{Constraints}\label{constrsec}
In what follows, we will often employ certain constraints on the wavefunction. For any $\vpsi\in\HC$, we define the \emph{magnetization}
$$
\vsigma_\vpsi=\dua{\vpsi}{\vsigma_z\vpsi}:=(\dua{\vpsi}{\sigma_z^1\vpsi},\ldots,\dua{\vpsi}{\sigma_z^N\vpsi})^\top \in \RR^N,
$$
where here and henceforth we employ the usual convention for ``vector-valued'' inner products. By the Cauchy--Schwarz inequality, we have that 
$\vsigma_\vpsi\in[-1,1]^N$ for any normalized $\vpsi\in\HC$.

Moreover, for any $\vpsi\in Q_0$ we define the \emph{displacement of $\vpsi$} as the vector
$$
\vxi_\vpsi=\dua{\vpsi}{\vx\vpsi}=\int_{\RR^M} \vx |\vpsi(\vx)|^2\,\dd\vx \in \RR^M,
$$
which is well-defined due to the Cauchy--Schwarz inequality (see also \eqref{eq:xi-finite}).

It will be instructive to explicitly spell out the relations $\|\vpsi\|=1$ and $\vsigma_\vpsi=\vsigma$ for the cases $N=1$ and $N=2$.

\begin{example}\label{ex:psi-vanish}
For $N=1$, we simply have
$$
\left.
\begin{aligned}
1&=\|\psi^+\|^2 + \|\psi^-\|^2\\
\sigma&=\|\psi^+\|^2 - \|\psi^-\|^2
\end{aligned}
\right\}
$$
so that $\|\psi^+\|^2=\frac{1+\sigma}{2}$ and $\|\psi^-\|^2=\frac{1-\sigma}{2}$. This immediately shows that if $\sigma=+1$, then $\psi^-\equiv 0$ 
and if $\sigma=-1$, then $\psi^+\equiv 0$. Moreover, these implications can be reversed, so that $\psi^+\not\equiv 0$ and $\psi^-\not\equiv 0$ precisely if $\sigma\in(-1,1)$.
Unfortunately, this is no longer true for $N\ge 2$.\\
\end{example}
\begin{example}\label{ex:ConstraintN2}
For $N=2$, we have
$$
\left.
\begin{aligned}
1&=\|\psi^{++}\|^2 + \|\psi^{+-}\|^2 + \|\psi^{-+}\|^2 + \|\psi^{--}\|^2\\
\sigma_1&=\|\psi^{++}\|^2 + \|\psi^{+-}\|^2 - \|\psi^{-+}\|^2 - \|\psi^{--}\|^2\\
\sigma_2&=\|\psi^{++}\|^2 - \|\psi^{+-}\|^2 + \|\psi^{-+}\|^2 - \|\psi^{--}\|^2
\end{aligned}
\right\}
$$
Adding and subtracting the last two equations from the first one, we obtain the following relations.
\begin{equation}\label{sigma2rel}
\left.
\begin{aligned}
\frac{1+\sigma_1}{2}&= \|\psi^{++}\|^2 + \|\psi^{+-}\|^2\\
\frac{1-\sigma_1}{2}&= \|\psi^{-+}\|^2 + \|\psi^{--}\|^2
\end{aligned}
\right\}
\quad
\left.
\begin{aligned}
\frac{1+\sigma_2}{2}&= \|\psi^{++}\|^2 + \|\psi^{-+}\|^2\\
\frac{1-\sigma_2}{2}&= \|\psi^{+-}\|^2 + \|\psi^{--}\|^2
\end{aligned}
\right\}
\end{equation}
From this it is apparent that whenever $\sigma_1=\pm 1$ or $\sigma_2=\pm 1$ (or both), a certain spinor component of $\vpsi$ must vanish.
But contrary to the $N=1$ case it is also possible that one (or more) spinor components of $\vpsi$ vanish even though $\vsigma\in(-1,1)^2$.

\begin{enumerate}[(i)]
\item\label{ex:ConstraintN2(i)} For instance, we have $\vsigma_\vpsi=\vsigma\in(-1,1)^2$ with $\sigma_1+\sigma_2<0$, if $\psi^{++}\equiv 0$ and $\|\psi^{+-}\|^2=\frac{1+\sigma_1}{2}>0$,
$\|\psi^{-+}\|^2=\frac{1+\sigma_2}{2}>0$ and $\|\psi^{--}\|^2= - \frac{\sigma_1+\sigma_2}{2}>0$.
\item We can even have $\psi^{+-}=\psi^{-+}\equiv 0$
if $\sigma=\sigma_1=\sigma_2\in (-1,1)$ by taking $\|\psi^{++}\|^2=\frac{1+\sigma}{2}$ and $\|\psi^{--}\|^2=\frac{1-\sigma}{2}$.
Similarly, $\psi^{++}=\psi^{--}\equiv 0$ if $\sigma=\sigma_1=-\sigma_2\in (-1,1)$, by taking $\|\psi^{+-}\|^2=\frac{1+\sigma}{2}$ and $\|\psi^{-+}\|^2=\frac{1-\sigma}{2}$.
\item However, three spinor components can only vanish for a $\vpsi\in\HC$ with $\vsigma_\vpsi=\vsigma=(\pm 1,\pm 1)$.
\end{enumerate}
\end{example}

In summary, if $N\ge 2$, $\vsigma\in(-1,1)^N$ does \emph{not} imply that $\psi^\valpha\not\equiv 0$ for all $\valpha$.

\section{Main results}
\label{sec:main}
In this section, we present our main results. 
The proofs are deferred until~\cref{sec:proofs}.

\subsection{Hohenberg--Kohn theorem}\label{sec:HK}

We begin our discussion with a Hohenberg--Kohn-type theorem. In order to state this, we need a definition which turns out to be crucial for the rest of the article.

Let the $N\times 2^N$ matrix $\vOmega$ be given by $\Omega_{n,\valpha}=(\sigma_z^n)_{\valpha\valpha}$, i.e., the matrix with 
the diagonal of $\sigma_z^n$ as the $n$-th row vector. 
We say that $\vsigma\in[-1,1]^N$ is \emph{regular} if for every $\vomega\in\RR^{2^N}$ with $\omega_\valpha\ge 0$ and $\sum_{\valpha} \omega_\valpha=1$ that verifies $\vOmega\vomega=\vsigma$, we have $\Aff \{\vOmega\ve_\valpha : \omega_\valpha\neq 0\}=\RR^N$, where $\Aff$ denotes the affine hull and $\ve_\valpha$ denotes the standard basis vector of $\RR^{2^N}$.
We denote the set of regular $\vsigma$'s by $\RC_N$.

\begin{example}
If $N=1$ then $\sigma\in[-1,1]$ is regular iff $\sigma\in(-1,1)$. In fact, $S=\{\vOmega\ve_\valpha : \omega_\valpha\neq 0\}\subset\{-1,1\}$
and so $\Aff(S)= \RR$ iff $|S|=2$. But $\vOmega\vomega=\sigma$ simply reads
$\omega_+-\omega_-=\sigma$, and $\omega_+, \omega_-\neq0$, $\omega_+ \neq \omega_-$ iff $\sigma\neq \pm 1$.\\
\end{example}
\begin{example}
When $N=2$, $S\subset\{ (\pm 1,\pm 1) \}$, i.e., the vertices of the unit square. Then $\Aff(S)=\RR^2$ holds iff $|S|\ge 3$.
According to the discussion after \cref{sigma2rel}, in order for $|S|\ge 3$ to hold, it is necessary and sufficient that $\vsigma\in\RC_2$, where
$$
\RC_2=\{ (\sigma_1,\sigma_2)\in(-1,1)^2 : \sigma_1\neq\sigma_2, \; \sigma_1\neq -\sigma_2 \}.
$$
In light of \cref{ex:ConstraintN2}, case \eqref{ex:ConstraintN2(i)}, a spinor component can vanish also for $\vsigma \in \RC_2$.
\end{example}

Note that the regular set breaks up into disjoint components. 
\begin{proposition} \label{prop:RN}
Let $N\ge 1$. Then $\RC_N$ is the union of disjoint open convex polytopes. Also, $[-1,1]^N\setminus\RC_N$ is the union of a finite number of hyperplanes intersected with $[-1,1]^N$.
\end{proposition}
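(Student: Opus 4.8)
The columns $\vOmega\ve_\valpha$ are precisely the $2^N$ vertices of the hypercube $\{-1,1\}^N\subset\RR^N$; write $V$ for this vertex set, so that a probability vector $\vomega$ solving $\vOmega\vomega=\vsigma$ is nothing but an expression of $\vsigma$ as a convex combination of vertices, its support recording which vertices occur. First I would recast regularity geometrically, using that a point lies in the relative interior of the convex hull of a finite set iff it has a convex representation with all weights strictly positive. Thus $\vsigma$ fails to be regular exactly when it admits a representation whose active vertices span a proper affine subspace, i.e.\ lie on a common hyperplane, giving
\[
[-1,1]^N\setminus\RC_N=\bigcup_{H}\mathrm{conv}\,(V\cap H),
\]
the union over all affine hyperplanes $H\subset\RR^N$.

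Since $V$ is finite, only finitely many traces $V\cap H$ occur, and only the affinely degenerate ones (spanning dimension $\le N-1$) contribute. This already exhibits the non-regular set as a finite union $\mathrm{conv}\,(T_1)\cup\dots\cup\mathrm{conv}\,(T_r)$ of convex polytopes, each lying in the hyperplane $H_i:=\Aff(T_i)$. In particular it is closed and contained in finitely many hyperplanes, so $\RC_N$ is relatively open in $[-1,1]^N$ and splits into open connected components; this is the easy half of the second assertion.

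The core is to upgrade these polytopes to \emph{full} slices $H_i\cap[-1,1]^N$, so that their removal partitions the cube cleanly. Using the coordinate permutations and sign flips under which the cube is invariant, I would reduce to a normal that is the all-ones vector (treating the inactive coordinates, along which $H$ is a cylinder, separately); there $\va\cdot\vv=N-2|S|$ with $S$ the set of $-1$ entries of $\vv$. The key lemma is then a parity observation: a vertex of the polytope $H\cap[-1,1]^N$ can have no coordinate strictly inside $(-1,1)$, because a coordinate left free by the facet constraints would be pinned by $\va\cdot x=b$ to an odd integer. Hence $\mathrm{conv}\,(V\cap H)=H\cap[-1,1]^N$ is a full slice, entirely non-regular.

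It remains to see that every degenerate trace $V\cap H$ lies on such a full-slice hyperplane; granting this, the non-regular set is a finite union of full hyperplane slices, and $\RC_N$ is the union of the open chambers that this finite arrangement cuts out inside the open cube, each an intersection of finitely many open half-spaces with the cube and hence an open convex polytope. The inclusion to be verified is that the direction space $\Aff(T_i)-\Aff(T_i)$—spanned by vertex differences, hence by $\{-1,0,1\}$-vectors—has in its orthogonal complement a vector again giving a full-slice hyperplane through $\Aff(T_i)$. I expect this to be the main obstacle: one must control which normals can carry a maximal affinely degenerate family of cube vertices and establish the full-slice property uniformly in $N$. The cases $N=1,2$ are reassuring, as there the degenerate traces sit only on the facets $\sigma_i=\pm1$ and the diagonals $\sigma_i=\pm\sigma_j$, whose slices are visibly full and recover the description of $\RC_2$ above; the general bookkeeping is where the real work lies.
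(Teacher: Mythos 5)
Your opening reduction is correct and is in essence the same as the paper's: writing $V=\{-1,1\}^N$ for the columns of $\vOmega$, the non‑regular set is exactly $\bigcup_T\mathrm{conv}(T)$ over vertex sets $T$ with $\Aff(T)\neq\RR^N$, hence a finite union of polytopes of dimension at most $N-1$, so $\RC_N$ is relatively open; and your parity lemma is also correct as far as it goes: for a hyperplane whose normal has entries in $\{0,\pm1\}$, every vertex of $H\cap[-1,1]^N$ is a cube vertex, so those slices are full. The genuine gap is the step you yourself flag: that every degenerate trace lies on a \emph{full-slice} hyperplane. This is precisely the step the paper's own proof passes over silently — its displayed identity $\RC_N=[-1,1]^N\setminus\bigcup_S\Aff(S)$ has one easy inclusion (every degenerate hull lies in some $\Aff(S)$), while the reverse inclusion \emph{is} your full-slice claim, asserted there without argument.

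The gap cannot be closed, because the claim is false for $N\ge4$; your reduction to all-ones normals is where it breaks. Take
\[
S=\{(1,-1,-1,1),\,(-1,1,-1,1),\,(-1,-1,1,1),\,(1,1,1,-1)\},\qquad H=\{\vsigma:\sigma_1+\sigma_2+\sigma_3+2\sigma_4=1\}.
\]
One checks that $S=V\cap H$ and that $S$ is affinely independent, so $S$ is maximally irregular with $\Aff(S)=H$; being affinely independent, $S$ lies on no hyperplane other than $H$, in particular on none with $\{0,\pm1\}$ normal. Moreover $H$ carries regular points. Indeed, if $\vsigma\in H$ is non-regular it lies in $\mathrm{conv}(T)$ for some degenerate $T$; either $\Aff(T)\subseteq H$, forcing $T\subseteq V\cap H=S$ and $\vsigma\in\mathrm{conv}(S)$, or $\Aff(T)\not\subseteq H$, forcing $\vsigma\in\Aff(T)\cap H$, an affine subspace of dimension at most $N-2=2$. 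So the non-regular points of $H$ lie in $\mathrm{conv}(S)$ plus finitely many planes of dimension $\le 2$. But $H\cap[-1,1]^4$ is strictly larger than $\mathrm{conv}(S)$: the point $(1,1,-1,0)\in H$ has, as its only convex representation by cube vertices, $\tfrac12(1,1,-1,1)+\tfrac12(1,1,-1,-1)$, whose support is disjoint from $S$, so $(1,1,-1,0)\notin\mathrm{conv}(S)$; consequently the points $(1,1,-1,0)+\epsilon(-1,-1,1,\tfrac12)\in H\cap(-1,1)^4$ (small $\epsilon>0$) fill a relatively open $3$-dimensional subset of $H$ avoiding the closed set $\mathrm{conv}(S)$, and such a set cannot be covered by finitely many $2$-dimensional planes. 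Hence regular points exist on $\Aff(S)$, refuting your needed claim and the paper's identity. In fact the Proposition's second assertion itself fails for $N\ge4$: the non-regular set contains the $3$-simplex $\mathrm{conv}(S)\subset H$, and by a Baire argument any finite union of hyperplane slices covering a $3$-dimensional subset of $H$ must include the slice of $H$ itself — which we just saw contains regular points; the first assertion thereby loses its proof as well, since the chamber decomposition was derived from that identity. What survives of both arguments (and is what the rest of the paper actually uses) is your correct first step: $[-1,1]^N\setminus\RC_N$ is a finite union of convex polytopes of dimension $\le N-1$, hence closed and of measure zero. For $N\le3$, where every vertex-spanned hyperplane does have a $\{0,\pm1\}$ normal, your argument and the paper's are both complete.
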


See \cref{fig:RegSetEx} for a sketch of the regular sets $\RC_2$ and $\RC_3$.
The importance of regular $\vsigma$'s is explained by the following theorem.

 \begin{figure}
     \begin{subfigure}[b]{0.28\textwidth}
          \centering
          \resizebox{\linewidth}{!}{\begin{tikzpicture}[every node/.style={inner sep=0, outer sep=0pt}, line width=.7pt]
        
    \coordinate (a) at (-2,-2);
    \coordinate (b) at (2,-2);
    \coordinate (c) at (2,2);
    \coordinate (d) at (-2,2);
    
    \draw (a) -- (b) -- (c) -- (d) -- (a);
    \draw (a) -- (c);
    \draw (b) -- (d);


\end{tikzpicture}}  
     \end{subfigure}
     \begin{subfigure}[b]{0.33\textwidth}
          \centering
          \resizebox{\linewidth}{!}{\tdplotsetmaincoords{80}{120}
\usetikzlibrary{3d}

\begin{tikzpicture}[scale=2,tdplot_main_coords]
\tikzstyle{grid}=[thin,color=red,tdplot_rotated_coords]

  
  \draw [thick] (0,0,0) -- (2,0,0);
  \draw [thick] (0,0,0) -- (0,2,0);
  \draw [thick] (2,2,0) -- (2,0,0);
  \draw [thick] (2,2,0) -- (0,2,0);
  
  \draw [thick] (0,0,2) -- (2,0,2);
  \draw [thick] (0,0,2) -- (0,2,2);
  \draw [thick] (2,2,2) -- (2,0,2);
  \draw [thick] (2,2,2) -- (0,2,2);
  
  \draw [thick] (0,0,2) -- (0,0,0);
  \draw [thick] (2,0,2) -- (2,0,0);
  \draw [thick] (0,2,2) -- (0,2,0);
  \draw [thick] (2,2,2) -- (2,2,0);

  \draw [dashed] (0,0,0) -- (2,2,2);
  \draw [dashed] (2,0,0) -- (0,2,2);
  \draw [dashed] (0,2,0) -- (2,0,2);
  \draw [dashed] (0,0,2) -- (2,2,0);
  
  \draw [dashed] (0,0,0) -- (2,0,2);
  \draw [dashed] (2,0,0) -- (0,0,2);
  
  \draw [dashed] (0,0,0) -- (0,2,2);
  \draw [dashed] (0,2,0) -- (0,0,2);
  
  \draw [dashed] (0,0,0) -- (2,2,0);
  \draw [dashed] (0,2,0) -- (2,0,0);
  
  \draw (2,0,0) -- (2,2,2);
  \draw (2,2,0) -- (2,0,2);
  
  \draw (0,2,0) -- (2,2,2);
  \draw (2,2,0) -- (0,2,2);
  
  \draw (0,0,2) -- (2,2,2);
  \draw (2,0,2) -- (0,2,2);
  
  \draw [dashed] (0,1,1) -- (2,1,1);
  \draw [dashed] (1,0,1) -- (1,2,1);
  \draw [dashed] (1,1,0) -- (1,1,2);
  
  



\end{tikzpicture}}  
     \end{subfigure}
     \caption{Left: The set $\RC_2\subset(-1,1)^2$ is the union of 4 congruent open triangles. Right: The set $\RC_3\subset(-1,1)^3$ is the union of 24 congruent open tetrahedra.}
     \label{fig:RegSetEx}
 \end{figure}
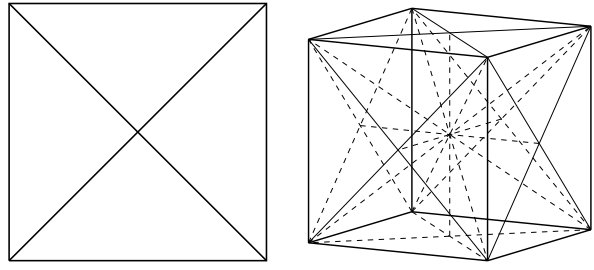

\begin{theorem}[Hohenberg--Kohn]\label{hkthm}
Fix $\vsigma\in[-1,1]^N$ and $\vxi\in\RR^M$.
Let $\vv^{(1)},\vv^{(2)}\in \RR^{N}$ and $\vj^{(1)},\vj^{(2)}\in\RR^M$, 
and suppose that $\vpsi^{(1)},\vpsi^{(2)}\in Q_0$ are ground states of $\vH(\vv^{(1)},\vj^{(1)})$ and $\vH(\vv^{(2)},\vj^{(2)})$ 
respectively.\\
If $\vsigma=\vsigma_{\vpsi^{(1)}}=\vsigma_{\vpsi^{(2)}}$ and $\vxi=\vxi_{\vpsi^{(1)}}=\vxi_{\vpsi^{(2)}}$, then $\vpsi^{(1)}$ is also a ground state of $\vH(\vv^{(2)},\vj^{(2)})$ and $\vpsi^{(2)}$ is also a ground state of $\vH(\vv^{(1)},\vj^{(1)})$.
Furthermore, $\vj=\vj^{(1)}=\vj^{(2)}$ and 
\begin{enumerate}[(i)]
\item\label{hkthm:item:regular} (Regular case) If $\vsigma$ is regular, then $\vv^{(1)}=\vv^{(2)}$.
\item\label{hkthm:item:irregular} (Irregular case) Otherwise, for all $\valpha\in I^{(1)}\cup I^{(2)}$ there holds
\begin{equation*}
\sum_{n=1}^N (\vsigma_z^n)_{\valpha\valpha}(v^{(1)}_n-v^{(2)}_n) = E(\vv^{(1)},\vj) -  E(\vv^{(2)},\vj),
\end{equation*}
where $I^{(i)}$ denotes the set of spinor indices $\valpha$ for which $(\psi^{(i)})^\valpha\not\equiv 0$.
\end{enumerate}
\end{theorem}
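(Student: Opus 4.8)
The plan is to run the classical Hohenberg--Kohn double-inequality argument, exploiting that the external potentials couple \emph{linearly} to their conjugate density variables $\vsigma_\vpsi$ and $\vxi_\vpsi$, and then to extract the spinor-resolved information from the resulting eigenvalue equations. First I record that for any normalized $\vphi\in Q_0$ one has $\dua{\vphi}{\vH(\vv,\vj)\vphi}=\dua{\vphi}{\vH_0\vphi}+\vv\cdot\vsigma_\vphi+\vj\cdot\vxi_\vphi$. Using $\vpsi^{(2)}$ as a trial state in the variational principle for $\vH(\vv^{(1)},\vj^{(1)})$ and invoking $\vsigma_{\vpsi^{(2)}}=\vsigma$, $\vxi_{\vpsi^{(2)}}=\vxi$ yields
\[
E(\vv^{(1)},\vj^{(1)})\le E(\vv^{(2)},\vj^{(2)})+(\vv^{(1)}-\vv^{(2)})\cdot\vsigma+(\vj^{(1)}-\vj^{(2)})\cdot\vxi,
\]
together with the symmetric inequality obtained by swapping the labels $(1)$ and $(2)$. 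Adding the two gives $0\le0$, so both must be equalities; equality in the variational principle forces $\vpsi^{(2)}$ to be a ground state of $\vH(\vv^{(1)},\vj^{(1)})$ and, symmetrically, $\vpsi^{(1)}$ a ground state of $\vH(\vv^{(2)},\vj^{(2)})$. This is the first assertion.

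Since $\vpsi^{(1)}$ is now an honest eigenfunction of both Hamiltonians, I subtract the two eigenvalue equations. Writing $\Delta\vv=\vv^{(1)}-\vv^{(2)}$, $\Delta\vj=\vj^{(1)}-\vj^{(2)}$ and $\Delta E=E(\vv^{(1)},\vj^{(1)})-E(\vv^{(2)},\vj^{(2)})$, this gives $(\Delta\vv\cdot\vsigma_z+\Delta\vj\cdot\vx-\Delta E)\vpsi^{(1)}=0$ in $\HC$ (the multiplication by $\vx$ is legitimate because $\vpsi^{(1)}\in Q_0$ forces $\vx\vpsi^{(1)}\in\HC$). As $\Delta\vv\cdot\vsigma_z$ is diagonal in the spin basis, this decouples into one scalar relation per component, namely $\bigl(\sum_n(\sigma_z^n)_{\valpha\valpha}\Delta v_n+\Delta\vj\cdot\vx-\Delta E\bigr)(\psi^{(1)})^\valpha(\vx)=0$ a.e. The bracket is an affine function of $\vx$; if $\Delta\vj\neq\bm0$ its zero set is a Lebesgue-null hyperplane in $\RR^M$, forcing $(\psi^{(1)})^\valpha\equiv0$. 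Since $\vpsi^{(1)}\neq0$ has at least one nonvanishing component we conclude $\Delta\vj=\bm0$, i.e.\ $\vj=\vj^{(1)}=\vj^{(2)}$. The surviving relation then reads $\sum_n(\sigma_z^n)_{\valpha\valpha}\Delta v_n=\Delta E$ for every $\valpha\in I^{(1)}$, and repeating the step with $\vpsi^{(2)}$ extends it to $\valpha\in I^{(2)}$; this is precisely the irregular-case identity.

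For the regular case I read the identity as $\Delta\vv\cdot(\vOmega\ve_\valpha)=\Delta E$ for all $\valpha\in I^{(1)}$, so these points all lie on the affine hyperplane $\{\vy\in\RR^N:\Delta\vv\cdot\vy=\Delta E\}$ whenever $\Delta\vv\neq\bm0$. To rule this out, set $\omega_\valpha=\|(\psi^{(1)})^\valpha\|^2$; this is a probability vector indexed by $\valpha\in\{+,-\}^N$ with support exactly $I^{(1)}$ and with $\vOmega\vomega=\vsigma_{\vpsi^{(1)}}=\vsigma$. Regularity of $\vsigma$ then gives $\Aff\{\vOmega\ve_\valpha:\valpha\in I^{(1)}\}=\RR^N$, which is incompatible with all these vectors lying on a proper hyperplane. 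Hence $\Delta\vv=\bm0$, i.e.\ $\vv^{(1)}=\vv^{(2)}$.

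The only delicate points are the elementary domain and measure-theoretic facts used along the way: that the first step makes $\vpsi^{(1)}$ an eigenfunction of each Hamiltonian (so the two eigenvalue equations may be subtracted in $\HC$ without invoking unique continuation), and that a nonzero $L^2$-function cannot be supported on a null hyperplane. Everything else is a direct convex/variational computation, and the regularity hypothesis enters only at the final contradiction, exactly delimiting the regular from the irregular alternative.
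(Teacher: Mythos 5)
Your proof is correct, and in its two middle steps it takes a genuinely more elementary route than the paper. The opening double-inequality argument and the subtraction of the two eigenvalue equations are the same as in the paper's proof, as is the choice $\omega_\valpha=\|(\psi^{(1)})^\valpha\|^2$ when regularity is invoked. The differences are these. To pass from the componentwise identity $\bigl(\sum_n(\sigma_z^n)_{\valpha\valpha}\Delta v_n+\Delta\vj\cdot\vx-\Delta E\bigr)(\psi^{(i)})^\valpha=0$ (with $\Delta\vv=\vv^{(1)}-\vv^{(2)}$, $\Delta\vj=\vj^{(1)}-\vj^{(2)}$, $\Delta E=E^{(1)}-E^{(2)}$) to $\Delta\vj=\bm{0}$ and the spinor-resolved constants, the paper invokes the strong unique continuation property for systems (\cref{ucpthm}) to guarantee that the restricted spinor $\wt{\vpsi}^{(i)}(\vx)$ is nonzero for a.e.\ $\vx$, and then argues that a function taking finitely many values cannot agree a.e.\ with the nonconstant continuous function $-\Delta\vj\cdot\vx$; you instead observe, component by component, that a nonzero $L^2$ function cannot be supported in the Lebesgue-null zero set of a nonconstant affine function — no PDE input whatsoever. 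In the regular case the paper assembles the linear system $\vP\vOmega^\top(\vv^{(1)}-\vv^{(2)})=(E^{(1)}-E^{(2)})\vP\ve$ and runs a Fredholm-alternative case distinction ($E^{(1)}=E^{(2)}$ versus $E^{(1)}\neq E^{(2)}$), whereas you conclude in one stroke: the identities place all points $\vOmega\ve_\valpha$, $\valpha\in I^{(1)}$, on the single affine hyperplane $\{\vy:\Delta\vv\cdot\vy=\Delta E\}$, which is incompatible with $\Aff\{\vOmega\ve_\valpha:\valpha\in I^{(1)}\}=\RR^N$ unless $\Delta\vv=\bm{0}$. What each route buys: yours is shorter and self-contained, showing that the heaviest external ingredient of the paper's proof (Garrigue's UCP theorem) is dispensable for this particular statement; the paper's route yields as a by-product the stronger pointwise fact that ground-state spinors are a.e.\ nonvanishing as vectors, and UCP remains genuinely needed elsewhere in the paper (e.g.\ for the positivity argument in \cref{uniquevrep}), so the authors lose nothing by routing this proof through it.
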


The regularity property of the magnetization vector $\vsigma$ can be seen in analogy to the condition on the zeros of the wave function in finite-lattice DFT \cite[Cor.~10]{penz2021-Graph-DFT}. 
If indeed $\vv^{(1)}\neq \vv^{(2)}$ in the irregular case, then by a similar argument as in ~\cite[Th.~9]{penz2023geometry} any convex combination of $\vv^{(1)}, \vv^{(2)}$ also has the same $\vsigma$ as the ground-state magnetization.

Notice that unlike the Hohenberg--Kohn theorem for the electronic Hamiltonian, the potentials are completely determined in the regular case, i.e., not only up to an additive constant.
The theorem itself is nonconstructive in nature, more precisely it only states the injectivity of the ``potential to ground-state density map'' $(\vv,\vj)\mapsto (\vsigma,\vxi)$
and \emph{not} its surjectivity. Whenever $(\vsigma,\vxi)\in[-1,1]^N\times \RR^M$ corresponds to a ground state of $\vH(\vv,\vj)$ for some $(\vv,\vj)\in\RR^N\times\RR^M$,
then we say that $(\vsigma,\vxi)$ is \emph{$v$-representable}.
Since a ground state can either be an element of the Hilbert space (pure state) or have the form of a statistical mixture expressed by a density matrix acting on the Hilbert space (ensemble state), we respectively speak about \emph{pure-state $v$-representability} and \emph{ensemble $v$-representability}.
These are not to be confused with the $N$-representability concept below.

\subsection{Levy--Lieb functional}\label{llsec}
The preceding discussion suggests that we consider functionals of the ``density'' pair $(\vsigma,\vxi)$.
The objective is then to formulate the ground-state problem in terms of $(\vsigma,\vxi)$ only. Following the standard DFT recipe, as a first step we minimize the internal energy $\vpsi \mapsto \dua{\vpsi}{\vH_0\vpsi}$ under the constraints $\vsigma_\vpsi = \vsigma$ and $ \vxi_\vpsi = \vxi$.
This gives rise to the Levy--Lieb functional, also commonly called the pure-state constrained-search functional ~\cite{Levy79,Lieb1983}. In order for this functional to be well-defined, we must first show that to any $(\vsigma,\vxi)$
there corresponds at least one wavefunction. If this is the case, we call $(\vsigma,\vxi)$ \emph{$N$-representable}. We caution the Reader that this is standard terminology in DFT (and its variants), where $N$ has nothing to do with the number of two-level systems considered here.
Every $\vpsi\in Q_0$ with $\|\vpsi\|=1$ has $(\vsigma_\vpsi,\vxi_\vpsi)\in[-1,1]^N\times\RR^M$.
Fortunately, it is simple to show also the converse, that every $(\vsigma,\vxi)\in[-1,1]^N\times\RR^M$ is $N$-representable.

\begin{theorem}[$N$-representability]\label{repthm}
For every $(\vsigma,\vxi)\in[-1,1]^N\times\RR^M$ there exists $\vpsi\in Q_0$ such that $\|\vpsi\|=1$, $\vsigma_\vpsi=\vsigma$ and $\vxi_\vpsi=\vxi$.
\end{theorem}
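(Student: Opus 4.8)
The plan is to decouple the two constraints: the magnetization $\vsigma$ is entirely governed by the spinor weights $\|\psi^\valpha\|^2$, whereas the displacement $\vxi$ depends only on the total spatial density, so the two can be fixed independently.

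First I would rephrase the magnetization constraint as a finite-dimensional convex-geometry problem. Since each $\sigma_z^n$ is diagonal, setting $\omega_\valpha:=\|\psi^\valpha\|^2$ yields $\vsigma_\vpsi=\vOmega\vomega$, and for a normalized $\vpsi$ the weight vector ranges over the simplex $\{\vomega:\omega_\valpha\ge 0,\ \sum_\valpha\omega_\valpha=1\}$. Hence the attainable magnetizations form the convex hull of the columns $\vOmega\ve_\valpha$. The key observation is that these columns are precisely the $2^N$ vertices of the hypercube: $(\vOmega\ve_\valpha)_n=(\sigma_z^n)_{\valpha\valpha}$ equals $+1$ if $\alpha_n=+$ and $-1$ if $\alpha_n=-$, so $\valpha\mapsto\vOmega\ve_\valpha$ is a bijection onto $\{-1,1\}^N$. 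Their convex hull is all of $[-1,1]^N$, so for the prescribed $\vsigma$ one can pick convex weights $\vomega$ with $\vOmega\vomega=\vsigma$.

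Next I would realize these weights with an explicit wavefunction that also fixes the displacement. Because $\vxi_\vpsi=\int_{\RR^M}\vx\sum_\valpha|\psi^\valpha(\vx)|^2\,\dd\vx$ sees only the total density $\sum_\valpha|\psi^\valpha|^2$, I put the same normalized spatial profile $g$ in every spinor slot, $\psi^\valpha:=\sqrt{\omega_\valpha}\,g$, and take $g$ to be the shifted Gaussian $g(\vx)=\pi^{-M/4}\exp(-\tfrac{1}{2}|\vx-\vxi|^2)$, which satisfies $\|g\|=1$ and $\int_{\RR^M}\vx|g(\vx)|^2\,\dd\vx=\vxi$. Then $\|\vpsi\|^2=\sum_\valpha\omega_\valpha=1$, the identity $\|\psi^\valpha\|^2=\omega_\valpha$ gives $\vsigma_\vpsi=\vOmega\vomega=\vsigma$, and since $\sum_\valpha|\psi^\valpha|^2=|g|^2$ one obtains $\vxi_\vpsi=\vxi$. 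Finally $g$ is Schwartz, so $\int|\grad g|^2<\infty$ and $\int\vV|g|^2<\infty$, whence $\vpsi\in Q(-\lapl+\vV)=Q_0$.

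I expect no serious obstacle here; the only point requiring a little care is the convex-geometry identification that the columns of $\vOmega$ exhaust all vertices of $[-1,1]^N$, which is exactly what makes every $\vsigma\in[-1,1]^N$ representable. Once the magnetization and displacement constraints are seen to decouple — the former into the choice of the discrete weights $\vomega$, the latter into the choice of a single spatial profile — the construction is fully explicit and the membership $\vpsi\in Q_0$ is immediate for the Gaussian.
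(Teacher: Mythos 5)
Your proposal is correct and follows essentially the same route as the paper: both construct a trial state of the form (shifted normalized Gaussian) $\times$ (constant spinor), reduce the magnetization constraint to a finite-dimensional convex-geometry problem, and use the fact that the columns of $\vOmega$ (resp.\ the paper's matrix $\mathbf{A}$ augmented by the normalization row) are precisely the vertices of the hypercube, whose convex hull is $[-1,1]^N$. The only differences are cosmetic — your simplex/convex-hull phrasing versus the paper's cone formulation, and a different Gaussian width.
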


We introduce the constraint manifold that collects all states that map to a given $(\vsigma,\vxi)\in[-1,1]^N\times\RR^M$,
\begin{equation*}
\MC_{\vsigma,\vxi}=\left\{ \vpsi\in Q_0 : \|\vpsi\|=1, \; \vsigma_\vpsi=\vsigma, \; \vxi_\psi=\vxi \right\}.
\end{equation*}
Using the preceding theorem, we may write for any $(\vv,\vj)\in\RR^N\times\RR^M$ that
\begin{equation}\label{fllderiv}
\begin{aligned}
E(\vv,\vj)&=\inf_{\substack{\vpsi\in Q_0 \\ \|\vpsi\|=1}} \dua{\vpsi}{\vH(\vv,\vj)\vpsi}\\
&=\inf_{(\vsigma,\vxi)\in[-1,1]^N\times\RR^M} 
\Bigg[ \inf_{\vpsi\in\MC_{\vsigma,\vxi}} \dua{\vpsi}{\vH(\vv,\vj)\vpsi} \Bigg]\\
&=\inf_{(\vsigma,\vxi)\in[-1,1]^N\times\RR^M} 
\Bigg[ \inf_{\vpsi\in\MC_{\vsigma,\vxi}} \dua{\vpsi}{\vH_0\vpsi}  + \dua{\vpsi}{\vv\cdot\vsigma_z\vpsi} + \dua{\vpsi}{\vj\cdot\vx\vpsi}\Bigg]\\
&= \inf_{(\vsigma,\vxi)\in[-1,1]^N\times\RR^M} 
\Big[ F_\LL(\vsigma,\vxi) + \vv\cdot \vsigma + \vj\cdot \vxi \Big],
\end{aligned}
\end{equation}
where we used \cref{hvjdef} and we defined the \emph{Levy--Lieb (universal density) functional} $F_\LL : [-1,1]^N\times \RR^M \to \RR$ via
\begin{equation*}
F_\LL(\vsigma,\vxi) = \inf_{\vpsi\in\MC_{\vsigma,\vxi}} \dua{\vpsi}{\vH_0\vpsi}
\end{equation*}
for every $(\vsigma,\vxi) \in [-1,1]^N\times \RR^M$.
Clearly, $|F_\LL(\vsigma,\vxi)|<+\infty$.
An immediate question is whether the ``$\inf$'' is attained in the definition of $F_\LL$.
\begin{theorem}[Existence of an optimizer for $F_\LL$] \label{fllexist}
For every $(\vsigma,\vxi)\in [-1,1]^N\times \RR^M$
there exists a $\vpsi\in\MC_{\vsigma,\vxi}$ such that $F_\LL(\vsigma,\vxi) = \dua{\vpsi}{\vH_0\vpsi}$.
\end{theorem}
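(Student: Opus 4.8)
The plan is to run the direct method of the calculus of variations. First I would take a minimizing sequence $(\vpsi_k)\subset\MC_{\vsigma,\vxi}$, so that $\dua{\vpsi_k}{\vH_0\vpsi_k}\to F_\LL(\vsigma,\vxi)$, which is finite as already noted. Since $\vH_0$ is bounded below and the energies along the sequence are bounded, the $Q_0$-norm $\sqrt{\|\vpsi_k\|^2+\dua{\vpsi_k}{(\vH_0+C)\vpsi_k}}$ stays bounded; hence $(\vpsi_k)$ is a bounded sequence in the Hilbert space $Q_0$, and by weak compactness I extract a subsequence (not relabelled) with $\vpsi_k\wconv\vpsi$ in $Q_0$.

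The crucial ingredient, and the part I expect to be most delicate, is the compact embedding $Q_0\hookrightarrow\HC$ (promised earlier in the excerpt). This holds because the confining potential $\vV(\vx)=(\vx+\tfrac12\vLambda\vsigma_z)^2$ grows like $|\vx|^2$ at infinity, exactly as for the harmonic oscillator whose resolvent is compact. Concretely, a uniform bound on $\dua{\vpsi_k}{(-\lapl+\vV)\vpsi_k}$ controls simultaneously $\|\grad\vpsi_k\|^2$ and the second moment $\int_{\RR^M}|\vx|^2|\vpsi_k|^2$, up to the bounded shift coming from the $\vLambda\vsigma_z$ cross term. A Rellich-type argument then upgrades weak convergence to strong $\HC$-convergence: local compactness on balls comes from the gradient bound, while the $|\vx|^2$-moment bound provides uniform tightness that rules out loss of mass at infinity. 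Thus $\vpsi_k\to\vpsi$ strongly in $\HC$.

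Next I would verify that the limit lies in $\MC_{\vsigma,\vxi}$. Strong $\HC$-convergence gives $\|\vpsi\|=1$, and since $\vsigma_z$ is a bounded multiplication operator, $\vsigma_\vpsi=\lim_k\vsigma_{\vpsi_k}=\vsigma$. The displacement constraint is the one genuinely technical point, since $\vx$ is unbounded: to pass $\vxi_{\vpsi_k}=\int_{\RR^M}\vx|\vpsi_k|^2\to\vxi_\vpsi$ I would split the integral at $|\vx|=R$, estimate the tail by $\frac1R\int_{|\vx|>R}|\vx|^2|\vpsi_k|^2\le C/R$ uniformly in $k$ via the second-moment bound, and handle the bulk $\{|\vx|\le R\}$ by strong $\HC$-convergence (where $\vx$ is bounded), letting $R\to\infty$. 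This yields $\vxi_\vpsi=\vxi$, so $\vpsi\in\MC_{\vsigma,\vxi}$.

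Finally, I would establish weak lower semicontinuity of the energy. Writing $\vH_0=(-\lapl+\vV)-\vt\cdot\vsigma_x-\tfrac14\vsigma_z\cdot(\vLambda^\top\vLambda\vsigma_z)$, the leading form $\dua{\vpsi}{(-\lapl+\vV)\vpsi}$ is (a shift of) the square of the $Q_0$-norm, hence convex and strongly continuous, and therefore weakly lower semicontinuous on $Q_0$; the two remaining terms involve only bounded multiplication operators and are continuous under the strong $\HC$-convergence already obtained. Combining these gives $\dua{\vpsi}{\vH_0\vpsi}\le\liminf_k\dua{\vpsi_k}{\vH_0\vpsi_k}=F_\LL(\vsigma,\vxi)$. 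Since $\vpsi\in\MC_{\vsigma,\vxi}$ also forces $\dua{\vpsi}{\vH_0\vpsi}\ge F_\LL(\vsigma,\vxi)$, equality holds and $\vpsi$ is the desired optimizer. The main obstacle throughout is the compactness/tightness coming from the confining potential, which is what makes both the strong-convergence step and the displacement constraint work; the lower-semicontinuity part is then routine.
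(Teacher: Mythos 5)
Your proposal is correct and takes essentially the same route as the paper's proof: minimizing sequence bounded in $Q_0$, weak compactness plus the compact embedding $Q_0\hookrightarrow\HC$ furnished by the trapping potential, strong $\HC$-convergence to recover the constraints (with the uniform second-moment bound controlling the displacement), and a convexity/Mazur-type weak lower semicontinuity argument for the energy. The only cosmetic differences are that the paper verifies $\vxi_{\vpsi_j}\to\vxi_\vpsi$ by a direct Cauchy--Schwarz estimate rather than a cut-off at $|\vx|=R$, and applies Mazur's theorem to the shifted form $\dua{\cdot}{(\vH_0+C)\,\cdot}$ as a whole rather than splitting off $-\lapl+\vV$ from the bounded terms.
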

The proof of this result is somewhat different from the analogous one in standard DFT~\cite{Lieb1983} or in generalization to paramagnetic current-DFT~\cite{Laestadius2014,Kvaal2021}: there, one exploits the density constraint on the
wavefunction to obtain the tightness of the optimizing sequence. In our case, the trapping nature of $\vH_0$ provides compactness.
Using the preceding result, we may employ trial state constructions to derive useful properties of $F_\LL$. 

\begin{theorem}[Properties of $F_\LL$]\label{fllprop}
For every $(\vsigma,\vxi)\in[-1,1]^N\times\RR^M$ the following hold true.
\begin{enumerate}[(i)]
\item\label{fllprop:item:displacement-rule} (Displacement rule) For any $\vzeta\in\RR^M$ the formula
$$
F_\LL(\vsigma,\vxi+\vzeta)=F_\LL(\vsigma,\vxi) + 2\vzeta\cdot\vxi + \vzeta \cdot \vLambda\vsigma  +  |\vzeta|^2 
$$
holds. In particular,
$$
F_\LL(\vsigma,\vxi)=F_\LL(\vsigma,\bm{0}) + \vxi \cdot \vLambda\vsigma  +  |\vxi|^2.
$$
\item\label{fllprop:item:real} There is a real-valued optimizer of $F_\LL(\vsigma,\vxi)$.
\item\label{fllprop:item:virial} (Virial relation) For any optimizer $\vpsi$ of $F_\LL(\vsigma,\bm{0})$ the formula
$$
\|\grad\vpsi\|^2 - \|\vx\vpsi\|^2  = \frac{1}{2} \dua{\vpsi}{\vx\cdot \vLambda\vsigma_z\vpsi}
$$
holds true.
\item\label{fllprop:item:coupling} For a real-valued optimizer $\vpsi$ of $F_\LL(\vsigma,\vxi)$, the formula
\begin{align*}
 \dua{\vpsi}{\vx\cdot \vLambda\vsigma_z\vpsi} =&\; \vxi\cdot\vLambda\vsigma - \frac{1}{2} \|\vLambda\vsigma_z\vpsi\|^2 + \frac{1}{2}|\vLambda\vsigma|^2 - \dua{\vt\cdot\vsigma_x\vpsi}{\grad\cdot\vLambda(\vsigma_z-\vsigma)\vpsi}
\end{align*}
holds true.
\end{enumerate}
\end{theorem}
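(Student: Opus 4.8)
\emph{Strategy.} The plan is to obtain each of the four identities from first-order stationarity of the energy along a one-parameter family of trial states tailored to stay inside the constraint manifold $\MC_{\vsigma,\vxi}$ and acting on a minimizer $\vpsi$ (which exists by \cref{fllexist}). Since a minimizer satisfies a Schrödinger-type Euler--Lagrange equation $\vH(\vv,\vj)\vpsi=\mu\vpsi$ with real multipliers, elliptic regularity lets me treat $\vpsi$ as a smooth element of the operator domain, so that differentiating expectation values and integrating by parts below are justified and $\vx\vpsi,\grad\vpsi\in\HC$. For item (i) the family is the spatial translation $T_\vzeta\vpsi(\vx)=\vpsi(\vx-\vzeta)$: it is unitary, commutes with $\vsigma_z$, and carries $\MC_{\vsigma,\vxi}$ bijectively onto $\MC_{\vsigma,\vxi+\vzeta}$. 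A short conjugation computation gives $T_\vzeta^\ast\vH_0T_\vzeta=\vH_0+2\vzeta\cdot\vx+\vzeta\cdot\vLambda\vsigma_z+|\vzeta|^2$, whose expectation in any $\vpsi\in\MC_{\vsigma,\vxi}$ adds the constant $2\vzeta\cdot\vxi+\vzeta\cdot\vLambda\vsigma+|\vzeta|^2$; taking the infimum over $\MC_{\vsigma,\vxi}$ yields the displacement rule, and the choice $\vxi=\bm 0$, $\vzeta=\vxi$ gives its specialization.

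For item (ii) I would use that $\vH_0$, $\vsigma_z$ and $\vx$ are all real, so complex conjugation preserves $\MC_{\vsigma,\vxi}$ and the energy; hence $\ol\vpsi$ is a minimizer together with $\vpsi$, and in the splitting $\vpsi=\vphi_1+\i\vphi_2$ into real and imaginary parts the energy and all three constraints decompose additively (cross terms cancel because $\vsigma_z$ is real-symmetric and $\vx$ real). Combined with the Euler--Lagrange equation this places $\vphi_1,\vphi_2$ in a single eigenspace of the real operator $\vH(\vv,\vj)$; when the corresponding eigenvalue is simple, stripping the global phase already exhibits a real minimizer with unchanged $(\vsigma,\vxi)$ and energy, and in the degenerate case one argues, using the reality of the eigenspace and the additive constraint splitting, that a real representative with the prescribed $(\vsigma,\vxi)$ can be selected.

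Item (iii) follows from the dilation $\vpsi_\lambda(\vx)=\lambda^{M/2}\vpsi(\lambda\vx)$, which is unitary, leaves $\vsigma_\vpsi$ invariant and rescales $\vxi_\vpsi$ by $\lambda^{-1}$, hence preserves the constraint $\vxi=\bm 0$ for every $\lambda>0$. The four terms of $\dua{\vpsi_\lambda}{\vH_0\vpsi_\lambda}$ scale as $\lambda^2\|\grad\vpsi\|^2$, $\lambda^{-2}\|\vx\vpsi\|^2$, $\lambda^{-1}\dua{\vpsi}{\vx\cdot\vLambda\vsigma_z\vpsi}$ and $\lambda^0$; vanishing of the $\lambda$-derivative at $\lambda=1$ is precisely the stated virial relation.

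Item (iv) is the crux. I would take the spin-dependent translation $U_\epsilon=\exp(-\epsilon A)$ with $A=\grad\cdot\vLambda(\vsigma_z-\vsigma)$. Because $\vsigma_z-\vsigma$ is diagonal, $A$ is anti-self-adjoint and commutes with $\vsigma_z$, so $U_\epsilon$ is unitary and preserves $\vsigma_\vpsi$; moreover $U_\epsilon^\ast\vx U_\epsilon=\vx+\epsilon\vLambda(\vsigma_z-\vsigma)$ exactly, so the displacement shifts by $\epsilon\vLambda(\vsigma_\vpsi-\vsigma)=\bm 0$---the subtraction of $\vsigma$ is exactly what makes the spin-dependent shift average to zero against the magnetization, keeping $U_\epsilon\vpsi\in\MC_{\vsigma,\vxi}$. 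Stationarity at $\epsilon=0$ gives $\dua{\vpsi}{[A,\vH_0]\vpsi}=0$, and the commutator evaluates to $2\vx\cdot\vLambda(\vsigma_z-\vsigma)+(\vLambda(\vsigma_z-\vsigma))\cdot(\vLambda\vsigma_z)-\sum_m\big[\,[\vLambda(\vsigma_z-\vsigma)]_m,\ \vt\cdot\vsigma_x\,\big]\partial_{x_m}$; using $\dua{\vpsi}{\vsigma_z\vpsi}=\vsigma$ and $\dua{\vpsi}{\vx\vpsi}=\vxi$ turns its first three pieces into $2\dua{\vpsi}{\vx\cdot\vLambda\vsigma_z\vpsi}-2\vxi\cdot\vLambda\vsigma+\|\vLambda\vsigma_z\vpsi\|^2-|\vLambda\vsigma|^2$. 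The hard part will be rewriting the $\vsigma_x$ piece as $-\dua{\vt\cdot\vsigma_x\vpsi}{A\vpsi}$: here I would invoke item (ii) to take $\vpsi$ real, observe that $\dua{\vpsi}{C\partial_{x_m}\vpsi}=0$ for any constant real-symmetric $C$ by integration by parts, conclude that only the antisymmetric (commutator) part of $(\vt\cdot\vsigma_x)[\vLambda(\vsigma_z-\vsigma)]_m$ contributes, and thereby identify $-\dua{\vt\cdot\vsigma_x\vpsi}{A\vpsi}$ with half of the surviving commutator term. Solving the resulting stationarity identity for $\dua{\vpsi}{\vx\cdot\vLambda\vsigma_z\vpsi}$ then gives the claim. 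Apart from this cancellation, the recurring technical obstacle is the regularity and domain bookkeeping---verifying that each trial family remains in $Q_0$ and that the formal commutator and integration-by-parts manipulations are legitimate---which I would settle using that the minimizer solves the Schrödinger equation and is hence smooth and rapidly decaying.
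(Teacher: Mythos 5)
Your items (i), (iii) and (iv) are correct and are in essence the paper's own arguments. For (i) the paper runs the same two-sided trial-state comparison with translated states; your conjugation identity $T_\vzeta^{*}\vH_0T_\vzeta=\vH_0+2\vzeta\cdot\vx+\vzeta\cdot\vLambda\vsigma_z+|\vzeta|^2$ together with the bijection $T_\vzeta:\MC_{\vsigma,\vxi}\to\MC_{\vsigma,\vxi+\vzeta}$ is the same proof, and (as your own argument shows) it needs no minimizer at all. For (iii) the paper scales each mode separately and then sums, while you scale isotropically; both give the stated identity. For (iv) your generator $A=\grad\cdot\vLambda(\vsigma_z-\vsigma)$ produces exactly the paper's spin-dependent componentwise translations, the subtraction of $\vsigma$ playing the same role of freezing $\vxi$; I checked that with $D_m=[\vLambda(\vsigma_z-\vsigma)]_m$ one indeed has $\dua{\vt\cdot\vsigma_x\vpsi}{A\vpsi}=-\tfrac12\sum_m\dua{\vpsi}{[D_m,\vt\cdot\vsigma_x]\partial_{x_m}\vpsi}$ for real $\vpsi$, so solving your stationarity identity $\dua{\vpsi}{[A,\vH_0]\vpsi}=0$ does reproduce the stated formula, factor $\tfrac12$ included.

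The genuine gap is item (ii), which is also the one place where you depart from the paper. First, your route runs through the Euler--Lagrange equation, but in this paper that equation (\cref{elthm}, resting on the manifold structure of \cref{tangthm}) is only available for regular $\vsigma\in\RC_N$, whereas \cref{fllprop} is asserted on all of $[-1,1]^N\times\RR^M$. On the irregular set (the hyperplanes of \cref{prop:RN}, in particular any $\sigma_n=\pm1$) no multipliers of the form $(E,\vv,\vj)$ are guaranteed; compare \cref{thrm:EulerLagrange}, where for $\sigma=\pm1$ the stationarity condition degenerates into a different equation. The same caveat undermines your blanket ``EL plus elliptic regularity'' justification of the formal manipulations — fortunately that crutch is unnecessary, since (i) needs no optimizer and the computations in (iii)--(iv) only require $\vpsi\in Q_0$.

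Second, and more seriously, even on the regular set your degenerate case is an assertion, not an argument, and it is precisely the crux. If $u_1,u_2$ are orthonormal real eigenfunctions and the target densities are attained by $(u_1+\i u_2)/\sqrt{2}$, then a real unit vector $\cos\theta\,u_1+\sin\theta\,u_2$ reproduces the value of a constraint form $Q\in\{\sigma_z^1,\ldots,\sigma_z^N,x_1,\ldots,x_M\}$ if and only if $\tan(2\theta)\cdot2\dua{u_1}{Qu_2}=\dua{u_2}{Qu_2}-\dua{u_1}{Qu_1}$: that is one equation per constraint for the single unknown $\theta$, and ``reality of the eigenspace'' gives no reason why these $N+M$ conditions should be simultaneously solvable. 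The paper's proof avoids spectral theory altogether: it observes that the energy and all three constraint families split additively over $\Re\vpsi$ and $\Im\vpsi$ (every cross term vanishes because $\vx$, $\sigma_z^n$, $\sigma_x^n$ are real symmetric), and on that basis restricts the constrained search to real wavefunctions, uniformly in $(\vsigma,\vxi)$. If you want a pointwise repair in your own style: conjugate by suitable $\sigma_z^n$ so that every $t_n>0$ (this flips the sign of $t_n$ and changes nothing else), then replace $\psi^\valpha\mapsto|\psi^\valpha|$; all constraints are untouched, the kinetic term does not increase by the diamagnetic inequality, and each $\langle\sigma_x^n\rangle$ does not decrease since $\sigma_x^n$ has nonnegative entries — this is essentially the paper's argument for $N=M=1$ in \cref{fllprop_N=M=1}. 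Finally, note that your (iv) explicitly invokes (ii), so until (ii) is repaired the gap propagates there as well.
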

It readily follows from \eqref{fllprop:item:displacement-rule} that the function $\vxi\mapsto F_\LL(\vsigma,\vxi)$ is smooth and convex for every fixed $\vsigma\in[-1,1]^N$.

Next, we consider the constrained minimization problem defining $F_\LL(\vsigma,\vxi)$ from a geometric perspective.

\begin{lemma}\label{tangthm}
Let $(\vsigma,\vxi)\in\RC_N\times\RR^M$.
Then the following statements hold true.
\begin{enumerate}[(i)]
\item\label{tangthm:item:submersed} $\MC_{\vsigma,\vxi}$ is a closed submersed Hilbert submanifold of $Q_0$.
\item\label{tangthm:item:tangent-space} The tangent space of $\MC_{\vsigma,\vxi}$ at $\vpsi\in\MC_{\vsigma,\vxi}$ is given by
\begin{align*}
\TC_\vpsi(\MC_{\vsigma,\vxi})&=\Big\{ \vchi\in Q_0 : \dua{\vpsi}{\vchi}=0,\; \dua{\vsigma_z\vpsi}{\vchi}=0, \; 
\dua{\vx\vpsi}{\vchi}=0 \Big\},
\end{align*}
which we consider as a vector space over $\RR$.\footnote{This result for the tangent space corresponds to wavefunctions that are taken modulo a global phase, thus actually the complex projective space of $\mathcal{H}$ is considered here.}\item\label{tangthm:item:cotangent-space} 
The orthogonal complement of $\TC_\vpsi(\MC_{\vsigma,\vxi})$ in $\HC$ is the $(N+M+1)$-dimensional vector space given by
\begin{align*}
\TC_\vpsi(\MC_{\vsigma,\vxi})^\perp=\Big\{ \left(E + \vv\cdot\vsigma_z + \vj\cdot \vx\right)\vpsi \in \HC &:  E\in\RR, \; \vv\in\RR^{N}, \; \vj\in\RR^M \Big\}.
\end{align*}
\end{enumerate}
\end{lemma}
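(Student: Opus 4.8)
The plan is to present $\MC_{\vsigma,\vxi}$ as the preimage of a regular value of a smooth constraint map and then invoke the submersion (regular value) theorem for Hilbert manifolds. Concretely, I would set
\[
G : Q_0 \to \RR^{1+N+M}, \qquad G(\vpsi) = \big( \|\vpsi\|^2,\ \dua{\vpsi}{\sigma_z^1\vpsi},\ \ldots,\ \dua{\vpsi}{\sigma_z^N\vpsi},\ \dua{\vpsi}{x_1\vpsi},\ \ldots,\ \dua{\vpsi}{x_M\vpsi} \big),
\]
so that $\MC_{\vsigma,\vxi}=G^{-1}(1,\vsigma,\vxi)$, which is nonempty by $N$-representability (\cref{repthm}). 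Each component is a real-valued quadratic form attached to a self-adjoint operator that is bounded relative to $\vH_0$ (the identity and $\sigma_z^n$ are bounded, and $x_m$ is form-bounded by $\vV\sim|\vx|^2$), hence $G$ is smooth on the Hilbert space $Q_0$ and $\MC_{\vsigma,\vxi}$ is closed as the preimage of a point under a continuous map. I would then record the differential: for self-adjoint $A$ the form $q_A(\vpsi)=\dua{\vpsi}{A\vpsi}$ satisfies $Dq_A(\vpsi)[\vchi]=2\Re\dua{A\vpsi}{\vchi}$, so
\[
DG(\vpsi)[\vchi] = 2\big(\Re\dua{\vpsi}{\vchi},\ (\Re\dua{\sigma_z^n\vpsi}{\vchi})_{n},\ (\Re\dua{x_m\vpsi}{\vchi})_{m}\big).
\]

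The heart of the proof, and the step I expect to be the main obstacle, is to show that $DG(\vpsi)$ is surjective at every $\vpsi\in\MC_{\vsigma,\vxi}$, equivalently that the family $\{\vpsi,\ \sigma_z^1\vpsi,\ldots,\sigma_z^N\vpsi,\ x_1\vpsi,\ldots,x_M\vpsi\}$ is linearly independent over $\RR$; this is precisely where the hypothesis $\vsigma\in\RC_N$ is used. Suppose a real combination vanishes, $(E+\vv\cdot\vsigma_z+\vj\cdot\vx)\vpsi=0$. Reading this spinor component by component gives $(E+c_\valpha+\vj\cdot\vx)\psi^\valpha=0$ a.e., where $c_\valpha=\sum_n v_n(\sigma_z^n)_{\valpha\valpha}=\vv\cdot\vOmega\ve_\valpha$. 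For each $\valpha$ in the support set $I=\{\valpha:\psi^\valpha\not\equiv0\}$ the affine function $\vx\mapsto E+c_\valpha+\vj\cdot\vx$ must vanish on a set of positive measure, which forces $\vj=\bm 0$ and then $\vv\cdot\vOmega\ve_\valpha+E=0$. Thus the affine functional $\vy\mapsto\vv\cdot\vy+E$ vanishes on $\{\vOmega\ve_\valpha:\valpha\in I\}$, hence on its affine hull. Since $\omega_\valpha:=\|\psi^\valpha\|^2$ gives an admissible convex representation of $\vsigma$ (namely $\omega_\valpha\ge0$, $\sum_\valpha\omega_\valpha=1$, $\vOmega\vomega=\vsigma$) with support exactly $I$, regularity yields $\Aff\{\vOmega\ve_\valpha:\valpha\in I\}=\RR^N$, so the functional is identically zero and $E=0$, $\vv=\bm0$. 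This is the only place regularity enters, and it is indispensable: for irregular $\vsigma$ these vectors can be dependent and the level set fails to be a submanifold.

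With surjectivity onto the finite-dimensional target established, $\ker DG(\vpsi)$ is closed of finite codimension, hence automatically complemented in $Q_0$, so $DG(\vpsi)$ is a split surjection and the submersion theorem applies. This gives part (i): $\MC_{\vsigma,\vxi}$ is a closed submersed Hilbert submanifold of $Q_0$ with $\TC_\vpsi(\MC_{\vsigma,\vxi})=\ker DG(\vpsi)$. The orthogonal complement in $\HC$ of this kernel is dual to the rows of $DG(\vpsi)$, namely
\[
\Span_\RR\{\vpsi,\sigma_z^1\vpsi,\ldots,\sigma_z^N\vpsi,x_1\vpsi,\ldots,x_M\vpsi\}=\big\{(E+\vv\cdot\vsigma_z+\vj\cdot\vx)\vpsi : E\in\RR,\vv\in\RR^N,\vj\in\RR^M\big\},
\]
and by the independence just proved this has dimension exactly $N+M+1$, which is part (iii).

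Finally, for part (ii) I would express the tangent space as this orthogonal complement, written out as the orthogonality conditions against $\vpsi$, against each $\sigma_z^n\vpsi$ (i.e.\ $\dua{\vsigma_z\vpsi}{\vchi}=0$), and against each $x_m\vpsi$ (i.e.\ $\dua{\vx\vpsi}{\vchi}=0$). The only point requiring care is that the raw kernel conditions coming from $DG$ involve real parts only, whereas the stated form uses the full inner product; this is reconciled by the global-phase quotient flagged in the footnote (the phase direction $\i\vpsi$ lies in $\ker DG(\vpsi)$ since all constraints are phase-invariant, and passing to the horizontal representative upgrades $\Re\dua{\vpsi}{\vchi}=0$ to $\dua{\vpsi}{\vchi}=0$), together with the observation that $\vH_0$, $\vsigma_z$ and $\vx$ are real operators, so one may work with real-valued representatives (cf.\ \cref{fllprop}) on which the real and Hermitian inner products coincide and the characterization is unambiguous. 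I would treat this reconciliation as the one genuinely delicate bookkeeping step, the rest being the submersion machinery driven by the independence lemma above.
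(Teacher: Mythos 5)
Your proposal is correct, and its overall architecture is the same as the paper's: both exhibit $\MC_{\vsigma,\vxi}$ as the preimage of a regular value of the constraint map $\vpsi\mapsto(\|\vpsi\|^2,\vsigma_\vpsi,\vxi_\vpsi)$ and invoke the submersion theorem, so everything hinges on surjectivity of the differential, i.e.\ on real-linear independence of $\{\vpsi,\sigma_z^1\vpsi,\ldots,\sigma_z^N\vpsi,x_1\vpsi,\ldots,x_M\vpsi\}$. Where you genuinely differ is in how this independence is proved. The paper assumes a vanishing combination $\big(a_0+\va\cdot\vsigma_z+\sum_m b_m x_m\big)\vpsi=0$, treats the case $b_1=\cdots=b_M=0$ by the Fredholm-alternative argument recycled from the proof of \cref{hkthm}, and handles $b_p\neq 0$ by differentiating the relation in $x_p$, substituting back, and bootstrapping to $b_p\vpsi=0$, a contradiction. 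You instead argue componentwise: on the positive-measure support of each $\psi^\valpha$, $\valpha\in I$, the affine function $\vx\mapsto E+\vv\cdot\vOmega\ve_\valpha+\vj\cdot\vx$ vanishes, hence vanishes identically (a nonzero affine function has a null zero set), which gives $\vj=\bm{0}$ and $E+\vv\cdot\vOmega\ve_\valpha=0$ for all $\valpha\in I$; since the affine functional $\vy\mapsto\vv\cdot\vy+E$ then kills $\{\vOmega\ve_\valpha:\valpha\in I\}$ and hence its affine hull, regularity of $\vsigma$ applied to the admissible weights $\omega_\valpha=\|\psi^\valpha\|^2$ forces $\vv=\bm{0}$ and $E=0$. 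This is more elementary (no differentiation of the relation, no Fredholm alternative), treats both cases at once, and uses regularity exactly in the form in which it is defined (affine hull equal to $\RR^N$), so it is arguably cleaner than the paper's reduction; it buys simplicity at no loss of generality, whereas the paper's bootstrap is what one would reach for if the measure-theoretic fact about affine functions were unavailable. On the remaining points — split surjectivity from finite codimension, identification of $\TC_\vpsi(\MC_{\vsigma,\vxi})$ with the kernel, the orthogonal complement as the real span of the constraint gradients of dimension $N+M+1$ — you coincide with the paper, and the real-versus-complex bookkeeping in parts (ii) and (iii) (real-part kernel versus full Hermitian conditions, reconciled by the phase quotient and real representatives) is handled at least as explicitly in your write-up as in the paper, which dismisses it with ``follow easily'' and a footnote.
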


The Lagrange multiplier rule and the positivity of the Hessian give the following straightforward result. Note that a similar calculation does not seem to be possible in the setting of standard DFT, since there the density constraint does not give rise to a well-defined tangent space.

\begin{theorem}[Optimality]\label{elthm}
Let $(\vsigma,\vxi)\in \RC_N\times\RR^M$ and suppose that $\vpsi\in\MC_{\vsigma,\vxi}$ is an optimizer of $F_\LL(\vsigma,\vxi)$.
Then there exist Lagrange multipliers $E\in\RR$, $\vv\in\RR^N$ and $\vj\in\RR^M$, such that $\vpsi$ satisfies the strong Schr\"odinger equation
\begin{equation}\label{elschr}
\vH(\vv,\vj) \vpsi=E\vpsi
\end{equation}
and the second-order condition
\begin{equation}\label{seccond}
\dua{\vchi}{\vH(\vv,\vj)\vchi}\ge E\|\vchi\|^2
\end{equation}
for all $\vchi\in\TC_\vpsi(\MC_{\vsigma,\vxi})$.
Moreover, 
\begin{equation*}
F_\LL(\vsigma,\vxi)=\dua{\vpsi}{\vH_0\vpsi}=E - \vv\cdot\vsigma - \vj\cdot\vxi.
\end{equation*}
\end{theorem}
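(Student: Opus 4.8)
The plan is to read $F_\LL(\vsigma,\vxi)$ as the minimization of the smooth functional $J(\vphi)=\dua{\vphi}{\vH_0\vphi}$ over the Hilbert submanifold $\MC_{\vsigma,\vxi}\subset Q_0$ furnished by \cref{tangthm}, and to extract the two optimality conditions from the Lagrange multiplier rule together with the nonnegativity of the second variation. Since $(\vsigma,\vxi)\in\RC_N\times\RR^M$, \cref{tangthm}\eqref{tangthm:item:submersed} tells us that $\MC_{\vsigma,\vxi}$ is submersed, i.e.\ the constraint map $G\colon Q_0\to\RR^{1+N+M}$, $G(\vphi)=(\|\vphi\|^2,\vsigma_\vphi,\vxi_\vphi)$, has surjective differential at $\vpsi$; its component functionals are $\vchi\mapsto 2\Re\dua{\vpsi}{\vchi}$, $\vchi\mapsto 2\Re\dua{\sigma_z^n\vpsi}{\vchi}$ and $\vchi\mapsto 2\Re\dua{x_m\vpsi}{\vchi}$, all bounded on $Q_0$ because $\vpsi,\sigma_z^n\vpsi,x_m\vpsi\in\HC$ and $Q_0\hookrightarrow\HC$ continuously. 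The objective $J$ is smooth on $Q_0$ with $DJ(\vpsi)[\vchi]=2\Re\,a_0(\vchi,\vpsi)$, where $a_0$ denotes the closed sesquilinear form of $\vH_0$.

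Invoking the Lagrange multiplier theorem in this Banach-space setting, surjectivity of $DG(\vpsi)$ yields reals $E$, $\vv\in\RR^N$, $\vj\in\RR^M$ (the multipliers, renamed to match the final signs) such that
\[
\Re a_0(\vchi,\vpsi)=\Re\dua{\vchi}{(E-\vv\cdot\vsigma_z-\vj\cdot\vx)\vpsi}\qquad\text{for all }\vchi\in Q_0 .
\]
Because $Q_0$ is a complex space, testing also against $\i\vchi$ removes the real part and upgrades this to the genuinely complex identity $a_0(\vchi,\vpsi)=\dua{\vchi}{(E-\vv\cdot\vsigma_z-\vj\cdot\vx)\vpsi}$. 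The right-hand side defines an element of $\HC$ (here I use $\vpsi\in Q_0$, hence $\vx\vpsi\in\HC$), so the first representation theorem for the closed semibounded form $a_0$ places $\vpsi$ in the operator domain of $\vH_0$ and delivers the strong equation $\vH(\vv,\vj)\vpsi=E\vpsi$, which is \cref{elschr}. This promotion from the weak (form) identity to the strong operator equation is the one genuinely analytic point, and it is consistent with \cref{tangthm}\eqref{tangthm:item:cotangent-space}, which identifies precisely this $(N+M+1)$-dimensional family as the normal directions.

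For the second-order condition I would fix $\vchi\in\TC_\vpsi(\MC_{\vsigma,\vxi})$ and choose a $C^2$ curve $t\mapsto\gamma(t)\in\MC_{\vsigma,\vxi}$ with $\gamma(0)=\vpsi$, $\dot\gamma(0)=\vchi$ (available from the submanifold structure of \cref{tangthm}), and use that $\tfrac12\,\partial_t^2 J(\gamma(t))|_{t=0}=\dua{\vchi}{\vH_0\vchi}+\Re\dua{\ddot\gamma(0)}{\vH_0\vpsi}\ge 0$ at the minimizer. Differentiating the three defining constraints $\|\gamma\|^2\equiv 1$, $\dua{\gamma}{\sigma_z^n\gamma}\equiv\sigma_n$ and $\dua{\gamma}{x_m\gamma}\equiv\xi_m$ twice gives $\Re\dua{\ddot\gamma(0)}{\vpsi}=-\|\vchi\|^2$, $\Re\dua{\ddot\gamma(0)}{\sigma_z^n\vpsi}=-\dua{\vchi}{\sigma_z^n\vchi}$ and $\Re\dua{\ddot\gamma(0)}{x_m\vpsi}=-\dua{\vchi}{x_m\vchi}$. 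Substituting $\vH_0\vpsi=(E-\vv\cdot\vsigma_z-\vj\cdot\vx)\vpsi$ into the acceleration term eliminates $\ddot\gamma(0)$ entirely and collapses the inequality to $\dua{\vchi}{\vH(\vv,\vj)\vchi}\ge E\|\vchi\|^2$, i.e.\ \cref{seccond}.

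Finally, the energy formula is immediate: pairing the strong equation with $\vpsi$ gives $\dua{\vpsi}{\vH(\vv,\vj)\vpsi}=E$, and expanding $\vH(\vv,\vj)=\vH_0+\vv\cdot\vsigma_z+\vj\cdot\vx$ together with $\vsigma_\vpsi=\vsigma$, $\vxi_\vpsi=\vxi$ and $\dua{\vpsi}{\vH_0\vpsi}=F_\LL(\vsigma,\vxi)$ yields $F_\LL(\vsigma,\vxi)=E-\vv\cdot\vsigma-\vj\cdot\vxi$. I expect the main obstacle to be the rigorous deployment of the multiplier rule in $Q_0$ and the attendant weak-to-strong upgrade (requiring the representation theorem and the check that the normal directions genuinely lie in $\HC$), rather than the second-variation bookkeeping, which is routine once admissible curves are known to exist.
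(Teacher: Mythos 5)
Your proposal is correct and follows essentially the same route as the paper: the Lagrange multiplier rule on the submanifold furnished by \cref{tangthm}, a weak-to-strong upgrade to obtain \cref{elschr}, and the second-order condition via a $C^2$ curve whose twice-differentiated constraints eliminate the $\ddot{\vgamma}(0)$ term. The only difference is cosmetic: where the paper passes from the form identity in $Q_0^*$ to the strong equation by a density remark, you invoke Kato's first representation theorem explicitly --- a slightly more careful justification of the same step, and you use the submersion property \cref{tangthm}~\eqref{tangthm:item:submersed} where the paper directly uses the normal-space characterization \cref{tangthm}~\eqref{tangthm:item:cotangent-space}, which amounts to the same thing.
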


It is also possible to write down the optimality conditions if $\vsigma$ is irregular, but we do not consider that case in detail here. Yet, we will do so later in \cref{thrm:EulerLagrange} for a reduced model.
Note that the above theorem says that optimizers of the constrained-search functional are solutions of the Schrödinger equation, yet it does not guarantee that they are ground states. Theorem~\ref{aufbauthm} below shows that they are at least low-lying eigenstates. But before that, we state the following characterization for ground-state optimizers in terms of degeneracy.

\begin{theorem}[Hohenberg--Kohn-type result for optimizers]\label{hkopt}
Let $(\vsigma,\vxi)\in \RC_N\times\RR^M$. Suppose that an optimizer of $F_\LL(\vsigma,\vxi)$ is a ground state of $\vH(\vv,\vj)$
for some $E=E(\vv,\vj)\in\RR$, $\vv\in\RR^N$ and $\vj\in\RR^M$. Then all other optimizers of $F_\LL(\vsigma,\vxi)$ which are ground states (for possibly different potentials and
energies) must also be in $\ker(\vH(\vv,\vj)-E(\vv,\vj))$.
\end{theorem}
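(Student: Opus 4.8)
The plan is to derive this statement directly from the Hohenberg--Kohn theorem (\cref{hkthm}), which has already done all the heavy lifting. Write $\vpsi^{(1)}$ for the given optimizer of $F_\LL(\vsigma,\vxi)$ that is a ground state of $\vH(\vv,\vj)$ with energy $E=E(\vv,\vj)$, and let $\vpsi^{(2)}$ be any other optimizer of $F_\LL(\vsigma,\vxi)$ that is a ground state of some $\vH(\vv',\vj')$ with energy $E'=E(\vv',\vj')$. The aim is to conclude $\vpsi^{(2)}\in\ker(\vH(\vv,\vj)-E(\vv,\vj))$.

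First I would note that both optimizers lie in the constraint manifold $\MC_{\vsigma,\vxi}$, so by its very definition $\vsigma_{\vpsi^{(1)}}=\vsigma_{\vpsi^{(2)}}=\vsigma$ and $\vxi_{\vpsi^{(1)}}=\vxi_{\vpsi^{(2)}}=\vxi$. Together with the hypothesis that each $\vpsi^{(i)}$ is a ground state of its Hamiltonian $\vH(\vv^{(i)},\vj^{(i)})$, with $(\vv^{(1)},\vj^{(1)})=(\vv,\vj)$ and $(\vv^{(2)},\vj^{(2)})=(\vv',\vj')$, this is precisely the setup of \cref{hkthm}. I would then simply invoke that theorem: it guarantees that $\vpsi^{(2)}$ is also a ground state of $\vH(\vv^{(1)},\vj^{(1)})=\vH(\vv,\vj)$. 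Since $\vpsi^{(1)}$ attains the ground-state energy $E(\vv,\vj)$ there, every ground state of $\vH(\vv,\vj)$ is an eigenstate with eigenvalue $E(\vv,\vj)$, whence $\vH(\vv,\vj)\vpsi^{(2)}=E(\vv,\vj)\vpsi^{(2)}$, which is exactly the desired membership $\vpsi^{(2)}\in\ker(\vH(\vv,\vj)-E(\vv,\vj))$. I would remark in passing that, because $(\vsigma,\vxi)\in\RC_N\times\RR^M$ makes $\vsigma$ regular, part \eqref{hkthm:item:regular} of \cref{hkthm} in fact forces $\vv'=\vv$ and $\vj'=\vj$, so the two potential pairs coincide outright.

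There is no genuine obstacle here; the only point deserving explicit mention is the elementary translation between ``$\vpsi^{(2)}$ is a ground state of $\vH(\vv,\vj)$'' and ``$\vpsi^{(2)}\in\ker(\vH(\vv,\vj)-E(\vv,\vj))$'', which rests on $E(\vv,\vj)$ being the bottom of the spectrum as witnessed by $\vpsi^{(1)}$. The substance of the result is entirely inherited from \cref{hkthm}, and the regularity assumption $\vsigma\in\RC_N$ is exactly what is needed to apply it in the sharp (regular) form.
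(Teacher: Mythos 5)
Your proposal is correct and follows essentially the same route as the paper: the paper's own proof likewise reduces the statement to the Hohenberg--Kohn argument, noting that ground states sharing the same $(\vsigma,\vxi)$ are ground states for each other's Hamiltonians, which is exactly the conclusion of \cref{hkthm} you invoke. Your version is if anything slightly cleaner, since you cite the statement of \cref{hkthm} directly (both optimizers lie in $\MC_{\vsigma,\vxi}$, so its hypotheses hold) and then spell out the elementary translation from ``ground state of $\vH(\vv,\vj)$'' to membership in $\ker(\vH(\vv,\vj)-E(\vv,\vj))$.
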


In other words, there is no real ``competition'' for the optimizers which happen to be ground states: they all belong to the same degenerate eigenspace of the same Hamiltonian. We immediately see that if this eigenspace happens to be one-dimensional then there can only be one optimizer which is a ground state. This will be the case for $N=M=1$, see \cref{specsec} below. 

The second-order information \cref{seccond} about a minimizer gives a result which is analogous to the Aufbau principle in Hartree--Fock theory.
\begin{theorem}[Optimizers are low-lying eigenstates]\label{aufbauthm}
Let $(\vsigma,\vxi)\in \RC_N\times\RR^M$, and suppose that $\vpsi\in\MC_{\vsigma,\vxi}$ is an optimizer of $F_\LL(\vsigma,\vxi)$, with
Lagrange multipliers $E\in\RR$, $\vv\in\RR^N$ and $\vj\in\RR^M$, so that 
\cref{elschr} and \cref{seccond} hold true.
Then $\vpsi$ is at most the $(N+M)$th excited eigenstate of $\vH(\vv,\vj)$. 
\end{theorem}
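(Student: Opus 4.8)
The plan is to read the second-order condition \cref{seccond} as the statement that $\vH(\vv,\vj)-E$ is positive semidefinite on the tangent space $\TC_\vpsi(\MC_{\vsigma,\vxi})$, and to combine this with the discreteness of the spectrum of $\vH(\vv,\vj)$ to bound the number of eigenvalues below $E$. Since the form domain $Q_0$ is compactly embedded in $\HC$ (as used in the proof of \cref{fllexist}), the operator $\vH(\vv,\vj)$ has compact resolvent and hence a purely discrete spectrum $E_0\le E_1\le\ldots\to\infty$. Moreover $\vH(\vv,\vj)$ is a real operator---the differential part has real coefficients and only the real Pauli matrices $\sigma_z,\sigma_x$ occur---so it possesses an orthonormal eigenbasis of real-valued functions, consistent with the real optimizer of \cref{fllprop}, item~\eqref{fllprop:item:real}. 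I would work throughout in the corresponding real Hilbert space, where the complement described in \cref{tangthm}, item~\eqref{tangthm:item:cotangent-space}, is genuinely $(N+M+1)$-dimensional.

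Write $L$ for the span of all eigenstates of $\vH(\vv,\vj)$ whose eigenvalue is strictly below $E$, and set $\ell=\dim L$, i.e.\ the number of eigenvalues below $E$ counted with multiplicity. The assertion that $\vpsi$ is at most the $(N+M)$th excited state is exactly the inequality $\ell\le N+M$, so I would argue by contradiction and assume $\ell\ge N+M+1$. Let $\TC_\vpsi^\perp$ denote the complement from \cref{tangthm} and let $P$ be the orthogonal projection onto it. The decisive observation is that $\vpsi$ is one of the spanning vectors of $\TC_\vpsi^\perp$ (choose $E=1$, $\vv=\bm{0}$, $\vj=\bm{0}$ in its description), while at the same time $\vpsi\perp L$ because eigenspaces for the distinct eigenvalues $E$ and those strictly below $E$ are orthogonal. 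Hence for every $\vchi\in L$ one has $\dua{\vpsi}{P\vchi}=\dua{\vpsi}{\vchi}=0$, so $P$ actually maps $L$ into the smaller space $W=\TC_\vpsi^\perp\cap\vpsi^\perp$, which has dimension $N+M$.

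The dimension count then closes the argument: if $\ell\ge N+M+1>\dim W$, the restriction $P|_L$ has nontrivial kernel, yielding a nonzero $\vchi\in L$ with $P\vchi=\bm{0}$, i.e.\ $\vchi\in\TC_\vpsi(\MC_{\vsigma,\vxi})$. Expanding this $\vchi$ in the eigenbasis of $L$ gives $\dua{\vchi}{\vH(\vv,\vj)\vchi}=\sum_i|c_i|^2E_i<E\|\vchi\|^2$ strictly, since every contributing eigenvalue satisfies $E_i<E$ and $\vchi\neq\bm{0}$. This contradicts the second-order condition \cref{seccond}, so $\ell\le N+M$ as claimed.

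The step I expect to be the crux is the refinement from the naive bound $N+M+1$---which any codimension-$(N+M+1)$ positivity statement would give immediately---to the sharp value $N+M$. This improvement rests entirely on isolating $\vpsi$ itself as a null direction of $\vH(\vv,\vj)-E$ inside $\TC_\vpsi^\perp$ and noting that it is automatically orthogonal to $L$, so that only $N+M$ of the constraint directions can absorb the low-lying spectrum. A minor point to verify is that the test vector $\vchi$ lies in $Q_0$---which holds because eigenstates lie in the operator domain contained in $Q_0$---and that the intersection estimate is carried out in the real-linear setting matching the tangent-space description, where regularity $\vsigma\in\RC_N$ guarantees the complement is exactly $(N+M+1)$-dimensional.
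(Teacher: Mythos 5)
Your proof is correct, and it rests on the same two ingredients as the paper's: reading \cref{seccond} as nonnegativity of $\vH(\vv,\vj)-E$ on the codimension-$(N+M+1)$ tangent space, and gaining the crucial extra dimension from the fact that $\vpsi$ is a null vector of $\vH(\vv,\vj)-E$ (by \cref{elschr}) lying in $\TC_\vpsi(\MC_{\vsigma,\vxi})^\perp$ and orthogonal to all lower eigenspaces. Where you genuinely differ is how the count is closed. The paper enlarges the nonnegativity subspace to $\TC_\vpsi(\MC_{\vsigma,\vxi})\oplus\Span\{\vpsi\}$, of codimension $N+M$, and then invokes \cref{lionslemm}, a min-max lemma quoted from Lions; you instead prove the needed count inline by rank-nullity, projecting the low-lying spectral subspace $L$ into the $(N+M)$-dimensional space $W=\TC_\vpsi(\MC_{\vsigma,\vxi})^\perp\cap\vpsi^\perp$ and observing that a dimension excess would produce a nonzero $\vchi\in L\cap\TC_\vpsi(\MC_{\vsigma,\vxi})$ contradicting \cref{seccond}. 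Your route is more elementary and self-contained, but it requires the spectrum of $\vH(\vv,\vj)$ to be purely discrete, so you should also justify that $\vH(\vv,\vj)$ still has form domain $Q_0$ (the term $\vv\cdot\vsigma_z$ is bounded and $\vj\cdot\vx$ is infinitesimally form-bounded relative to $\vH_0$), whence compact resolvent follows from the compact embedding used in \cref{fllexist}; the paper's route avoids this entirely because \cref{lionslemm} is formulated to tolerate essential spectrum.

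One caveat on your real-Hilbert-space framing: the theorem concerns an arbitrary, possibly complex optimizer $\vpsi$ with its given multipliers, and you cannot swap it for the real optimizer of \cref{fllprop}~\eqref{fllprop:item:real}, which would generally come with different multipliers and a different Hamiltonian. The harmless fix is not to restrict to real-valued functions at all, but to view $\HC$ as a real Hilbert space with inner product $\Re\dua{\cdot}{\cdot}$; in that setting the complement in \cref{tangthm}~\eqref{tangthm:item:cotangent-space} is genuinely $(N+M+1)$-dimensional over $\RR$, eigenvectors are still orthogonal to $\vpsi$, and your projection argument goes through verbatim. This real-versus-complex bookkeeping is an ambiguity inherited from \cref{tangthm} itself and burdens the paper's proof equally, so it does not count against your argument.
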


With this result we can conclude that any $(\vsigma,\vxi)\in \RC_N\times\RR^M$, while not proven to be pure-state $v$-representable in the usual sense, can be called ``low-lying excited-state $v$-representable''. In \cref{flsec} below we additionally prove ensemble $v$-representability.

\subsection{Adiabatic connection}
\label{sec:AC}

Next, we consider a useful tool in DFT \cite{levy1991-AC}: the adiabatic connection from zero to full coupling, which is essentially based on the Newton--Leibniz formula applied to the density functional as a function of the coupling strength, at fixed density.
In order to formulate the adiabatic connections in the current setting, some preparations are due.

In this section, we will indicate the dependence on $\vLambda$ in the Levy--Lieb functional by a superscript. It is easy to see that $\RR^{M\times N}\ni \vLambda\mapsto F_\LL^\vLambda(\vsigma,\vxi)$ and thus in particular $\RR\ni s\mapsto F_\LL^{s\vLambda}(\vsigma,\vxi)$ are concave for every fixed $(\vsigma,\vxi)\in[-1,1]^N\times\RR^M$.
The concave equivalent of the subdifferential, commonly called superdifferential, of this functional is easy to determine. We
remind the Reader that if $X$ is a vector space and $f:X\to\RR$ concave, the superdifferential of $f$ is a set-valued mapping $\ol{\partial}f : X \to \mathcal{P}(X^*)$
given by
$$
\ol{\partial} f(x) = \{ \phi \in X^* : f(x')-f(x)\le \dua{\phi}{x'-x}\;\text{for all}\; x'\in X\}
$$
for any $x\in X$. 

The functional at coupling strength $\vLambda$ can be given by the generalized Newton--Leibniz formula,
\begin{equation}\label{fll-newtonleibniz}
F_\LL^{\vLambda}(\vsigma,\vxi)=F_\LL^{0}(\vsigma,\vxi)+\int_0^1 \ol{\partial}_s F_\LL^{s\vLambda}(\vsigma,\vxi)(s) \,\dd s,
\end{equation}
where the integral is independent of the choice of elements from $\ol{\partial}_s F_\LL^{s\vLambda}(\vsigma,\vxi)(s)$. Here, the value of $F_\LL^\vLambda$ at zero coupling is given as follows.

\begin{lemma}\label{lemma:FLL0}
Let $(\vsigma,\vxi) \in [-1,1]^N\times \RR^M$ and $\vLambda = \bm{0}$, then the wavefunction  
\begin{equation} \label{eq:FLL0TrialState}
\vpsi(\vx) = \pi^{-M/4}e^{-|\vx - \vxi|^2/2}\mathbf{c}, \quad  \mathbf{c} = 2^{-N/2}\bigotimes_{n=1}^N \begin{pmatrix}\sqrt{1+\sigma_n} \\ \sqrt{1-\sigma_n} \end{pmatrix}, 
\end{equation}
is an optimizer of $F^0_\mathrm{LL}(\vsigma,\vxi)$, and 
\begin{equation*}
 F^0_\mathrm{LL}(\vsigma,\vxi) = M + |\vxi|^2 - \sum_{n=1}^N t_n \sqrt{1-\sigma_n^2}.
\end{equation*}
\end{lemma}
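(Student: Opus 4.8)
The plan is to prove the formula as a pair of matching upper and lower bounds, after reducing the displacement to zero. Since $\vLambda=\bm0$, the displacement rule in \cref{fllprop} gives $F_\LL^{\bm0}(\vsigma,\vxi)=F_\LL^{\bm0}(\vsigma,\bm0)+|\vxi|^2$, so it is enough to show $F_\LL^{\bm0}(\vsigma,\bm0)=M-\sum_{n=1}^N t_n\sqrt{1-\sigma_n^2}$; the optimizer for a general $\vxi$ is then the $\vxi$-translate of the one at $\vxi=\bm0$, which is precisely the Gaussian in \eqref{eq:FLL0TrialState}. The structural fact I would exploit is that at $\vLambda=\bm0$ the Hamiltonian decouples as $\vH_0=(-\lapl+|\vx|^2)\iden_{\CC^{2^N}}-\vt\cdot\vsigma_x$ into a photonic part uniformly bounded below by $M$ and a purely spinorial matter part, while the constraints separate as well: $\vxi_\vpsi$ depends only on the spatial factor and $\vsigma_\vpsi$ only on the spinor.

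For the upper bound I take the trial state $\vpsi_0(\vx)=\pi^{-M/4}e^{-|\vx|^2/2}\mathbf c$ and verify $\vpsi_0\in\MC_{\vsigma,\bm0}$. The Gaussian is normalized and Schwartz, hence in $Q_0=Q(-\lapl+|\vx|^2)$; one has $\|\mathbf c\|^2=2^{-N}\prod_n[(1+\sigma_n)+(1-\sigma_n)]=1$; the displacement vanishes by parity; and since $\mathbf c=\bigotimes_n c_n$ with $c_n=2^{-1/2}(\sqrt{1+\sigma_n},\sqrt{1-\sigma_n})^\top$ is a product, the magnetization factorizes to $\dua{c_n}{\sigma_z c_n}=\tfrac12[(1+\sigma_n)-(1-\sigma_n)]=\sigma_n$. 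The energy is then immediate: $\pi^{-M/4}e^{-|\vx|^2/2}$ is the ground state of $-\lapl+|\vx|^2$ with eigenvalue $M$, and $\dua{\mathbf c}{\vt\cdot\vsigma_x\mathbf c}=\sum_n t_n\dua{c_n}{\sigma_x c_n}=\sum_n t_n\sqrt{1-\sigma_n^2}$, so $\dua{\vpsi_0}{\vH_0\vpsi_0}=M-\sum_n t_n\sqrt{1-\sigma_n^2}$.

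For the matching lower bound I take an arbitrary $\vpsi\in\MC_{\vsigma,\bm0}$ and estimate the two pieces separately. The photonic part satisfies $\dua{\vpsi}{(-\lapl+|\vx|^2)\vpsi}\ge M\|\vpsi\|^2=M$ because $-\partial_{x_m}^2+x_m^2\ge1$ for each mode. For the matter part I write $\dua{\vpsi}{\sigma_x^n\vpsi}=2\Re\sum_{\valpha:\alpha_n=+}\dua{\psi^\valpha}{\psi^{\bar\valpha}}$, where $\bar\valpha$ denotes $\valpha$ with its $n$-th entry flipped, and combine Cauchy--Schwarz with the identities $\sum_{\valpha:\alpha_n=+}\|\psi^\valpha\|^2=\tfrac{1+\sigma_n}2$ and $\sum_{\valpha:\alpha_n=-}\|\psi^\valpha\|^2=\tfrac{1-\sigma_n}2$ (which follow from $\|\vpsi\|=1$ and the definition of $\vsigma_\vpsi$, cf.\ \cref{ex:psi-vanish,ex:ConstraintN2}) to get $|\dua{\vpsi}{\sigma_x^n\vpsi}|\le\sqrt{1-\sigma_n^2}$. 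Adding the two estimates yields $\dua{\vpsi}{\vH_0\vpsi}\ge M-\sum_n t_n\sqrt{1-\sigma_n^2}$, matching the trial energy; hence the infimum equals the stated value and $\vpsi_0$ (resp.\ its $\vxi$-translate) is an optimizer.

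The one step requiring care is the sign in the matter estimate: it is $\dua{\vpsi}{\sigma_x^n\vpsi}\le\sqrt{1-\sigma_n^2}$ rather than its absolute value that makes the lower bound match, and this uses $t_n\ge0$. This is no loss of generality, since conjugating $\vpsi$ by $\sigma_z^n$ sends $\sigma_x^n\mapsto-\sigma_x^n$ while fixing every $\sigma_z^m$ and the spatial operators, and hence preserves $\MC_{\vsigma,\vxi}$; one may therefore arrange $t_n>0$ for every $n$ (otherwise the corresponding entry of $\mathbf c$ changes sign and $t_n$ is replaced by $|t_n|$). This sign bookkeeping, not any analytic difficulty, is the only delicate point; given the $\vLambda=\bm0$ product structure, both bounds are otherwise elementary.
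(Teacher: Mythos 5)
Your proof is correct and shares the paper's overall architecture --- the same Gaussian-times-spinor trial state for the upper bound, the harmonic-oscillator bound $\|\grad\vpsi\|^2+\|\vx\vpsi\|^2\ge M$, and a Cauchy--Schwarz estimate on $\dua{\vpsi}{\sigma_x^n\vpsi}$ for the matter part --- but your lower bound is executed differently, and more carefully. The paper obtains its estimate $\dua{\vpsi}{\sigma_x^n\vpsi}\le\sqrt{1-\sigma_n^2}$ after asserting that, since $\vLambda=\bm{0}$ decouples the Hamiltonian, the two-level part of the wavefunction can be written as an $N$-fold tensor product of independent two-level wavefunctions; as a statement about an \emph{arbitrary} competitor $\vpsi\in\MC_{\vsigma,\vxi}$ this is not justified (generic elements of the constraint manifold are not spinor product states), so the paper's lower bound is, as written, restricted to a special class of states. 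Your version --- writing $\dua{\vpsi}{\sigma_x^n\vpsi}=2\Re\sum_{\valpha:\alpha_n=+}\dua{\psi^\valpha}{\psi^{\bar{\valpha}}}$ and applying Cauchy--Schwarz against the constraint identities $\sum_{\valpha:\alpha_n=\pm}\|\psi^\valpha\|^2=\tfrac{1\pm\sigma_n}{2}$ --- holds for every element of $\MC_{\vsigma,\bm{0}}$ and thus closes that gap; this is the main substantive difference, and it is an improvement. The other differences are minor: you invoke the displacement rule before the bounds while the paper invokes it after its lower bound, which is immaterial. Finally, your remark on signs is well taken and again more careful than the source: both the lemma's formula and the paper's proof implicitly require $t_n\ge 0$ (the paper only assumes $\vt\neq\bm{0}$), since otherwise the matching of the two bounds fails; your conjugation by $\sigma_z^n$ makes precise that for negative $t_n$ the formula holds with $|t_n|$ and a sign-flipped entry of $\mathbf{c}$, a caveat the paper does not address.
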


Moreover, the superdifferential in the integrand of the Newton--Leibniz formula verifies the following chain rule.
\begin{lemma}\label{lemma-FLL-subdiff}
Fix $(\vsigma,\vxi)\in[-1,1]^N\times \RR^M$ and $s\in\RR$. Then
\begin{align*}
\ol{\partial}_s F_\LL^{s\vLambda}(\vsigma,\vxi)(s)\supset \{ \dua{&\vpsi_{s\vLambda}}{\vx\cdot\vLambda\vsigma_z\vpsi_{s\vLambda}} : \\
&\vpsi_{s\vLambda}\in Q_0 \;\;\text{with}\;\; F_\LL^{s\vLambda}(\vsigma,\vxi)=\dua{\vpsi_{s\vLambda}}{\vH_0^{s\vLambda}\vpsi_{s\vLambda}} \}.
\end{align*}
\end{lemma}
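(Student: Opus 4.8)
The plan is to exploit that $s\mapsto F_\LL^{s\vLambda}(\vsigma,\vxi)$ is a pointwise infimum of functions affine in $s$, so that the asserted supergradients fall out of the elementary characterization of the superdifferential of such an infimum (a Danskin-type argument). Crucially, the constraint manifold $\MC_{\vsigma,\vxi}$ does not depend on $s$, which is what makes every admissible state usable as a trial state at every coupling.

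First I would record the affine dependence on $s$. For any fixed $\vpsi\in\MC_{\vsigma,\vxi}$,
\[
\dua{\vpsi}{\vH_0^{s\vLambda}\vpsi}=\Big(\dua{\vpsi}{(-\lapl+|\vx|^2)\vpsi}-\dua{\vpsi}{\vt\cdot\vsigma_x\vpsi}\Big)+s\,\dua{\vpsi}{\vx\cdot\vLambda\vsigma_z\vpsi},
\]
an affine function of $s$ with slope $\dua{\vpsi}{\vx\cdot\vLambda\vsigma_z\vpsi}$ and $s$-independent intercept. Taking the infimum over $\vpsi\in\MC_{\vsigma,\vxi}$ then exhibits $s\mapsto F_\LL^{s\vLambda}(\vsigma,\vxi)$ as an infimum of affine functions, hence concave (as already noted above).

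Next I would fix any optimizer $\vpsi_{s\vLambda}\in\MC_{\vsigma,\vxi}$ realizing $F_\LL^{s\vLambda}(\vsigma,\vxi)=\dua{\vpsi_{s\vLambda}}{\vH_0^{s\vLambda}\vpsi_{s\vLambda}}$; such a state exists by \cref{fllexist}, and if none did the right-hand set would be empty and the inclusion trivial. Writing $b:=\dua{\vpsi_{s\vLambda}}{\vx\cdot\vLambda\vsigma_z\vpsi_{s\vLambda}}$, I would use $\vpsi_{s\vLambda}$ itself as a trial state at an arbitrary coupling $s'$: since it remains admissible, the definition of $F_\LL^{s'\vLambda}$ together with the affine identity above gives
\[
F_\LL^{s'\vLambda}(\vsigma,\vxi)\le\dua{\vpsi_{s\vLambda}}{\vH_0^{s'\vLambda}\vpsi_{s\vLambda}}=F_\LL^{s\vLambda}(\vsigma,\vxi)+b\,(s'-s).
\]
Rearranging yields $F_\LL^{s'\vLambda}(\vsigma,\vxi)-F_\LL^{s\vLambda}(\vsigma,\vxi)\le b\,(s'-s)$ for all $s'\in\RR$, which is precisely the defining inequality for $b\in\ol{\partial}_s F_\LL^{s\vLambda}(\vsigma,\vxi)(s)$. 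This establishes the claimed inclusion.

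I do not expect a genuine obstacle here: the argument is purely one-sided (supergradient containment) and needs only the $s$-independence of $\MC_{\vsigma,\vxi}$ together with the splitting of $\dua{\vpsi}{\vH_0^{s\vLambda}\vpsi}$ into an $s$-independent part plus $s$ times the coupling term. The only points deserving care are checking that splitting explicitly and observing that a single optimizer at coupling $s$ remains admissible at every other $s'$. The reverse inclusion is not claimed and would, in general, require uniqueness of the optimizer.
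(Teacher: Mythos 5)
Your proposal is correct and follows essentially the same route as the paper: both use the optimizer at coupling $s$ as a trial state at an arbitrary coupling $s'$, exploit the linearity of $s\mapsto\dua{\vpsi}{\vH_0^{s\vLambda}\vpsi}$ on the $s$-independent constraint manifold $\MC_{\vsigma,\vxi}$, and read off the supergradient inequality. The only cosmetic difference is that the paper invokes an optimizer at $s'$ to write $F_\LL^{s'\vLambda}(\vsigma,\vxi)$ as an attained value, whereas you bound it directly by the infimum definition, which is marginally leaner but changes nothing of substance.
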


Using \emph{first} the displacement rule, \cref{fllprop}~\eqref{fllprop:item:displacement-rule}, then the Newton--Leibniz rule and \cref{lemma-FLL-subdiff} (with $\vxi = \bm{0}$) gives
\begin{align*} 
    F_\LL^{\vLambda}(\vsigma,\vxi) 
    = \vxi\cdot \vLambda\vsigma + |\vxi|^2  + F_\LL^{0}(\vsigma,\bm{0}) + \int_0^1 \dua{\vpsi_{s\vLambda}}{\vx\cdot\vLambda\vsigma_z\vpsi_{s\vLambda}} \,\dd s,
\end{align*}
where, following \cref{fllprop}~\eqref{fllprop:item:real}, $\vpsi_{s\vLambda} \in \MC_{\vsigma,\bm{0}}$ can always be chosen as a \emph{real-valued} optimizer for $F_\LL^{s\vLambda}(\vsigma,\bm{0})$. This further allows the use of \cref{fllprop}~\eqref{fllprop:item:coupling} for the integrand and we may also employ the result from \cref{lemma:FLL0} for $F_\LL^{0}(\vsigma,\bm{0})$. In summary, we obtain the adiabatic connection for the Levy--Lieb functional:

\begin{theorem}[Adiabatic connection for $F_\LL$]\label{thm:FLLadia} 
The functional $F_\LL^{\vLambda}:[-1,1]^N \times \RR^M \to \RR$ satisfies
    \begin{align*}
    F^{\vLambda}_\mathrm{{LL}}(\vsigma, \vxi) &= M  + |\vxi|^2 - \sum_{n=1} t_n \sqrt{1-\sigma_n^2} + \vxi\cdot \vLambda\vsigma +  G^\vLambda(\vsigma) \quad\text{with} \\
    G^\vLambda(\vsigma) &= \frac{1}{2}|\vLambda\vsigma|^2 -\int_0^1 \left(\frac{1}{2} \|\vLambda\vsigma_z\vpsi_{s\vLambda}\|^2  + \dua{\vt\cdot\vsigma_x\vpsi_{s\vLambda}}{\grad\cdot \vLambda(\vsigma_z-\vsigma)\vpsi_{s\vLambda}}\right)\dd{s} ,
\end{align*}
where $\vpsi_{s\vLambda} \in \MC_{\vsigma,\bm{0}}$ is a real-valued optimizer for $F_\LL^{s\vLambda}(\vsigma,\bm{0})$.
\end{theorem}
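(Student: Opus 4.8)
The plan is to read off \cref{thm:FLLadia} as the endpoint of the chain of identities already assembled before its statement, so that the proof is essentially the concatenation of five previously established facts: the displacement rule \cref{fllprop}~\eqref{fllprop:item:displacement-rule}, the generalized Newton--Leibniz formula \cref{fll-newtonleibniz}, the superdifferential formula \cref{lemma-FLL-subdiff}, the coupling identity \cref{fllprop}~\eqref{fllprop:item:coupling}, and the zero-coupling evaluation \cref{lemma:FLL0}. I would organize everything around the one-parameter family $s\mapsto F_\LL^{s\vLambda}(\vsigma,\bm{0})$ on $s\in[0,1]$, which is concave in $s$ for fixed $\vsigma$, so that passing from $s=0$ to $s=1$ realizes the adiabatic connection in the coupling.

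First I would apply the displacement rule at full coupling to strip off the displacement dependence, writing $F_\LL^\vLambda(\vsigma,\vxi)=F_\LL^\vLambda(\vsigma,\bm{0})+\vxi\cdot\vLambda\vsigma+|\vxi|^2$, which isolates the genuinely nontrivial object $F_\LL^\vLambda(\vsigma,\bm{0})$. Next I would run the adiabatic connection on the coupling itself: \cref{fll-newtonleibniz} applied to $s\mapsto F_\LL^{s\vLambda}(\vsigma,\bm{0})$ together with \cref{lemma-FLL-subdiff} yields $F_\LL^\vLambda(\vsigma,\bm{0})-F_\LL^0(\vsigma,\bm{0})=\int_0^1\dua{\vpsi_{s\vLambda}}{\vx\cdot\vLambda\vsigma_z\vpsi_{s\vLambda}}\,\dd s$, where for each $s$ the state $\vpsi_{s\vLambda}$ is any optimizer of $F_\LL^{s\vLambda}(\vsigma,\bm{0})$, chosen real-valued by \cref{fllprop}~\eqref{fllprop:item:real}. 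Combined with the displacement rule this is exactly the intermediate display preceding the theorem.

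The remaining work is to rewrite the integrand and the $s=0$ endpoint into closed form. For the endpoint I would substitute \cref{lemma:FLL0} at $\vxi=\bm{0}$, giving $F_\LL^0(\vsigma,\bm{0})=M-\sum_n t_n\sqrt{1-\sigma_n^2}$. For the integrand I would invoke \cref{fllprop}~\eqref{fllprop:item:coupling}, which trades $\dua{\vpsi_{s\vLambda}}{\vx\cdot\vLambda\vsigma_z\vpsi_{s\vLambda}}$ for the combination of $\|\vLambda\vsigma_z\vpsi_{s\vLambda}\|^2$, $|\vLambda\vsigma|^2$, and the cross term $\dua{\vt\cdot\vsigma_x\vpsi_{s\vLambda}}{\grad\cdot\vLambda(\vsigma_z-\vsigma)\vpsi_{s\vLambda}}$; integrating in $s$ and collecting the $M$, $|\vxi|^2$, $\sqrt{1-\sigma_n^2}$ and $\vxi\cdot\vLambda\vsigma$ contributions separately then assembles the stated closed form with $G^\vLambda(\vsigma)$.

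The step I expect to be the main obstacle is the correct propagation of the coupling strength through the integrand. The identity \cref{fllprop}~\eqref{fllprop:item:coupling} holds for an optimizer of the functional \emph{at the coupling it actually solves}, so for $\vpsi_{s\vLambda}$ every occurrence of the coupling matrix there must be read as $s\vLambda$; one then has to track carefully how the resulting powers of $s$ attach to the terms $\|\vLambda\vsigma_z\vpsi_{s\vLambda}\|^2$ and $|\vLambda\vsigma|^2$ while the $\vt\cdot\vsigma_x$ cross term carries a different weight, before integrating against $\dd s$. This bookkeeping is precisely what fixes the numerical coefficients appearing in $G^\vLambda$, and it is the place where an error is easiest to make. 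A secondary point requiring care is the legitimacy of the integral: since $\vpsi_{s\vLambda}$ need be neither unique nor continuous in $s$, I would rely on the assertion accompanying \cref{fll-newtonleibniz} that the value of the integral is independent of the admissible selection from the superdifferential, which simultaneously secures measurability and well-definedness of $\int_0^1\dua{\vpsi_{s\vLambda}}{\vx\cdot\vLambda\vsigma_z\vpsi_{s\vLambda}}\,\dd s$ irrespective of how the real-valued optimizers are chosen.
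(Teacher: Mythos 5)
Your skeleton coincides with the paper's own derivation: \cref{thm:FLLadia} is not proven in \cref{sec:proofs} at all, but is obtained immediately before its statement by exactly the concatenation you describe --- displacement rule, Newton--Leibniz formula \eqref{fll-newtonleibniz} together with \cref{lemma-FLL-subdiff}, then \cref{lemma:FLL0} for the endpoint and \cref{fllprop}~\eqref{fllprop:item:coupling} for the integrand. So the route is the intended one; the problem lies in its execution.

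The genuine gap sits at the step you yourself single out as the main obstacle and then settle by assertion rather than computation. You correctly insist that, since $\vpsi_{s\vLambda}$ optimizes $F_\LL^{s\vLambda}(\vsigma,\bm{0})$, every occurrence of the coupling in \cref{fllprop}~\eqref{fllprop:item:coupling} must be read as $s\vLambda$. Doing so (with $\vxi=\bm{0}$) gives
\begin{equation*}
s\,\dua{\vpsi_{s\vLambda}}{\vx\cdot\vLambda\vsigma_z\vpsi_{s\vLambda}}
=-\frac{s^2}{2}\|\vLambda\vsigma_z\vpsi_{s\vLambda}\|^2
+\frac{s^2}{2}|\vLambda\vsigma|^2
-s\,\dua{\vt\cdot\vsigma_x\vpsi_{s\vLambda}}{\grad\cdot\vLambda(\vsigma_z-\vsigma)\vpsi_{s\vLambda}},
\end{equation*}
so the Newton--Leibniz integrand becomes
\begin{equation*}
\dua{\vpsi_{s\vLambda}}{\vx\cdot\vLambda\vsigma_z\vpsi_{s\vLambda}}
=\frac{s}{2}\left(|\vLambda\vsigma|^2-\|\vLambda\vsigma_z\vpsi_{s\vLambda}\|^2\right)
-\dua{\vt\cdot\vsigma_x\vpsi_{s\vLambda}}{\grad\cdot\vLambda(\vsigma_z-\vsigma)\vpsi_{s\vLambda}},
\end{equation*}
and integration yields
\begin{equation*}
G^{\vLambda}(\vsigma)=\int_0^1\left[\frac{s}{2}\left(|\vLambda\vsigma|^2-\|\vLambda\vsigma_z\vpsi_{s\vLambda}\|^2\right)
-\dua{\vt\cdot\vsigma_x\vpsi_{s\vLambda}}{\grad\cdot\vLambda(\vsigma_z-\vsigma)\vpsi_{s\vLambda}}\right]\dd{s}.
\end{equation*}
The weight $s$ stays attached to both terms quadratic in the coupling (so in particular $\tfrac{1}{4}|\vLambda\vsigma|^2$ after integration), whereas the stated $G^\vLambda$ carries the prefactor $\tfrac{1}{2}|\vLambda\vsigma|^2$ and an unweighted $\tfrac{1}{2}\|\vLambda\vsigma_z\vpsi_{s\vLambda}\|^2$ under the integral; those coefficients are what one obtains by applying \eqref{fllprop:item:coupling} verbatim at full coupling $\vLambda$, i.e.\ by exactly the misreading you warn against. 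The two expressions are not equal: they differ by $\int_0^1\tfrac{1-s}{2}\left(|\vLambda\vsigma|^2-\|\vLambda\vsigma_z\vpsi_{s\vLambda}\|^2\right)\dd{s}$, which already for $N=M=1$ equals $-\tfrac{1}{4}\lambda^2(1-\sigma^2)\neq 0$ on $\sigma\in(-1,1)$, independently of $\vt$ and of the choice of optimizers, because there $\|\sigma_z\psi\|^2=\|\psi\|^2=1$ identically. A check against the decoupled case (two displaced oscillators, reached by continuity in $\vt$) confirms that the $s$-weighted expression is the correct one: it reproduces $F_\LL^{\lambda}(\sigma,0)-F_\LL^{0}(\sigma,0)=-\tfrac{1}{4}\lambda^2(1-\sigma^2)$, while the printed coefficients give $-\tfrac{1}{2}\lambda^2(1-\sigma^2)$. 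Consequently your plan, executed as you prescribe, does not ``assemble the stated closed form'': it proves an adiabatic-connection formula with different ($s$-weighted) coefficients, and thereby exposes an inconsistency between the theorem as printed and the lemmas it is derived from. Your claim that the careful bookkeeping ``fixes the numerical coefficients appearing in $G^\vLambda$'' is precisely the missing step, and it is false --- either you keep the $s$-weights and prove the corrected formula, or you drop them without justification and merely reproduce the printed one.
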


Note here that $G^\vLambda(\vsigma)$ is independent of $\vxi$. It is defined in analogy to the exchange-correlation functional of standard DFT as the difference between the interacting and the non-interacting ($\vLambda =\bm{0}$) Levy--Lieb functionals, minus the direct coupling term $\vxi\cdot \vLambda\vsigma$.

To conclude the pure-state formulation of QEDFT, we repeat the ground-state energy at coupling strength $\vLambda$,
\begin{equation*}
 E^{\vLambda}(\vv,\vj) =  \inf_{(\vsigma,\vxi)\in[-1,1]^N\times\RR^M} 
\Big[ F_\LL^{\vLambda}(\vsigma,\vxi) + \vv\cdot \vsigma + \vj\cdot \vxi \Big],
\end{equation*}
where $F_\LL^{\vLambda}$ can be determined from~\cref{thm:FLLadia}. The critical, unknown term is $G^\vLambda(\vsigma)$, which is the functional that all approximations in DFT aim at.

\subsection{Lieb functional}\label{flsec}

In contrast to the Levy--Lieb functional, the Lieb functional~\cite{Lieb1983} is the constrained-search over \emph{mixed} states, represented by density matrices.
To us, a density matrix is a self-adjoint, positive and trace-class operator $\vGamma : \HC\to\HC$, normalized to unit trace. We denote 
the integral kernel of $\vGamma$ with the same symbol: $\vGamma(\vx,\vx')$ which is a square-integrable $\RR^M\times\RR^M\to \CC^{2^N\times 2^N}$ function.
Moreover, we define
$$
\vsigma_\vGamma= \Tr\vsigma_z\vGamma=(\Tr \sigma_z^1\vGamma,\ldots, \Tr \sigma_z^N\vGamma)\in[-1,1]^N,
$$
and with the spin-summed density matrix $\sps{\Gamma}=\Tr_{\CC^{2^N}} \vGamma$ we set
$$
\vxi_\vGamma=\int_{\RR^M} \vx \sps{\Gamma}(\vx,\vx)\,\dd\vx.
$$
For notational convenience we introduce the following subset of density matrices,
$$
\DC(Q_0)=\{ \vGamma\in\mathfrak{S}_1(Q_0) : 0\le\vGamma=\vGamma^\dag, \, \Tr\vGamma=1,\, \vH_0\vGamma\in \mathfrak{S}_1(\HC) \},
$$
or, more explicitly,
\begin{equation}\label{eq:Gamma-set}
\begin{aligned}
\DC(Q_0)=\Bigg\{ \vGamma\in\mathfrak{S}_1(Q_0) &: \vGamma=\sum_{j=1}^\infty c_j \dyad{\vpsi_j}, \; c_j\in\RR_+, \;\sum_{j=1}^\infty c_j=1,\\
&\vpsi_j\in Q_0\;\text{are}\;L^2\text{-orthonormal},\\
&\sum_{j=1}^\infty c_j \left(\|\grad\vpsi_j\|^2 + \|\vx\vpsi_j\|^2\right) < \infty \Bigg\}.
\end{aligned}
\end{equation}
Here, $\mathfrak{S}_p$ denotes the $p$-Schatten class and we denote by $\mathfrak{S}_\infty$ the compact operators.
Clearly, for $\vGamma\in\DC(Q_0)$ the quantity $\vxi_\vGamma$ is finite since
\begin{equation}\label{eq:xi-finite}
|\vxi_\vGamma|\le \sum_{j=1}^\infty c_j \int_{\RR^M} |\vx||\vpsi_j(\vx)|^2\, \dd\vx \le \frac{1}{2}\sum_{j=1}^\infty c_j \left(\|\vpsi_j\|^2 +  \|\vx\vpsi_j\|^2\right)<\infty,
\end{equation}
by the Cauchy--Schwarz inequality.

The calculation \cref{fllderiv} can be repeated using mixed states as well,
\begin{align}
E(\vv,\vj)&=\inf_{\vGamma\in\DC(Q_0)} \Tr(\vH(\vv,\vj)\vGamma) \label{eq:FLprop:Gamma-inf}\\
&=\inf_{(\vsigma,\vxi)\in[-1,1]^N\times\RR^M} 
\Bigg[ \inf_{\substack{\vGamma \in\DC(Q_0) \nonumber\\ \vsigma_\vGamma = \vsigma\\ \vxi_\vGamma=\vxi }} \Tr(\vH(\vv,\vj)\vGamma) \Bigg]\\
&=\inf_{(\vsigma,\vxi)\in[-1,1]^N\times\RR^M} 
\Bigg[ \inf_{\substack{\vGamma \in \DC(Q_0) \nonumber\\ \vsigma_\vGamma = \vsigma\\ \vxi_\vGamma=\vxi }} \Tr(\vH_0\vGamma)  + \Tr(\vv\cdot\vsigma_z\vGamma) 
+ \Tr(\vj\cdot\vx\vGamma)\Bigg] \nonumber\\
&= \inf_{(\vsigma,\vxi)\in[-1,1]^N\times\RR^M} 
\Big[ F_\LI(\vsigma,\vxi) + \vv\cdot \vsigma + \vj\cdot \vxi \Big].
\label{eq:FLprop:E-inf}
\end{align}
Here, we introduced the \emph{Lieb (universal density) functional} $F_\LI : \RR^N\times\RR^M\to\RR$ via
$$
F_\LI(\vsigma,\vxi)=\inf_{\substack{\vGamma\in\DC(Q_0)\\ \vsigma_\vGamma=\vsigma\\ \vxi_\vGamma=\vxi}} \Tr(\vH_0\vGamma)
$$
for all $\vsigma\in[-1,1]^N$ and $\vxi\in\RR^M$, and $F_\LI(\vsigma,\vxi)\equiv +\infty$ otherwise. 
Since $\vGamma \mapsto (\vsigma_\vGamma, \vxi_\vGamma)$ is linear, it is immediate from the definition by an infimum that $F_\LI$ is convex. 
An optimizer $\vGamma$ in \eqref{eq:FLprop:Gamma-inf} would be called a ground-state ensemble for the Hamiltonian $\vH(\vv,\vj)$.
By choosing $\vGamma=\dyad{\vpsi} $ 
in the preceding infimum, where $\vpsi$ is an optimizer for $F_\LL(\vsigma,\vxi)$, we obtain
$$
F_\LI(\vsigma,\vxi)\le F_\LL(\vsigma,\vxi),
$$
so that $F_\LI(\vsigma,\vxi)<+\infty$ for $(\vsigma,\vxi)\in[-1,1]^N\times\RR^M$ from the result before. Moreover, for such $(\vsigma,\vxi)$ we also have
$-\infty < F_\LI(\vsigma,\vxi)$ as the following basic result shows. 

\begin{theorem}[Existence of an optimizer for $F_\LI$]\label{flexsist}
For every $(\vsigma,\vxi)\in [-1,1]^N\times \RR^M$
there exists $\vGamma\in\DC(Q_0)$ such that $\vsigma_\vGamma=\vsigma$, $\vxi_\vGamma=\vxi$ and  $F_\LI(\vsigma,\vxi) = \Tr(\vH_0\vGamma)$.
\end{theorem}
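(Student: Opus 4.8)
The plan is to run the direct method of the calculus of variations over density matrices, leaning on the trapping by the harmonic part $-\lapl+|\vx|^2$ exactly as in the proof of \cref{fllexist}. Since $\vH_0\ge C$ (see \cref{sec:Ham}), every admissible $\vGamma$ obeys $\Tr(\vH_0\vGamma)\ge C$, while $F_\LI(\vsigma,\vxi)\le F_\LL(\vsigma,\vxi)<+\infty$; hence $F_\LI(\vsigma,\vxi)$ is finite and I may choose a minimizing sequence $\vGamma_k\in\DC(Q_0)$ with $\vsigma_{\vGamma_k}=\vsigma$, $\vxi_{\vGamma_k}=\vxi$ and $\Tr(\vH_0\vGamma_k)\to F_\LI(\vsigma,\vxi)$. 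Rewriting $\vH_0$ in the shifted form from \cref{sec:Ham} turns the convergence of the energies into the single \emph{a priori} estimate $\Tr\big((-\lapl+|\vx|^2)\vGamma_k\big)\le C'$, which carries the whole argument.

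First I would extract a limit. As $\{\vGamma_k\}$ is bounded in $\mathfrak{S}_1(\HC)=\mathfrak{S}_\infty(\HC)^*$, a subsequence converges weakly-$*$, $\vGamma_k\wconv\vGamma$, with $\vGamma\ge 0$ and $\Tr\vGamma\le 1$. Because $-\lapl+|\vx|^2$ has compact resolvent, let $P_n$ be its finite-rank spectral projection below the $n$-th eigenvalue $\lambda_n\to\infty$. The energy bound yields the tightness estimate $\Tr\big((\iden-P_n)\vGamma_k\big)\le \lambda_{n+1}^{-1}\Tr\big((-\lapl+|\vx|^2)\vGamma_k\big)\le C'/\lambda_{n+1}$ uniformly in $k$, while testing against the compact operator $P_n$ gives $\Tr(P_n\vGamma_k)\to\Tr(P_n\vGamma)$. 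Combining these and letting $n\to\infty$ forces $\Tr\vGamma=1$, so no mass escapes to infinity; for positive trace-class operators, weak-$*$ convergence together with convergence of the traces upgrades to $\mathfrak{S}_1$-norm convergence $\vGamma_k\to\vGamma$.

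Next I would pass the constraints and the objective to the limit. Each $\sigma_z^n$ is bounded, so $\big|\Tr\big(\sigma_z^n(\vGamma_k-\vGamma)\big)\big|\le\|\vGamma_k-\vGamma\|_{\mathfrak{S}_1}\to 0$ and $\vsigma_\vGamma=\vsigma$. For the displacement, $\mathfrak{S}_1$-convergence gives $\sps{\Gamma}_k(\vx,\vx)\to\sps{\Gamma}(\vx,\vx)$ in $L^1(\RR^M)$, and the same energy bound provides equi-tightness of the tails through $\int_{|\vx|>R}|\vx|\,\sps{\Gamma}_k(\vx,\vx)\,\dd\vx\le R^{-1}\Tr\big(|\vx|^2\vGamma_k\big)\le C'/R$; splitting $\vxi_{\vGamma_k}$ at $|\vx|=R$ then yields $\vxi_{\vGamma_k}\to\vxi_\vGamma$, so $\vxi_\vGamma=\vxi$ and $\vGamma$ is admissible. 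Finally, since $\vH_0-C\ge 0$ has compact resolvent it is an increasing limit of finite-rank bounded operators, so $\vGamma\mapsto\Tr\big((\vH_0-C)\vGamma\big)$ is a supremum of the weak-$*$ continuous maps $\vGamma\mapsto\Tr(B\vGamma)$ and hence weak-$*$ lower semicontinuous; therefore $\Tr(\vH_0\vGamma)\le\liminf_k\Tr(\vH_0\vGamma_k)=F_\LI(\vsigma,\vxi)$. In particular the internal energy of $\vGamma$ is finite, placing $\vGamma$ in $\DC(Q_0)$ as described by \eqref{eq:Gamma-set}; since $\vGamma$ meets both constraints the reverse inequality holds by definition, and $\Tr(\vH_0\vGamma)=F_\LI(\vsigma,\vxi)$.

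The crux, and the step I expect to demand the most care, is obtaining convergence strong enough to preserve the constraints: weak-$*$ compactness alone controls neither $\Tr(\sigma_z^n\vGamma)$ (a bounded but non-compact observable) nor $\Tr(\vx\vGamma)$ (an unbounded one). The uniform bound $\Tr\big((-\lapl+|\vx|^2)\vGamma_k\big)\le C'$ does both jobs at once: it rules out loss of mass, upgrading weak-$*$ to $\mathfrak{S}_1$ convergence, and it simultaneously supplies the tail estimate that makes the displacement sequentially continuous. The two auxiliary facts invoked, namely the $\mathfrak{S}_1$-to-$L^1$ continuity of $\vGamma\mapsto\sps{\Gamma}(\vx,\vx)$ and the ``weak-$*$ plus equal traces implies norm convergence'' principle for positive operators, are standard in the trace-ideal literature and would be the only points needing a careful citation.
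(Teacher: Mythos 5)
Your proof is correct, but it takes a genuinely different route from the paper's. The paper follows Lieb's Theorem~4.4 in \cite{Lieb1983}: it conjugates the minimizing sequence by $\vh=\sqrt{\vH_0+C}$, applies Banach--Alaoglu to $\vT_j=\vh\vGamma_j\vh$, defines the candidate optimizer as $\vGamma=\vh^{-1}\vT\vh^{-1}$, and passes \emph{every} constraint to the limit by testing $\vT_j$ against the compact operators $\vh^{-2}$, $\vh^{-1}\sigma_z^n\vh^{-1}$ and $\vh^{-1}x_m\iden_{\CC^{2^N}}\vh^{-1}$ --- the compact resolvent of $\vH_0$ supplies all the compactness, so no tightness argument, no norm-convergence upgrade, and no external trace-ideal theorems are needed, and the limit automatically has finite energy. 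You instead take the weak-$*$ limit of the bare sequence $\vGamma_k$ and do the work on the density-matrix side: tightness via the spectral projections of $-\lapl+|\vx|^2$ to exclude loss of mass, the dell'Antonio/Gr\"umm-type theorem (weak-$*$ convergence plus convergence of traces implies $\mathfrak{S}_1$-norm convergence for positive operators) to upgrade the convergence, $\mathfrak{S}_1\to L^1$ continuity of the density map plus a uniform tail estimate to pass the displacement constraint, and a monotone finite-rank approximation for lower semicontinuity of the energy. Both arguments draw on exactly the same source of compactness (the trapping potential); yours trades the paper's short algebraic conjugation for a concentration-compactness-style argument that needs two standard but nontrivial citations, and in exchange delivers strictly stronger conclusions --- trace-norm convergence of the minimizing subsequence and $L^1$ convergence of the densities --- which the paper's proof does not provide and which could be reused elsewhere (e.g.\ for continuity properties of $F_\LI$). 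One detail you should make explicit: in the displacement step you also need the tail bound $\int_{|\vx|>R}|\vx|\,\sps{\Gamma}(\vx,\vx)\,\dd\vx\le C'/R$ for the \emph{limit} $\vGamma$, not only for the $\vGamma_k$; it follows from Fatou's lemma along an a.e.-convergent subsequence of the densities (or from the same weak-$*$ lower-semicontinuity argument applied to $|\vx|^2$), so it is routine, but as written the splitting argument silently uses it.
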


In the next theorem we collect the general convex-analytic properties of $F_\LI$ that carry over from the standard DFT setting to our context.
We use some well-known results from convex analysis, see e.g.\ \cite{niculescu2006convex}.

\begin{theorem}[Convex-analytic properties of $F_\LI$]\label{flconvprop}
For the Lieb functional $F_\LI: \RR^N\times\RR^M\to\RR$, the following properties hold true.
\begin{enumerate}[(i)]
\item\label{flconvprop:item:lsc} $F_\LI$ is lower semicontinuous, i.e., if $\vsigma_j\to\vsigma$ in $\RR^N$ and $\vxi_j\to\vxi$ in $\RR^M$, then $\liminf_{j\to\infty} F_\LI(\vsigma_j,\vxi_j)\ge F_\LI(\vsigma,\vxi)$.
\item\label{flconvprop:item:convex} $F_\LI$ is the convex envelope of $F_\LL$ and as such $F_\LI\le F_\LL$. Moreover, $F_\LI$ is locally Lipschitz and hence a.e.\ differentiable in $(-1,1)^N\times\RR^M$. 
\item\label{flconvprop:item:subdiff} The subdifferential of $F_\LI$ reads
\begin{align*}
\underline{\partial} F_\LI(\vsigma,\vxi)=\{ & (-\vv,-\vj)\in\RR^{N}\times\RR^{M} : \Tr (\vH(\vv,\vj)\vGamma)=E(\vv,\vj)\\
&\text{for some}\;\vGamma\in\DC(Q_0)\;\text{with}\;\vsigma_\vGamma=\vsigma,\;\vxi_\vGamma=\vxi\}.
\end{align*}
We have that $\underline{\partial} F_\LI(\vsigma,\vxi)\neq \emptyset$ for all $(\vsigma,\vxi)\in(-1,1)^N\times\RR^M$.
\item\label{flconvprop:item:Legendre} $F_\LI$ is the Legendre transform of $E$,
\begin{equation*}
    F_\LI(\vsigma,\vxi) = \sup_{(\vv,\vj)\in \mathbb R^N \times \mathbb R^M} \Big[ E(\vv,\vj) - \vv\cdot \vsigma - \vj\cdot \vxi \Big].
\end{equation*}
\end{enumerate}
\end{theorem}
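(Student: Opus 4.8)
The plan is to deduce all four items from the single structural fact that $F_\LI$ is a \emph{proper, convex, lower semicontinuous} function on $\RR^N\times\RR^M$; once this is in hand, items \eqref{flconvprop:item:convex}--\eqref{flconvprop:item:Legendre} become consequences of Fenchel--Moreau duality. Convexity is already recorded before the statement (the map $\vGamma\mapsto(\vsigma_\vGamma,\vxi_\vGamma)$ is linear and $F_\LI$ is an infimum over the convex set $\DC(Q_0)$). Properness on $(-1,1)^N\times\RR^M$ is equally quick: the upper bound $F_\LI\le F_\LL<+\infty$ together with the lower bound $F_\LI>-\infty$ supplied by \cref{flexsist} makes $F_\LI$ finite there, while $F_\LI\equiv+\infty$ off $[-1,1]^N\times\RR^M$ by definition. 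The genuine work is therefore item \eqref{flconvprop:item:lsc}, lower semicontinuity.

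For lower semicontinuity I would argue by the direct method, reusing the compactness already exploited in \cref{fllexist} and \cref{flexsist}. Let $(\vsigma_j,\vxi_j)\to(\vsigma,\vxi)$ and set $\ell=\liminf_j F_\LI(\vsigma_j,\vxi_j)$; passing to a subsequence we may assume $F_\LI(\vsigma_j,\vxi_j)\to\ell$, and we may take $\ell<+\infty$, so the points are eventually feasible. By \cref{flexsist} choose optimizers $\vGamma_j\in\DC(Q_0)$ with $\vsigma_{\vGamma_j}=\vsigma_j$, $\vxi_{\vGamma_j}=\vxi_j$ and $\Tr(\vH_0\vGamma_j)=F_\LI(\vsigma_j,\vxi_j)$. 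The energies are bounded, and since $Q_0$ is compactly embedded in $\HC$ (so $\vH_0$ has compact resolvent) the energy sublevel set is compact; hence a subsequence converges, $\vGamma_j\to\vGamma$ in trace norm, with $\vGamma\in\DC(Q_0)$. The density maps $\vGamma\mapsto\vsigma_\vGamma$ and $\vGamma\mapsto\vxi_\vGamma$ are continuous along this sequence---for $\vxi_\vGamma$ this uses that $\|\vx\vpsi\|^2$ is controlled by $\dua{\vpsi}{\vH_0\vpsi}$ through the trapping term $|\vx|^2$, which prevents any escape of mass to infinity---so $\vsigma_\vGamma=\vsigma$ and $\vxi_\vGamma=\vxi$. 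Weak lower semicontinuity of the energy then gives $\Tr(\vH_0\vGamma)\le\liminf_j\Tr(\vH_0\vGamma_j)=\ell$, whence $F_\LI(\vsigma,\vxi)\le\Tr(\vH_0\vGamma)\le\ell$. This is the main obstacle and the point where the harmonic trap in $\vH_0$ is essential: without it the displacement constraint could be lost in the limit.

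With $F_\LI$ proper, convex and lsc, Fenchel--Moreau yields $F_\LI=F_\LI^{**}$. From \eqref{eq:FLprop:E-inf} the conjugate is $F_\LI^*(-\vv,-\vj)=-E(\vv,\vj)$, and inserting this into $F_\LI^{**}(\vsigma,\vxi)=\sup_{(\vv,\vj)}\bigl[-\vv\cdot\vsigma-\vj\cdot\vxi-F_\LI^*(-\vv,-\vj)\bigr]$ reproduces exactly the Legendre formula of item \eqref{flconvprop:item:Legendre}. For the convex-envelope claim in \eqref{flconvprop:item:convex}, note that $E$ admits the \emph{same} dual representation \eqref{fllderiv} in terms of $F_\LL$ as it does \eqref{eq:FLprop:E-inf} in terms of $F_\LI$; hence $F_\LI^*=F_\LL^*$, so $F_\LI^{**}=F_\LL^{**}$, which is the largest convex lsc minorant of $F_\LL$. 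Since $F_\LI$ is already convex and lsc with $F_\LI\le F_\LL$, it coincides with this convex envelope. Finiteness on the open convex set $(-1,1)^N\times\RR^M$ then forces $F_\LI$ to be locally Lipschitz there, and a.e.\ differentiability follows from Rademacher's theorem (or directly from convexity).

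It remains to read off the subdifferential in item \eqref{flconvprop:item:subdiff}. By the Fenchel equality, $(-\vv,-\vj)\in\underline{\partial}F_\LI(\vsigma,\vxi)$ holds precisely when $F_\LI(\vsigma,\vxi)+F_\LI^*(-\vv,-\vj)=-\vv\cdot\vsigma-\vj\cdot\vxi$, i.e.\ when $E(\vv,\vj)=F_\LI(\vsigma,\vxi)+\vv\cdot\vsigma+\vj\cdot\vxi$. Taking the optimizer $\vGamma$ of \cref{flexsist} for $F_\LI(\vsigma,\vxi)$ gives $\Tr(\vH(\vv,\vj)\vGamma)=\Tr(\vH_0\vGamma)+\vv\cdot\vsigma+\vj\cdot\vxi=F_\LI(\vsigma,\vxi)+\vv\cdot\vsigma+\vj\cdot\vxi$, so the Fenchel equality is equivalent to $\vGamma$ being a ground-state ensemble for $\vH(\vv,\vj)$, which is the stated description. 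Nonemptiness of $\underline{\partial}F_\LI(\vsigma,\vxi)$ on $(-1,1)^N\times\RR^M$ is then the standard fact that a convex function finite on an open set possesses a subgradient at every interior point of its domain.
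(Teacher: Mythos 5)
Your proposal is correct, and for three of the four items it follows essentially the paper's own route: your direct-method argument for lower semicontinuity mirrors the compactness proof the paper gives for \cref{flexsist} (and cites for item \eqref{flconvprop:item:lsc}), your Fenchel-equality characterization of the subdifferential is the paper's argument for item \eqref{flconvprop:item:subdiff} (the paper likewise leaves nonemptiness to the standard interior-point fact for convex functions), and your Fenchel--Moreau biconjugation is exactly the paper's item \eqref{flconvprop:item:Legendre}. The genuine divergence is item \eqref{flconvprop:item:convex}. The paper proves the convex-envelope identity by a \emph{primal} argument: every $\vGamma\in\DC(Q_0)$ is decomposed into a countable convex combination of pure-state projectors, the infimum is split into an outer infimum over weights and densities $(\vsigma_j,\vxi_j)$ and an inner infimum over $\MC_{\vsigma_j,\vxi_j}$, giving directly $F_\LI(\vsigma,\vxi)=\inf\sum_j c_j F_\LL(\vsigma_j,\vxi_j)$; this is self-contained, does not use item \eqref{flconvprop:item:lsc}, and exhibits the minimizing convex combinations explicitly. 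You instead argue \emph{dually}: since $E$ has the same conjugate representation through $F_\LL$ (\cref{fllderiv}) as through $F_\LI$ \eqref{eq:FLprop:E-inf}, one gets $F_\LL^*=F_\LI^*=-E$, hence $F_\LI=F_\LI^{**}=F_\LL^{**}$. This is shorter and cleaner, but it buys slightly less and costs slightly more: it requires lower semicontinuity of $F_\LI$ as an input (so your item \eqref{flconvprop:item:convex} depends on item \eqref{flconvprop:item:lsc}, unlike the paper's), and it identifies $F_\LI$ with the \emph{closed} convex envelope $F_\LL^{**}$ rather than with the convex hull over convex combinations; these can a priori differ on the boundary faces $|\sigma_n|=1$ of the cube, which the paper's primal computation handles uniformly. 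Within the informal wording of the theorem both readings are acceptable, so this is a stylistic rather than substantive gap. One further small remark: your claim that the energy sublevel set of $\DC(Q_0)$ is trace-norm compact is correct (uniform tail bounds from the discrete spectrum of $\vH_0+C$ plus weak-$*$ compactness and convergence of traces), but it is asserted rather than proved; the paper's \cref{flexsist} route via testing $\vT_j=\vh\vGamma_j\vh$ against compact operators accomplishes the same with less machinery.
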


We call $(\vsigma,\vxi)$ \emph{ensemble $v$-representable} if there exist $\vv\in\RR^N$ and $\vj\in\RR^M$ that lead to a ground-state ensemble $\vGamma\in\DC(Q_0)$ (i.e., $\Tr \vH(\vv,\vj)\vGamma=E(\vv,\vj)$) that has $\vsigma_\vGamma=\vsigma$ 
and $\vxi_\vGamma=\vxi$.
According to \eqref{flconvprop:item:subdiff} above, every $(\vsigma,\vxi)\in(-1,1)^N\times\RR^M$ is then \emph{ensemble} $v$-representable. 
The Hohenberg--Kohn theorem implies that $\underline{\partial} F_\LI(\vsigma,\vxi)$ is a singleton, so we obtain that 
$F_\LI$ is differentiable in $\mathcal{R}_N\times\RR^M$.

The analogue of \cref{fllprop}~\eqref{fllprop:item:displacement-rule}-\eqref{fllprop:item:virial} also holds for $F_\LI$. 

\begin{theorem}[Properties of $F_\LI$]\label{flprop}
For any $(\vsigma,\vxi)\in[-1,1]^N\times\RR^M$, we have the following properties.
\begin{enumerate}[(i)]
\item\label{flprop:item:displacement} (Displacement rule) For any $\vzeta\in\RR^M$ the formula
$$
F_\LI(\vsigma,\vxi+\vzeta)=F_\LI(\vsigma,\vxi) + 2\vzeta\cdot\vxi + \vzeta \cdot \vLambda\vsigma  +  |\vzeta|^2 
$$
holds. In particular,
$$
F_\LI(\vsigma,\vxi)=F_\LI(\vsigma,\bm{0}) + \vxi \cdot \vLambda\vsigma  +  |\vxi|^2.
$$
\item\label{flprop:item:real} There is a real-valued optimizer $\vGamma$ of $F_\LI(\vsigma,\vxi)$, where real-valuedness is to be understood in the sense of kernels:
$\vGamma(\vx,\vx')=\ol{\vGamma(\vx,\vx')}$.
\item\label{flprop:item:virial} (Virial relation) For any optimizer $\vGamma$ of $F_\LI(\vsigma,\bm{0})$ the relation
$$
\Tr ((-\lapl - |\vx|^2)\vGamma)=\frac{1}{2}\Tr\left(\vx\cdot \vLambda\vsigma_z\vGamma\right)
$$
holds true.
\item\label{flprop:item:no-momentum} (Zero momentum) For any optimizer $\vGamma$ of $F_\LI(\vsigma,\vxi)$ there holds $\langle -\i\grad \rangle_{\vGamma}=0$.
\end{enumerate}
\end{theorem}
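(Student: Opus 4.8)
The plan is to establish all four parts by the device already used for the pure-state functional in \cref{fllprop}: exhibit explicit unitary transformations of the density matrices that leave the admissible set $\DC(Q_0)$ invariant, act in a controlled way on the constraint values $(\vsigma_\vGamma,\vxi_\vGamma)$, and shift $\Tr(\vH_0\vGamma)$ by an amount I can compute explicitly. Since $F_\LI$ is an infimum over $\DC(Q_0)$ subject to the two \emph{linear} constraints $\vsigma_\vGamma=\vsigma$ and $\vxi_\vGamma=\vxi$, any unitary mapping one constraint fibre bijectively onto another transports the infimum and produces an exact relation between functional values, while any unitary preserving a fibre produces a variational inequality at an optimizer. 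Throughout, the moment bound $\sum_j c_j(\|\grad\vpsi_j\|^2+\|\vx\vpsi_j\|^2)<\infty$ built into $\DC(Q_0)$ is what guarantees that the transformed operators remain in $\DC(Q_0)$ and that all traces I manipulate are finite.

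For \eqref{flprop:item:displacement} I use the translation $T_\vzeta:(T_\vzeta\vpsi)(\vx)=\vpsi(\vx-\vzeta)$, so that $T_\vzeta^*\vx T_\vzeta=\vx+\vzeta$ while $-\lapl$ and the spin operators are untouched; expanding gives $T_\vzeta^*\vH_0 T_\vzeta=\vH_0+2\vzeta\cdot\vx+\vzeta\cdot\vLambda\vsigma_z+|\vzeta|^2$. The conjugation $\vGamma\mapsto T_\vzeta\vGamma T_\vzeta^*$ preserves $\vsigma_\vGamma$, sends $\vxi_\vGamma$ to $\vxi_\vGamma+\vzeta$, and by cyclicity shifts $\Tr(\vH_0\vGamma)$ by the constant $2\vzeta\cdot\vxi+\vzeta\cdot\vLambda\vsigma+|\vzeta|^2$; as this map is a bijection between the two fibres, the infima differ by exactly this constant, which is the displacement rule. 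For \eqref{flprop:item:real} I use that $\vH_0$ has real coefficients, so complex conjugation of the kernel $\vGamma\mapsto\ol{\vGamma}$ preserves positivity, the trace, membership in $\DC(Q_0)$, both constraints, and the energy; hence $\ol{\vGamma}$ is again an optimizer, and by convexity of the set of optimizers the average $\tfrac12(\vGamma+\ol{\vGamma})$ is a real-valued optimizer.

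For the virial relation \eqref{flprop:item:virial} I employ the $L^2$-dilation $U_\lambda:\vpsi(\vx)\mapsto\lambda^{M/2}\vpsi(\lambda\vx)$, which preserves $\vsigma_\vGamma$ and sends $\vxi_\vGamma$ to $\lambda^{-1}\vxi_\vGamma$; for $\vxi=\bm{0}$ both constraints are preserved, so $\lambda\mapsto U_\lambda\vGamma U_\lambda^*$ is an admissible curve through the optimizer at $\lambda=1$. Computing term by term gives $\Tr(\vH_0 U_\lambda\vGamma U_\lambda^*)=\lambda^2\Tr(-\lapl\vGamma)+\lambda^{-2}\Tr(|\vx|^2\vGamma)+\lambda^{-1}\Tr(\vx\cdot\vLambda\vsigma_z\vGamma)-\Tr(\vt\cdot\vsigma_x\vGamma)$, and minimality forces the $\lambda$-derivative to vanish at $\lambda=1$, which reads precisely $\Tr((-\lapl-|\vx|^2)\vGamma)=\tfrac12\Tr(\vx\cdot\vLambda\vsigma_z\vGamma)$.

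The genuinely new ingredient is the zero-momentum statement \eqref{flprop:item:no-momentum}, for which I use the Galilean boost $B_\vk:\vpsi(\vx)\mapsto e^{\i\vk\cdot\vx}\vpsi(\vx)$. Being a phase, this unitary leaves both $\vsigma_\vGamma$ and $\vxi_\vGamma$ unchanged, so $B_\vk\vGamma B_\vk^*$ is again admissible for the same $(\vsigma,\vxi)$; since every term of $\vH_0$ except $-\lapl$ commutes with multiplication by $e^{\i\vk\cdot\vx}$, a short computation yields $\Tr(\vH_0 B_\vk\vGamma B_\vk^*)=\Tr(\vH_0\vGamma)+|\vk|^2+2\vk\cdot\langle-\i\grad\rangle_\vGamma$. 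Minimality of $\vGamma$ then gives $|\vk|^2+2\vk\cdot\langle-\i\grad\rangle_\vGamma\ge0$ for all $\vk\in\RR^M$; the left-hand side is minimized at $\vk=-\langle-\i\grad\rangle_\vGamma$ with value $-|\langle-\i\grad\rangle_\vGamma|^2$, forcing $\langle-\i\grad\rangle_\vGamma=\bm{0}$, and this argument applies to \emph{every} optimizer, as claimed. The one point requiring care in all four parts is verifying that the conjugated operators stay in $\DC(Q_0)$ with finite first and second moments---this is exactly where the moment bounds in the definition of $\DC(Q_0)$ are invoked---after which the variational arguments in \eqref{flprop:item:virial} and \eqref{flprop:item:no-momentum} are immediate.
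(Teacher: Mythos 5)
Your proposal is correct: all four variational arguments go through, the conjugated density matrices stay in $\DC(Q_0)$ by the moment bounds as you note, and each claimed transformation formula for $\Tr(\vH_0\,\cdot\,)$ checks out. The route differs from the paper's mainly in parts (i) and (iv). The paper packages both in the ladder-operator formalism: it introduces the multimode displacement operator $D(\vzeta)$ with \emph{complex} argument and derives the single transformation formula \cref{displham}, from which the displacement rule follows by taking $\vzeta$ real (plus two-sided trial-state inequalities) and the zero-momentum statement by taking $\vzeta=\i s\ve_m$ purely imaginary --- note that $D(\i s \ve_m)=e^{\i s x_m}$ is exactly your boost $B_{s\ve_m}$, so the underlying unitaries coincide. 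You instead treat the two parts separately: the real-space translation $T_\vzeta$ for (i) is precisely the alternative the paper explicitly mentions (``a proof like for \cref{fllprop}~\eqref{fllprop:item:displacement-rule} is possible here'') but chooses not to present, and your fibre-bijection phrasing is a clean substitute for the paper's pair of trial-state inequalities; for (iv) you complete the square in $\vk$ rather than differentiate at $s=0$, which has the minor advantage of using only the exact quadratic dependence and no differentiability. In (iii) the paper scales each coordinate separately via $\vGamma_\vM(\vx,\vx')=(\det\vM)\vGamma(\vM\vx,\vM\vx')$, $\vM=\diag(\mu_1,\ldots,\mu_M)$, obtaining per-mode relations that are then summed, while your single isotropic parameter $\lambda$ yields the summed relation directly --- both suffice for the stated claim, the paper's version being slightly stronger (mode-resolved), yours slightly shorter. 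Part (ii) is identical in both treatments.
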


From \eqref{flprop:item:displacement}, we immediately obtain that for fixed $\vsigma\in[-1,1]^N$ the function $\vxi\mapsto F_\LI(\vsigma,\vxi)$ is a quadratic polynomial. The trivial dependence of $F_\LI(\vsigma,\vxi)$ on $\vxi$ implies a direct and simple relation between the external potential $\vj$ and the density pair $(\vsigma,\vxi)$ in the form of a ``force-balance equation''.

\begin{proposition}[Force balance]\label{prop:force-balance}
Let $\vv\in\RR^N$, $\vj\in\RR^M$, and $\vGamma\in\DC(Q_0)$ such that $\Tr \vH(\vv,\vj)\vGamma=E(\vv,\vj)$ (ground-state ensemble). Then it holds
$$\vj + \vLambda\vsigma_\vGamma + 2\vxi_\vGamma = 0.$$
\end{proposition}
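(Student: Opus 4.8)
The plan is to read off the force-balance equation as the \emph{first-order optimality condition in the displacement variable} $\vxi$, exploiting that the $\vxi$-dependence of $F_\LI$ has already been made fully explicit by the displacement rule. First I would record that a ground-state ensemble realizes the outer infimum in the variational formula \eqref{eq:FLprop:E-inf}: combining $\Tr\vH(\vv,\vj)\vGamma=E(\vv,\vj)$ with the linear splitting $\Tr\vH(\vv,\vj)\vGamma=\Tr\vH_0\vGamma+\vv\cdot\vsigma_\vGamma+\vj\cdot\vxi_\vGamma$ and the definition of $F_\LI$ as an infimum yields the sandwich
$$
E(\vv,\vj)=\Tr\vH_0\vGamma+\vv\cdot\vsigma_\vGamma+\vj\cdot\vxi_\vGamma\ge F_\LI(\vsigma_\vGamma,\vxi_\vGamma)+\vv\cdot\vsigma_\vGamma+\vj\cdot\vxi_\vGamma\ge E(\vv,\vj),
$$
so that every inequality is an equality and $(\vsigma_\vGamma,\vxi_\vGamma)$ minimizes $(\vsigma,\vxi)\mapsto F_\LI(\vsigma,\vxi)+\vv\cdot\vsigma+\vj\cdot\vxi$. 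Equivalently, $(-\vv,-\vj)\in\underline{\partial}F_\LI(\vsigma_\vGamma,\vxi_\vGamma)$, cf.\ \cref{flconvprop}~\eqref{flconvprop:item:subdiff}.

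Next I would freeze $\vsigma=\vsigma_\vGamma$ and minimize only over the displacement. The decisive structural point is that $\vxi$ ranges over all of $\RR^M$, so $\vxi_\vGamma$ is an \emph{interior} minimizer and the optimality condition is the unconstrained vanishing of the $\vxi$-gradient (in sharp contrast to the $\vsigma$-variable, which may sit on the boundary of $[-1,1]^N$). By the displacement rule \cref{flprop}~\eqref{flprop:item:displacement}, the map
$$
\vxi\mapsto F_\LI(\vsigma_\vGamma,\vxi)+\vj\cdot\vxi=F_\LI(\vsigma_\vGamma,\bm{0})+\vxi\cdot\vLambda\vsigma_\vGamma+|\vxi|^2+\vj\cdot\vxi
$$
is a smooth, strictly convex quadratic. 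Setting its gradient to zero at $\vxi_\vGamma$ gives $2\vxi_\vGamma+\vLambda\vsigma_\vGamma+\vj=\bm{0}$, which is precisely the asserted force balance.

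Since the displacement rule has already absorbed the entire analytic content, no serious obstacle remains; the only points deserving care are (a) verifying the sandwich above, so that $(\vsigma_\vGamma,\vxi_\vGamma)$ is genuinely an optimizer of the inner problem, and (b) keeping the interior $\vxi$-optimization cleanly separated from the possibly boundary-constrained $\vsigma$-optimization, so that differentiation in $\vxi$ is legitimate and independent of whether $\vsigma_\vGamma$ is regular. In this sense the ``hard part'' is purely bookkeeping.

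As an alternative, purely operator-level derivation that bypasses the displacement rule, one may use a Hellmann--Feynman/commutator identity. Writing $\vP=-\i\grad$, a direct computation on the terms of $\vH(\vv,\vj)$ that involve $\vx$ gives $[\vH(\vv,\vj),P_k]=\i\big(2x_k+j_k+(\vLambda\vsigma_z)_k\big)$. Because a ground-state ensemble is supported on eigenstates sharing the common eigenvalue $E(\vv,\vj)$, cyclicity of the trace forces $\Tr([\vH(\vv,\vj),\vP]\vGamma)=0$; using that $\Tr(x_k\vGamma)$ is the $k$-th component of $\vxi_\vGamma$ and $\Tr((\vLambda\vsigma_z)_k\vGamma)=(\vLambda\vsigma_\vGamma)_k$, and dividing by $\i$, one again recovers $2\vxi_\vGamma+\vLambda\vsigma_\vGamma+\vj=\bm{0}$. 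This route, however, requires justifying the commutator manipulations on the form domain $Q_0$, which is exactly the technical cost the displacement-rule argument avoids; I would therefore present the first approach as the main proof.
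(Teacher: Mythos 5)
Your proof follows essentially the same route as the paper's: both first characterize the ground-state ensemble by $(-\vv,-\vj)\in\underline{\partial}F_\LI(\vsigma_\vGamma,\vxi_\vGamma)$ and then differentiate the explicit quadratic $\vxi$-dependence supplied by the displacement rule, and even your commutator alternative matches the paper's closing remark on the hypervirial theorem. Your sandwich argument and the care about interior (in $\vxi$) versus possibly boundary (in $\vsigma$) optimization merely spell out details the paper leaves implicit.
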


Considering the adiabatic connection, the same Newton-Leibniz formula \eqref{fll-newtonleibniz} as for $F_\LL$ can be stated for $F_\LI$. Yet, the subsequent steps depend on the alteration of the coupling term, \cref{fllprop}~\eqref{fllprop:item:coupling}, which has not been proven for $F_\LI$.

\subsection{The case \texorpdfstring{$N=M=1$}{N=M=1} and unique \texorpdfstring{$v$}{v}-representability}\label{specsec} 

Consider the special case of a single two-level system and one field mode (quantum harmonic oscillator), then 
the model reduces to
\[
H_0 = (-\partial_x^2 + x^2)\iden_{\CC^2} +  \lambda x\sigma_z - t \sigma_x\qquad(t\neq 0)
\]
and is called the \emph{quantum Rabi model}. If we consider as before the Hamiltonian
\[
H(v,j) = H_0 + v\sigma_z + jx
\]
with external potentials $v\in\RR$ and $j\in\RR$, the term $jx$ can be readily absorbed by a shift $x\mapsto x-j/2$ and a constant shift in energy. With the additional $v\sigma_z$ the Hamiltonian amounts to a generalized form of the quantum Rabi model, also called `asymmetric', `driven', or `biased'~\cite{batchelor2015-AQRM,semple2017-AQRM}. An important finding for the discussion here is that the ground state of this model is always strictly positive and non-degenerate~\cite{Hirokawa2014,Nguyen2024}. In \cref{fllprop_N=M=1} below we summarize several properties 
for the Levy--Lieb functional $F_\LL$ that have been stated before or hold additionally to the ones from \cref{fllprop} for this reduced case. 
Moreover, $v$-representability is proven in \cref{uniquevrep} and the mapping from external potentials $(v,j)\in\RR^2$ to the regular set $(\sigma,\xi)\in(-1,1)\times\RR$ is even a bijection, thus implying a Hohenberg--Kohn result. Additionally, this allows to conclude in \cref{prop:F-diff-N=M=1} that the Levy--Lieb and Lieb functionals actually coincide and that they are differentiable on $(-1,1)\times\RR$.

\begin{theorem}\label{fllprop_N=M=1}
For any $(\sigma,\xi)\in [-1,1]\times\RR$, the following properties hold true for $F_\LL$.
\begin{enumerate}[(i)]
\item\label{item:sign-flip} $F_\LL(-\sigma,-\xi)=F_\LL(\sigma,\xi)$
\item\label{item:LL-prop-non-neg} There is a real-valued and non-negative optimizer of $F_\LL(\sigma,\xi)$.
\item\label{item:N=1,M=1,VR} For an optimizer $\psi$ of $F_\LL(\sigma,\xi)$ the following ``virial relation'' holds true
$$
\int \left( |\psi'|^2 - x^2|\psi|^2 \right)\dd x = \lambda \int x |\psi^+|^2 \, \dd x - \xi.
$$
\item\label{item:N=1,M=1,1stmoment} For an optimizer $\psi$ of $F_\LL(\sigma,\xi)$ there holds
$$
\int x|\psi^+|^2\,\dd x = -\int x|\psi^-|^2\,\dd x + \xi = - t \int (\psi^+)' \psi^- - \frac{\lambda(1-\sigma^2)}{4} + \frac{\xi(1+\sigma)}{2}
$$
and 
$$
\frac{1+\sigma^2}{4t}\ge \int (\psi^+)''\psi^-.
$$
\item\label{item:decoupling-FLL-explicit} For zero coupling, $\lambda=0$, there holds
$$
F_\LL(\sigma,\xi) = 1+\xi^2-t\sqrt{1-\sigma^2}.
$$
\end{enumerate}
\end{theorem}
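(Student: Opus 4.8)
The plan is to dispatch the five items in increasing order of difficulty, leaning throughout on the already-established structure of $F_\LL$. Two items are essentially immediate. Item~\eqref{item:decoupling-FLL-explicit} is nothing but the decoupled case $\vLambda=\bm0$ of \cref{lemma:FLL0} specialised to $N=M=1$, which returns $F_\LL(\sigma,\xi)=1+\xi^2-t\sqrt{1-\sigma^2}$ directly. For the reflection symmetry~\eqref{item:sign-flip} I would exhibit a unitary that commutes with $H_0$ while flipping both density variables: take $U=P\otimes\sigma_x$, where $P$ is the spatial parity $(Pf)(x)=f(-x)$. A one-line check gives $U^{-1}H_0U=H_0$ (parity sends $x\mapsto-x$ and conjugation by $\sigma_x$ sends $\sigma_z\mapsto-\sigma_z$, so the coupling $\lambda x\sigma_z$ is invariant while $-\partial_x^2$, $x^2$ and $-t\sigma_x$ are each preserved), together with $U^{-1}\sigma_zU=-\sigma_z$ and $U^{-1}xU=-x$. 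Hence $U$ maps $\MC_{\sigma,\xi}$ bijectively onto $\MC_{-\sigma,-\xi}$ and preserves $\dua{\psi}{H_0\psi}$, and taking infima yields~\eqref{item:sign-flip}.

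For the non-negative optimiser in~\eqref{item:LL-prop-non-neg} I would start from a real-valued optimiser $\psi=(\psi^+,\psi^-)$, which exists by \cref{fllprop}~\eqref{fllprop:item:real}, and pass to the componentwise modulus $\tilde\psi=(|\psi^+|,|\psi^-|)\in Q_0$. Since $|\psi^\pm|^2$ is unchanged, the constraints $\sigma_{\tilde\psi}=\sigma$, $\xi_{\tilde\psi}=\xi$, the harmonic term $\|x\tilde\psi\|^2$ and the magnetisation coupling are all preserved, and because $\||f|'\|=\|f'\|$ for real $f\in H^1$ the kinetic energy is preserved too. The only term that can move is the off-diagonal $-2t\int\psi^+\psi^-$, and $\int|\psi^+||\psi^-|\ge\int\psi^+\psi^-$ shows—taking $t>0$, which is no loss of generality since conjugation by $\sigma_z$ flips the sign of $t$ and leaves $\sigma,\xi$ untouched—that $\dua{\tilde\psi}{H_0\tilde\psi}\le\dua{\psi}{H_0\psi}=F_\LL(\sigma,\xi)$. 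As $\tilde\psi\in\MC_{\sigma,\xi}$ the reverse inequality is automatic, so $\tilde\psi$ is the desired non-negative optimiser.

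The virial relation~\eqref{item:N=1,M=1,VR} I would obtain from a scaling argument: dilate an optimiser by $\psi_s(x)=s^{1/2}\psi(sx)$, note $\psi_s\in\MC_{\sigma,\xi/s}$ with $\|\psi_s'\|^2=s^2\|\psi'\|^2$, $\|x\psi_s\|^2=s^{-2}\|x\psi\|^2$, $\dua{\psi_s}{x\sigma_z\psi_s}=s^{-1}\dua{\psi}{x\sigma_z\psi}$, and the $\sigma_x$ term $s$-independent. Since $\dua{\psi_s}{H_0\psi_s}\ge F_\LL(\sigma,\xi/s)$ with equality at $s=1$, differentiating at $s=1$ and evaluating the $s$-derivative of $F_\LL(\sigma,\xi/s)$ through the displacement rule \cref{fllprop}~\eqref{fllprop:item:displacement-rule} gives the stationarity identity, which after the trivial split $\int x|\psi^-|^2=\xi-\int x|\psi^+|^2$ rearranges into the stated relation. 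The substantive equality in~\eqref{item:N=1,M=1,1stmoment} I would read off from the coupling identity \cref{fllprop}~\eqref{fllprop:item:coupling} at $N=M=1$: writing $P=\int(\psi^+)'\psi^-$ and using $\int(\psi^+\psi^-)'=0$ to get $\int\psi^+(\psi^-)'=-P$, its right-hand side collapses to $\xi\lambda\sigma-\tfrac12\lambda^2(1-\sigma^2)-2t\lambda P$; dividing by $\lambda$ and converting $\int x(|\psi^+|^2-|\psi^-|^2)$ into $\int x|\psi^+|^2$ produces the claim, with $\lambda=0$ covered by \cref{lemma:FLL0}. The first equality there is just $\int x|\psi^+|^2+\int x|\psi^-|^2=\xi$.

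The genuinely delicate point, and the one I expect to be the \emph{main obstacle}, is the inequality $\tfrac{1+\sigma^2}{4t}\ge\int(\psi^+)''\psi^-$ in~\eqref{item:N=1,M=1,1stmoment}. After integration by parts it reads $\int(\psi^+)'(\psi^-)'\ge-\tfrac{1+\sigma^2}{4t}$, and it cannot be extracted from the Euler--Lagrange equation~\eqref{elschr} alone: multiplying the two coupled component equations by $\psi^\mp$ and combining only yields the consistency identity $\int(\psi^+)''\psi^-=\int(x^2+jx-E)\psi^+\psi^--\tfrac t2$, which still contains an uncontrolled weighted overlap. The natural extra input is the second-order optimality condition~\eqref{seccond}: I would feed it a test vector built from the componentwise derivative $\psi'$, corrected by a combination of $\psi$, $\sigma_z\psi$ and $x\psi$ so that it lands in the tangent space $\TC_\psi(\MC_{\sigma,\xi})$ (one checks $\psi'$ itself satisfies every tangency condition except $\dua{x\psi}{\psi'}=0$), and then expand $\dua{\chi}{(H(v,j)-E)\chi}\ge0$. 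Matching the precise constant $\tfrac{1+\sigma^2}{4t}$—where the normalisations $\|\psi^\pm\|^2=\tfrac{1\pm\sigma}2$, the non-negativity from~\eqref{item:LL-prop-non-neg}, and the force-balance relation of \cref{prop:force-balance} are the likely ingredients—is where the real work lies.
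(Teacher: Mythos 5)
Items \eqref{item:sign-flip}, \eqref{item:LL-prop-non-neg} and \eqref{item:decoupling-FLL-explicit} of your proposal are correct and essentially the paper's own arguments: your unitary $U=P\otimes\sigma_x$ is the paper's substitution $\wt{\psi}(x)=(\psi^-(-x),\psi^+(-x))^\top$, and your componentwise modulus $(|\psi^+|,|\psi^-|)$ is pointwise identical to the paper's level-set sign-flipping construction, with the $H^1$-fact $\||f|'\|=\|f'\|$ standing in for the paper's polarization-identity bookkeeping (your explicit reduction to $t>0$ by $\sigma_z$-conjugation is a point the paper leaves implicit). For the equality in \eqref{item:N=1,M=1,1stmoment} you take a genuinely different, legitimate route: specializing \cref{fllprop}~\eqref{fllprop:item:coupling} — your collapse of its right-hand side to $\xi\lambda\sigma-\tfrac12\lambda^2(1-\sigma^2)-2t\lambda P$ with $P=\int(\psi^+)'\psi^-$ is correct and does rearrange to the claim — whereas the paper differentiates $\langle H_0\rangle_{\psi_s}$ at $s=0$ along the constraint-preserving family $\psi_s=\big(\psi^+(x-(1-\sigma)s),\,\psi^-(x+(1+\sigma)s)\big)^\top$; your route costs you a separate $\lambda=0$ case, the paper's does not.

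The first genuine gap is the inequality in \eqref{item:N=1,M=1,1stmoment}, which you explicitly leave unfinished. The missing idea is that the \emph{same} two-sided shift family $\psi_s$ finishes it: $\psi_s\in\MC_{\sigma,\xi}$ for every $s$ (no regularity of $\sigma$ needed, unlike your route through \cref{elthm} and \cref{seccond}), so $\frac{\dd^2}{\dd s^2}\big|_{s=0}\langle H_0\rangle_{\psi_s}\ge 0$. Only the $x^2$- and $\sigma_x$-terms contribute; substituting $y=x-(1-\sigma)s$ turns the overlap into $-2t\int\psi^+(y)\psi^-(y+2s)\,\dd y$, and one obtains
\begin{equation*}
0\le 2(1-\sigma^2)-8t\int(\psi^+)''\psi^-,
\end{equation*}
i.e.\ the claim with the sharper constant $(1-\sigma^2)/(4t)$, from which the printed $(1+\sigma^2)/(4t)$ follows a fortiori for $t>0$ (the paper's own write-up carries a slip, $1+\sigma^2$ instead of $1-\sigma^2$, in the $s^2$-coefficient of $\langle x^2\rangle_{\psi_s}$). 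Note that the velocity of this curve is $(\sigma-\sigma_z)\psi'$ — a \emph{multiplicative}, not additive, correction of $\psi'$, and it does lie in $\TC_\psi(\MC_{\sigma,\xi})$; your proposed $\psi'+a\psi+b\sigma_z\psi+c\,x\psi$ would instead produce an inequality whose constants involve $\|x\psi\|^2$ and moments of $\psi$, not the stated one. So the strategy you sketch is the right kind of second-order argument, but the decisive choice of variation is missing.

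The second gap is in \eqref{item:N=1,M=1,VR}: your final assertion that the stationarity identity ``rearranges into the stated relation'' is false, and no rearrangement can repair it. Your setup is actually more careful than the paper's (the paper sets $\frac{\dd}{\dd\mu}\langle H_0\rangle_{\psi_\mu}\big|_{\mu=1}=0$ even though $\psi_\mu\notin\MC_{\sigma,\xi}$ for $\mu\neq1$, which is exactly the error you avoid). But carrying your procedure out, with $E(\mu)=\mu^2\|\psi'\|^2+\mu^{-2}\|x\psi\|^2+\mu^{-1}\lambda W-2t\Re\dua{\psi^+}{\psi^-}$, $W=2\int x|\psi^+|^2\,\dd x-\xi$, and $F_\LL(\sigma,\xi/\mu)=F_\LL(\sigma,0)+\lambda\sigma\xi/\mu+\xi^2/\mu^2$, the condition $E'(1)=\frac{\dd}{\dd\mu}F_\LL(\sigma,\xi/\mu)\big|_{\mu=1}$ yields
\begin{equation*}
\int\left(|\psi'|^2-x^2|\psi|^2\right)\dd x=\lambda\int x|\psi^+|^2\,\dd x-\frac{\lambda\xi(1+\sigma)}{2}-\xi^2,
\end{equation*}
not the printed right-hand side $\lambda\int x|\psi^+|^2\,\dd x-\xi$. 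The printed identity is in fact false for $\xi\neq 0$: at $\lambda=0$ the optimizer is the shifted Gaussian of \cref{lemma:FLL0}, for which the left side equals $-\xi^2$ while the printed right side equals $-\xi$. So your method is the correct one, but its honest conclusion is the corrected identity above (which reduces to the printed one, and to \cref{fllprop}~\eqref{fllprop:item:virial}, only at $\xi=0$); claiming the printed formula as the outcome is a step that fails.
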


\begin{figure}[ht]
    \includegraphics[scale=.75]{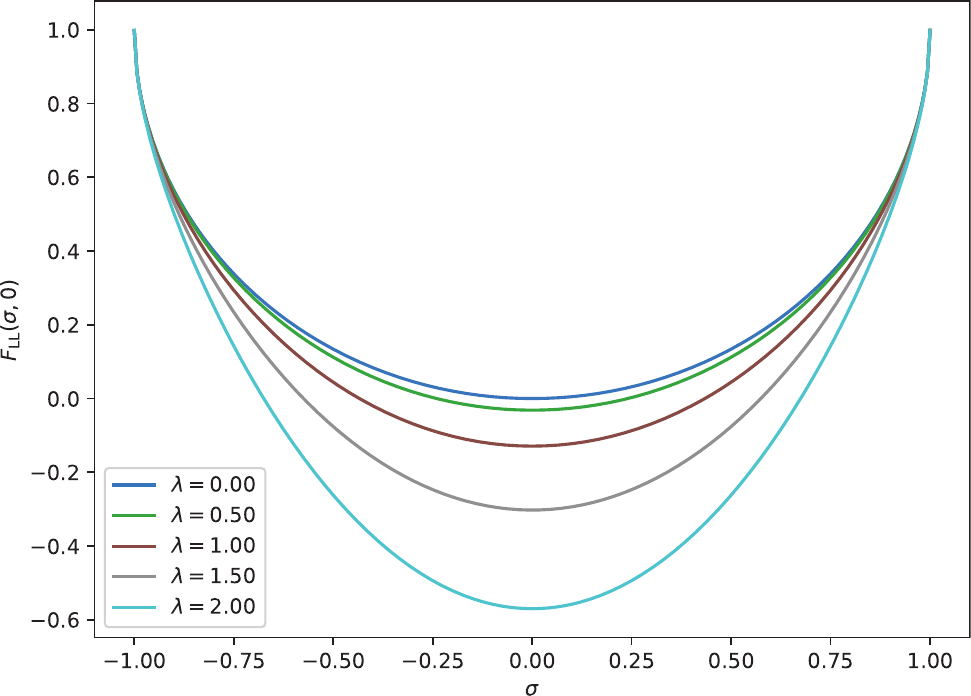}
    \centering
    \caption{The universal density functional $F_\LL(\sigma,0) = F_\LI(\sigma,0)$ in the reduced setting ($N=M=1$) with $t=1$ and for different values of $\lambda$.}
    \label{fig:QRabi-F}
\end{figure}

We continue by discussing the Euler--Lagrange equation of the constrained optimization problem in analogy to \cref{elthm}.

\begin{theorem}[Optimality]\label{thrm:EulerLagrange}
Let $\psi\in\MC_{\sigma,\xi}$ be an optimizer of $F_\LL(\sigma,\xi)$. If $\sigma\neq\pm 1$, there exist unique Lagrange multipliers $E,v,j\in\RR$ such that $\psi$ satisfies the strong Schr\"odinger equation
\begin{equation}\label{eleqs}
H(v,j)\psi=E\psi
\end{equation}
and the second-order condition
\begin{equation*}
\dua{\chi}{H(v,j)\chi}\ge E\|\chi\|^2
\end{equation*}
for all $\chi\in\TC_\psi(\MC_{\sigma,\xi})$. Moreover, $\psi$
has internal energy $\dua{\psi}{H_0\psi}=E - v\sigma - j\xi$.

If $\sigma=+ 1$, then there exists an $n\in\mathbb{N}_0$, such that $\wt{\psi}^+(x)=\psi^+(x+\xi)$ ($\psi^-\equiv 0$)
satisfies the strong Schr\"odinger equation for the harmonic oscillator instead,
\begin{equation}\label{eleqsharm}
-(\wt{\psi}^+)'' + x^2 \wt{\psi}^+ = (2n+1) \wt{\psi}^+,
\end{equation}
and one has $\dua{\psi}{H_0\psi}=2n+1 + \lambda\xi + \xi^2$. If $\sigma=-1$, 
the same result with $\psi^+$ and $\psi^-$ interchanged follows.
\end{theorem}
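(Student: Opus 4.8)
The plan is to split along the two regimes exactly as in the statement. For $\sigma\neq\pm 1$, observe that with $N=1$ regularity of $\sigma$ is equivalent to $\sigma\in(-1,1)$, so $\sigma\in\RC_1$ and \cref{elthm} applies verbatim: it furnishes multipliers $E,v,j$, the Schr\"odinger equation $H(v,j)\psi=E\psi$, the second-order condition on $\TC_\psi(\MC_{\sigma,\xi})$, and the internal-energy identity $\dua{\psi}{H_0\psi}=E-v\sigma-j\xi$. The only genuinely new assertion is uniqueness of the triple $(E,v,j)$. Since by \cref{tangthm}~\eqref{tangthm:item:cotangent-space} every element of $\TC_\psi(\MC_{\sigma,\xi})^\perp$ is of the form $(E+v\sigma_z+jx)\psi$, uniqueness is equivalent to the linear independence of $\{\psi,\sigma_z\psi,x\psi\}$ in $\HC$. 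I would prove this directly: if $a\psi+b\sigma_z\psi+cx\psi=0$, then reading off the two spinor components gives $(a+b+cx)\psi^+=0$ and $(a-b+cx)\psi^-=0$. Because $\sigma\in(-1,1)$, both $\psi^+$ and $\psi^-$ are nonzero (\cref{ex:psi-vanish}), hence each affine function $x\mapsto a\pm b+cx$ vanishes on a set of positive measure and must therefore be identically zero; this forces $c=0$ and $a=b=0$.

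For $\sigma=+1$ the tangent-space machinery of \cref{tangthm} is unavailable, since it requires $\sigma\in\RC_1$, and indeed the constraint degenerates. The first step is to note that $\|\psi\|=1$ together with $\sigma_\psi=1$ forces $\psi^-\equiv 0$ (\cref{ex:psi-vanish}), so the constraint set collapses to $\MC_{1,\xi}=\{(\psi^+,0):\|\psi^+\|=1,\ \int x|\psi^+|^2\,\dd x=\xi\}$, and any admissible variation keeps the $\psi^-$ component zero to first order (moving it off zero lowers $\sigma_\psi$ at second order). Consequently the internal energy reduces to the scalar functional $\dua{\psi^+}{(-\partial_x^2+x^2+\lambda x)\psi^+}$, the cross term $-t\sigma_x$ contributing nothing on the diagonal. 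I would then run a Lagrange-multiplier argument for this scalar problem: the two constraint functionals $\psi^+\mapsto\|\psi^+\|^2$ and $\psi^+\mapsto\int x|\psi^+|^2\,\dd x$ have gradients $2\psi^+$ and $2x\psi^+$, which are linearly independent (as $\psi^+\neq 0$ and $x$ is nonconstant), so multipliers $E$ and $\nu$ exist and yield the Euler--Lagrange equation $-(\psi^+)''+x^2\psi^++(\lambda-\nu)x\psi^+=E\psi^+$.

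Setting $\kappa:=\lambda-\nu$ and completing the square turns this into $-(\psi^+)''+(x+\kappa/2)^2\psi^+=(E+\kappa^2/4)\psi^+$, a harmonic oscillator centered at $-\kappa/2$; hence $\psi^+$ is a Hermite eigenfunction shifted by $-\kappa/2$ and $E+\kappa^2/4=2n+1$ for some $n\in\mathbb{N}_0$. The displacement constraint then pins the shift: a centered Hermite function has vanishing first moment, so $\int x|\psi^+|^2\,\dd x=-\kappa/2$, forcing $\kappa=-2\xi$, and $\wt\psi^+(x)=\psi^+(x+\xi)$ is exactly the centered eigenfunction solving $-(\wt\psi^+)''+x^2\wt\psi^+=(2n+1)\wt\psi^+$, as claimed. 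For the energy I would insert $\kappa=-2\xi$ after testing the Euler--Lagrange equation against $\psi^+$, giving $\dua{\psi}{H_0\psi}=E+(\lambda-\kappa)\xi=(2n+1-\xi^2)+(\lambda+2\xi)\xi=2n+1+\lambda\xi+\xi^2$; alternatively this matches the displacement rule \cref{fllprop}~\eqref{fllprop:item:displacement-rule}. The case $\sigma=-1$ follows immediately from the sign-flip symmetry \cref{fllprop_N=M=1}~\eqref{item:sign-flip}, or by repeating the argument with $\psi^+$ and $\psi^-$ interchanged.

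The main obstacle is the boundary regime $\sigma=\pm 1$: one must recognize that the magnetization constraint is no longer a smooth codimension-one condition but forces an entire spinor component to vanish, so the proof cannot borrow the tangent-space description of \cref{tangthm} and instead has to set up the variational problem directly on the surviving scalar component. Once that reduction is made, identifying the resulting operator as a shifted harmonic oscillator and using the moment constraint to fix the translation by exactly $\xi$ is the crux; the regular-case uniqueness, by contrast, is a short linear-independence computation resting on the fact that a nonzero affine function has only finitely many zeros.
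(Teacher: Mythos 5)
Your proposal is correct and follows essentially the same route as the paper: the regular case is delegated to \cref{elthm} with uniqueness of the multipliers settled by linear independence of $\psi$, $\sigma_z\psi$, $x\psi$, and the irregular case reduces to a scalar constrained problem for $\psi^+$ whose Euler--Lagrange equation, after completing the square and using the displacement constraint to fix the shift, is a translated harmonic oscillator. The only differences are presentational—you spell out the affine-function argument for linear independence and the energy computation, which the paper leaves implicit—so there is nothing to correct.
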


We have omitted the discussion of the optimality conditions for irregular $\sigma$'s in the general case, but now in the $N=M=1$ case, we see that we get a ``degenerate'' equation \cref{eleqsharm}, which may be viewed as \cref{eleqs} in the decoupling limit $t\to 0$.

Due to the aforementioned spectral properties of the quantum Rabi model, we can say much more about the optimizers than in the general case.

\begin{theorem}[unique pure-state $v$-representability]
\label{uniquevrep}
The following properties hold true.
\begin{enumerate}[(i)]
\item\label{uniquevrep:item:inner} (Regular case) If $\sigma\neq\pm 1$ then for every $(\sigma,\xi)$ there exists a unique $(v,j)$ and $\psi\in\MC_{\sigma,\xi}$ strictly positive that is the (unique) ground state of
$H(v,j)\psi=E(v,j)\psi$. Moreover, this $\psi\in \MC_{\sigma,\xi}$ is the (unique) optimizer of $F_\LL(\sigma,\xi)$.
In other words, every $(\sigma,\xi)$ pair is uniquely pure-state $v$-representable for $\sigma\neq\pm 1$.
\item\label{uniquevrep:item:endpoints}(Irregular case) If, however, $\sigma=\pm 1$ then $(\sigma,\xi)$ is not $v$-representable.
\end{enumerate}
\end{theorem}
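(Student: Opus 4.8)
The plan is to leverage the two cited spectral facts about the biased quantum Rabi model $H(v,j)$---that its ground state is \emph{non-degenerate} and \emph{strictly positive}---in order to upgrade the general ``low-lying eigenstate'' conclusion of \cref{aufbauthm} to a sharp ground-state statement. I first dispose of the irregular case \eqref{uniquevrep:item:endpoints}. Any ground state $\phi_0$ of $H(v,j)$ is strictly positive, so both spinor components $\phi_0^+,\phi_0^-$ are nonzero and hence $\sigma_{\phi_0}=\|\phi_0^+\|^2-\|\phi_0^-\|^2\in(-1,1)$ strictly. Moreover, by non-degeneracy the only ground-state ensemble is the pure state $\dyad{\phi_0}$, which carries the same magnetization. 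Thus no choice of $(v,j)$ can yield a ground state (pure or ensemble) with $\sigma=\pm1$, and such pairs are not $v$-representable.

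For the regular case \eqref{uniquevrep:item:inner}, fix $(\sigma,\xi)\in(-1,1)\times\RR$. The first step produces the potential. By \cref{fllprop_N=M=1}~\eqref{item:LL-prop-non-neg} there is a non-negative optimizer $\psi$ of $F_\LL(\sigma,\xi)$, and since $\sigma\neq\pm1$, \cref{thrm:EulerLagrange} supplies unique Lagrange multipliers $E,v,j\in\RR$ with $H(v,j)\psi=E\psi$ and $F_\LL(\sigma,\xi)=E-v\sigma-j\xi$. The decisive observation is that a non-negative eigenstate must be the ground state: the strictly positive ground state $\phi_0$ obeys
\[
\dua{\phi_0}{\psi}=\int_\RR\bigl(\phi_0^+\psi^+ + \phi_0^-\psi^-\bigr)\dd x>0,
\]
since each integrand is non-negative and the sum cannot vanish because $\phi_0>0$ and $\psi\not\equiv0$. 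Hence $\psi$ is not orthogonal to $\phi_0$; being an eigenstate, it must then lie in the one-dimensional ground eigenspace, so $\psi=\phi_0$ is strictly positive, $E=E(v,j)$, and $(\sigma,\xi)=(\sigma_\psi,\xi_\psi)$ is realised as the ground-state density of $H(v,j)$. This is exactly the claimed pure-state $v$-representability.

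It then remains to gather the uniqueness statements. Uniqueness of $(v,j)$ is immediate from the Hohenberg--Kohn theorem \cref{hkthm} in the regular case $\sigma\in(-1,1)=\RC_1$, which determines both $v$ and $j$. For uniqueness of the optimizer, let $\psi'\in\MC_{\sigma,\xi}$ be \emph{any} optimizer. Using $\sigma_{\psi'}=\sigma$, $\xi_{\psi'}=\xi$ together with the identity $E(v,j)=F_\LL(\sigma,\xi)+v\sigma+j\xi$ just obtained, one computes
\[
\dua{\psi'}{H(v,j)\psi'}=\dua{\psi'}{H_0\psi'}+v\sigma+j\xi=F_\LL(\sigma,\xi)+v\sigma+j\xi=E(v,j),
\]
so $\psi'$ attains the ground-state energy and is therefore itself a ground state of $H(v,j)$; by non-degeneracy $\psi'=e^{\i\theta}\psi$, i.e.\ the optimizer is unique up to global phase. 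I expect the genuinely delicate point to be the positivity argument of the second paragraph---the passage from a merely non-negative optimizer to the strictly positive, non-degenerate ground state. This is precisely where the special $N=M=1$ spectral input is indispensable, since the general theory only guarantees a low-lying eigenstate and cannot by itself exclude excited-state optimizers.
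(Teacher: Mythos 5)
Your proposal is correct and takes essentially the same route as the paper: existence of a non-negative optimizer, the Euler--Lagrange equation of \cref{thrm:EulerLagrange}, and the strict positivity plus non-degeneracy of the biased Rabi ground state force the optimizer to coincide with that ground state, after which the variational principle gives uniqueness. The only cosmetic deviations are that the paper first upgrades the optimizer to a.e.\ positivity via unique continuation before the orthogonality/sign-change argument, treats $\sigma=\pm 1$ structurally through the coupling term $-t\sigma_x$ (any eigenstate with a vanishing spinor component vanishes identically) rather than through ground-state positivity, and obtains uniqueness of $(v,j)$ from uniqueness of the Lagrange multipliers rather than from the Hohenberg--Kohn theorem.
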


From this, we immediately obtain that the pure-state and the mixed-state constrained-search functionals coincide.

\begin{proposition}\label{prop:F-diff-N=M=1}
There holds $F_\LI=F_\LL$ on $[-1,1]\times\RR$ and the functional is differentiable on the regular set $(-1,1)\times\RR$.
\end{proposition}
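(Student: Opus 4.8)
The plan is to use that $F_\LI$ is the convex envelope of $F_\LL$, so $F_\LI\le F_\LL$ holds on all of $[-1,1]\times\RR$ by \cref{flconvprop}~\eqref{flconvprop:item:convex}; the whole content is the reverse inequality (plus differentiability). For a regular point $(\sigma,\xi)\in(-1,1)\times\RR$ I would pick $(-v,-j)\in\underline{\partial}F_\LI(\sigma,\xi)$, which is nonempty and, by \cref{flconvprop}~\eqref{flconvprop:item:subdiff}, provides a ground-state ensemble $\vGamma\in\DC(Q_0)$ with $\vsigma_\vGamma=\sigma$, $\vxi_\vGamma=\xi$ and $\Tr(H(v,j)\vGamma)=E(v,j)$. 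Then $\Tr(H_0\vGamma)=E(v,j)-v\sigma-j\xi\le F_\LI(\sigma,\xi)$ by the Legendre transform \cref{flconvprop}~\eqref{flconvprop:item:Legendre}, while feasibility of $\vGamma$ gives the opposite bound, so $\vGamma$ is an $F_\LI$-optimizer. Now the spectral input behind \cref{uniquevrep}~\eqref{uniquevrep:item:inner}---that the biased quantum Rabi Hamiltonian $H(v,j)$ has a unique, strictly positive, nondegenerate ground state $\psi$---forces $\vGamma=\dyad{\psi}$. Hence $F_\LI(\sigma,\xi)=\dua{\psi}{H_0\psi}$ with $\psi\in\MC_{\sigma,\xi}$, so $F_\LI(\sigma,\xi)\ge F_\LL(\sigma,\xi)$ and equality follows on the regular set.

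The endpoints $\sigma=\pm1$ require a separate argument, since by \cref{uniquevrep}~\eqref{uniquevrep:item:endpoints} they are not $v$-representable for the full Hamiltonian. Here I would use that $\sigma=+1$ kills the lower spinor component: for a pure normalized state this is \cref{ex:psi-vanish}, and for $\vGamma\in\DC(Q_0)$ it follows from $\Tr((\iden-\sigma_z)\vGamma)=0$ together with $\iden-\sigma_z\ge0$ and $\vGamma\ge0$, which pins $\vGamma$ to the $+1$ eigenspace of $\sigma_z$. On such states $H_0$ restricts to the shifted harmonic oscillator $h=-\partial_x^2+x^2+\lambda x$ on $L^2(\RR)$, and both the pure- and mixed-state searches collapse to minimizing $\dua{\psi}{h\psi}$, respectively $\Tr(h\vGamma)$, under only the displacement constraint fixing $\xi$. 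Since $h+jx$ is again a shifted harmonic oscillator with a nondegenerate ground state whose displacement runs through all of $\RR$ as $j$ varies, the very same nondegeneracy argument as above shows these two reduced infima agree, giving $F_\LI(1,\xi)=F_\LL(1,\xi)$; the mirrored reduction handles $\sigma=-1$.

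For differentiability on $(-1,1)\times\RR$, I would combine convexity with uniqueness. By \cref{flconvprop}~\eqref{flconvprop:item:convex} the functional $F_\LI$ is convex and locally Lipschitz, hence finite and continuous near every regular point. By \cref{uniquevrep}~\eqref{uniquevrep:item:inner} such a point is uniquely $v$-representable, so the subdifferential set in \cref{flconvprop}~\eqref{flconvprop:item:subdiff} collapses to the singleton $\{(-v,-j)\}$. A finite convex function on $\RR^2$ whose subdifferential at a point is a singleton is differentiable there; thus $F_\LI$, and with it $F_\LL=F_\LI$, is differentiable on the regular set $(-1,1)\times\RR$.

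The main obstacle I anticipate is the endpoint case: one has to rule out that admitting mixed states lowers the energy strictly below the pure-state value precisely where $v$-representability fails. The resolution is that this failure is a feature of the \emph{full} Hamiltonian only---its ground state is always strictly positive, so $\sigma=\pm1$ can never be a full ground-state magnetization---whereas after projecting onto the surviving spinor component the reduced operator $h+jx$ is a genuine, nondegenerate harmonic oscillator for which every displacement is realized by a pure ground state. Making this projection step rigorous (and checking that it respects the form domain $Q_0$) is the one place where care is needed.
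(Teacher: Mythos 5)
Your proof is correct, and while it reaches the same two conclusions, parts of it follow a genuinely different route from the paper. On the regular set the difference is mostly one of direction: the paper starts from the pure side, taking the unique positive optimizer $\psi$ of $F_\LL(\sigma,\xi)$ (a ground state of some $H(v,j)$ by \cref{uniquevrep}~\eqref{uniquevrep:item:inner}), decomposing an arbitrary feasible ensemble $\vGamma=\sum_j c_j\dyad{\psi_j}$ and applying the variational principle termwise to get $F_\LL(\sigma,\xi)+\sigma v+\xi j\le \Tr (H(v,j)\vGamma)$, hence $F_\LL\le F_\LI$; you instead start from the mixed side, using nonemptiness of $\underline{\partial}F_\LI(\sigma,\xi)$ to produce an ensemble optimizer and then collapsing it to a pure state via nondegeneracy of the Rabi ground state. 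Both rest on the same spectral input and are essentially dual bookkeeping. The real divergence is at the endpoints $\sigma=\pm1$: the paper stays entirely within convex analysis, normalizing $\xi$ away by the displacement rule (\cref{flprop}~\eqref{flprop:item:displacement}) and then noting that $\pm1$ is an extreme point of $[-1,1]$, so in the convex-envelope representation of $F_\LI$ no nontrivial convex combination of magnetizations is possible and $F_\LI(\pm1,\cdot)=F_\LL(\pm1,\cdot)$ follows at once, with no spectral information needed. Your alternative—pinning $\vGamma$ to the relevant $\sigma_z$-eigenspace via $\Tr((\iden\mp\sigma_z)\vGamma)=0$ together with $\iden\mp\sigma_z\ge0$, $\vGamma\ge0$, then reducing $H_0$ to the shifted oscillator $h=-\partial_x^2+x^2\pm\lambda x$ and rerunning the variational argument in the explicitly solvable reduced model—is longer but buys more: it exhibits the endpoint optimizers explicitly and shows that equality there comes from $v$-representability of the \emph{reduced} problem, precisely where representability for the full $H(v,j)$ fails. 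One small simplification to your endpoint step: nondegeneracy of $h+jx$ is not actually needed for equality of the two reduced infima; it suffices that $h+jx$ has \emph{some} pure ground state $\phi_j$ with $\langle x\rangle_{\phi_j}=\xi$, since then for every feasible reduced $\Gamma$ one has $\Tr(h\Gamma)=\Tr((h+jx)\Gamma)-j\xi\ge e(j)-j\xi=\langle\phi_j,h\phi_j\rangle$, which already closes the gap. The differentiability argument (singleton subdifferential of a finite convex function on $\RR^2$) coincides with the paper's.
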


That $F_\LL$ is continuous and even differentiable is in stark contrast with standard DFT, where the corresponding functionals are everywhere discontinuous~\cite{Lammert2007}.
However, at $|\sigma|= 1$ the non-$v$-representability according to \cref{uniquevrep}~\eqref{uniquevrep:item:endpoints} implies an empty subdifferential which manifests itself with divergent derivatives of $\sigma\mapsto F_\LL(\sigma,\xi)$ as $|\sigma|\to 1$. We expect an analogous behavior also in the general case of the multi-mode Dicke model.
\Cref{fig:QRabi-F} shows the universal functional $F_\LL(\sigma,0)$ obtained from a numerical calculation for different values of the coupling constant, here denoted $\lambda$.

\section{Proofs} \label{sec:proofs}
The rest of the paper is devoted to proofs.

\subsection{Proofs of \cref{sec:Ham}}

\begin{proof}[Proof of \cref{thm:first-virial}]
Let $\vpsi$ be a ground state of $\vH(\vv,\vj)$, and let 
$$
\psi_{\vmu}^\valpha(\vx)= \mu_\valpha^{M/2} \psi^\valpha(\mu_\valpha \vx)
$$
for some $\mu_\valpha>0$. Then $\|\vpsi_{\vmu}\|=1$ and $\vpsi_{(1,\ldots,1)}=\vpsi$.
The different terms of $\dua{\vpsi_{\vmu}}{\vH(\vv,\vj)\vpsi_{\vmu}}$ read
\begin{align*}
\|\grad\vpsi_{\vmu}\|^2 &=\sum_\valpha \mu_\valpha^2 \|\grad \psi^\valpha\|^2, \\
\|\vx\vpsi_{\vmu}\|^2 &=\sum_\valpha \mu_\valpha^{-2} \|\vx\psi^\valpha\|^2, \\
\dua{\vpsi_{\vmu}}{\vx\cdot\vLambda\vsigma_z\vpsi_{\vmu}}&=\sum_{m=1}^M \sum_{n=1}^N \Lambda_{mn} \sum_{\valpha} \mu_\valpha^{-1} (\sigma_z^n)_{\valpha\valpha} \dua{\psi^\valpha}{x_m\psi^\valpha},\\
\dua{\vpsi_{\vmu}}{\vt\cdot\vsigma_x\vpsi_{\vmu}}&=\sum_{n=1}^N t_n \sum_{\valpha,\vbeta} (\sigma_x^n)_{\valpha\vbeta}
\mu_\valpha^{M/2} \mu_\vbeta^{M/2} \int_{\RR^M} \ol{\psi^\valpha(\mu_\valpha \vx)}\psi^\vbeta(\mu_\vbeta \vx)\,\dd\vx, \\
\dua{\vpsi_{\vmu}}{\vv\cdot\vsigma_z\vpsi_{\vmu}}&=\sum_{n=1}^N v_n \sum_\valpha (\sigma_z^n)_{\valpha\valpha}\|\psi^\valpha\|^2, \\
\dua{\vpsi_{\vmu}}{\vj\cdot\vx\vpsi_{\vmu}}&=\sum_{m=1}^N j_m  \sum_{\valpha}  \mu_\valpha^{-1} \dua{\psi^\valpha}{x_m\psi^\valpha}.
\end{align*}
The optimality condition
$$
0=\frac{\partial}{\partial \mu_\vbeta}\Big|_{\vmu=(1,\ldots,1)} \dua{\vpsi_{\vmu}}{\vH(\vv,\vj)\vpsi_{\vmu}}
$$
yields
\begin{align*}
0&=2\|\grad\psi^\vbeta\|^2 - 2 \|\vx\psi^\vbeta\|^2  - \sum_{n=1}^N \sum_{m=1}^M \Lambda_{mn} (\sigma_z^n)_{\vbeta\vbeta} \int_{\RR^M} x_m|\psi^\vbeta|^2 \,\dd\vx \\
&-M \sum_{n=1}^N\sum_\valpha t_n (\sigma_x^n)_{\valpha\vbeta} \Re \dua{\psi^\valpha}{\psi^\vbeta}\\
&- \sum_{n=1}^N\sum_\valpha t_n (\sigma_x^n)_{\valpha\vbeta} 2\Re \int_{\RR^M} \ol{\psi^\valpha(\vx)} \vx\cdot \grad \psi^\vbeta(\vx) \,\dd\vx\\
&-\vj\cdot\dua{\psi^\beta}{\vx\psi^\beta}.
\end{align*}
Summation over $\vbeta$ and rearranging yields
\begin{align*}
&2\|\grad\vpsi\|^2 - 2\|\vx\vpsi\|^2 - \dua{\vpsi}{\vx\cdot\vLambda\vsigma_z\vpsi}\\
&= M \dua{\vpsi}{\vt\cdot\vsigma_x\vpsi} + 2\Re \dua{\vpsi}{ (\vt\cdot\vsigma_x)(\vx\cdot\grad)\vpsi}
+\vj\cdot\dua{\vpsi}{\vx \vpsi}.
\end{align*}
Next, we use the family of wavefunctions $\vM\mapsto \vpsi_\vM(\vx)=(\det\vM)^{1/2}\vpsi(\vM\vx)$, where $\vM\in\RR^{M\times M}$ is
a diagonal matrix with positive entries $\vM=\diag(\mu_1,\ldots,\mu_M)$.
We have $\|\vpsi_\vM\|=1$. The optimality condition $\frac{\partial}{\partial\mu_m} \dua{\vpsi_{\vM}}{\vH(\vv,\vj)\vpsi_{\vM}}|_{\mu_1=\ldots=\mu_M=1}=0$
gives
$$
\|\grad\vpsi\|^2 =  \|\vx\vpsi\|^2 + \frac{1}{2}\dua{\vpsi}{\vx\cdot\vLambda\vsigma_z\vpsi} + \frac{1}{2}\vj\cdot \dua{\vpsi}{\vx \vpsi}.
$$
\end{proof}

\subsection{Proofs of \cref{sec:HK}}
\begin{proof}[Proof of \cref{prop:RN}]
We say that a subset $S$ of vertices of the $N$-cube $[-1,1]^N$ is \emph{irregular} if $\Aff(S)\neq \RR^N$,
and we say that $S$ is \emph{maximally irregular} if $\dim \Aff(S)=N-1$.
First, any irregular set is contained in some maximally irregular set. Next,
$$
\RC_N=[-1,1]^N\setminus \bigcup_{ \substack{ S\;\text{maximally} \\ \text{irregular} } } \Aff(S),
$$
so that $\RC_N$ is obtained by successively cutting the convex polytope $[-1,1]^N$ with hyperplanes, so $\RC_N$ splits into open convex polytopes.
\end{proof}


To prove \cref{hkthm,uniquevrep}, we need the following version of the unique continuation property (UCP)~\cite[Theorem 2.3]{garrigue2020unique}. Since the condition on the potential is only \emph{locally} $L^2$, it is fulfilled for the case of the harmonic oscillator potential and the coupling term.

\begin{theorem}[Strong UCP for systems of equations]\label{ucpthm}
Let $\vW\in L^2_{\rm loc}(\RR^M,\CC^{2^N\times 2^N})$ be such that for every $R>0$ there exists $c_R\ge 0$ such that
$$
\iden_{B_R} |W_{\valpha\vbeta}|^2 \le \epsilon_{N,M} (-\lapl)^{\frac{3}{2}-\delta} + c_R,
$$
for all $\valpha,\vbeta\in I$ in the sense of quadratic forms, where $\epsilon_{N,M}>0$ are small constants depending on $N$ and $M$ only. Let $\vpsi\in H^2_{\rm loc}(\RR^M,\CC^{2^N})$ be a weak solution to
$$
(-\lapl \iden_{\CC^{2^N}} + \vW)\vpsi=0.
$$
If $\vpsi$ vanishes on a set of positive measure, then $\vpsi\equiv 0$ a.e.
\end{theorem}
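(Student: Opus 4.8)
The plan is to recognize this statement as essentially the strong unique continuation property of \cite[Theorem 2.3]{garrigue2020unique}, transported to our matrix-valued Schrödinger operator, so that the main task is to verify that its hypotheses are met and to account for the spinor structure. The underlying mechanism is a Carleman estimate with a fractional-order weight, strong enough to absorb any potential obeying a subcritical form bound relative to $(-\lapl)^{3/2-\delta}$; the smallness of the constants $\epsilon_{N,M}$ is precisely what makes this absorption possible, so the whole argument hinges on checking that our coupled system still falls into this admissible class.

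First I would reduce the system to a componentwise differential inequality. Writing the weak equation $(-\lapl\iden_{\CC^{2^N}} + \vW)\vpsi = 0$ in components gives $(-\lapl\vpsi)_\valpha = -\sum_\vbeta W_{\valpha\vbeta}\vpsi_\vbeta$, hence the pointwise bound
\[
|\lapl\vpsi(\vx)| \le \Big(\sum_{\valpha,\vbeta} |W_{\valpha\vbeta}(\vx)|\Big)\,|\vpsi(\vx)|.
\]
The hypothesis $\iden_{B_R}|W_{\valpha\vbeta}|^2 \le \epsilon_{N,M}(-\lapl)^{3/2-\delta} + c_R$ as quadratic forms is exactly the subcriticality condition needed so that the aggregated potential appearing on the right still lies in the admissible class of \cite{garrigue2020unique}; since the index set $I$ is finite, the off-diagonal coupling contributes only finitely many terms, each controlled individually, and the vector structure adds no essential difficulty beyond this bookkeeping.

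Next I would invoke the Carleman estimate of \cite{garrigue2020unique} to propagate vanishing. The passage from ``vanishes on a set of positive measure'' to identical vanishing goes through a Lebesgue density point $\vx_0$ of the zero set: using the $H^2_{\rm loc}$ regularity of $\vpsi$ together with the equation, one upgrades positive-measure vanishing near $\vx_0$ to vanishing of infinite order there. The weighted Carleman inequality then forces $\vpsi\equiv 0$ in a neighbourhood of $\vx_0$, and a standard connectedness and continuation argument extends this conclusion to all of $\RR^M$.

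The hard part is the Carleman estimate itself: obtaining a weighted inequality whose gain in regularity matches the $(-\lapl)^{3/2-\delta}$ scale of the potential class, with constants uniform in the weight parameter so that the potential term can be absorbed once $\epsilon_{N,M}$ is small enough. This is the technical heart of \cite{garrigue2020unique}, and I would import it directly rather than reprove it, since our only modification is the harmless reduction of the coupled spinor equation to the scalar differential inequality above.
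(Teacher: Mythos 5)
Your proposal is correct and takes essentially the same approach as the paper: the paper gives no proof of this theorem at all, but imports it verbatim as \cite[Theorem 2.3]{garrigue2020unique}, which is precisely what you propose (and your description of the Carleman-estimate mechanism matches how that cited result works). Your componentwise reduction to a differential inequality $|\lapl\vpsi|\le\big(\sum_{\valpha,\vbeta}|W_{\valpha\vbeta}|\big)|\vpsi|$ is harmless but unnecessary, since Garrigue's Theorem 2.3 is already stated for systems with exactly the hypothesis form used here.
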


\begin{proof}[Proof of \cref{hkthm}]
Let $E^{(i)}=E(\vv^{(i)},\vj^{(i)})$. Using the variational principle
\begin{align*}
E^{(1)}&=\dua{\vpsi^{(1)}}{\vH(\vv^{(1)},\vj^{(1)})\vpsi^{(1)}}\le \dua{\vpsi^{(2)}}{\vH(\vv^{(1)},\vj^{(1)})\vpsi^{(2)}}\\
&=\dua{\vpsi^{(2)}}{\vH(\vv^{(2)},\vj^{(2)})\vpsi^{(2)}} + (\vv^{(1)}-\vv^{(2)})\cdot \vsigma + (\vj^{(1)}-\vj^{(2)})\cdot \vxi\\
&=E^{(2)} + (\vv^{(1)}-\vv^{(2)})\cdot \vsigma + (\vj^{(1)}-\vj^{(2)})\cdot \vxi .
\end{align*}
Moreover,
\begin{align*}
E^{(2)}&=\dua{\vpsi^{(2)}}{\vH(\vv^{(2)},\vj^{(2)})\vpsi^{(2)}}\le \dua{\vpsi^{(1)}}{\vH(\vv^{(2)},\vj^{(2)})\vpsi^{(1)}}\\
&=\dua{\vpsi^{(1)}}{\vH(\vv^{(1)},\vj^{(1)})\vpsi^{(1)}} + (\vv^{(2)}-\vv^{(1)})\cdot\vsigma + (\vj^{(2)}-\vj^{(1)})\cdot\vxi\\
&=E^{(1)} + (\vv^{(2)}-\vv^{(1)})\cdot\vsigma + (\vj^{(2)}-\vj^{(1)})\cdot\vxi .
\end{align*}
Together, these two bounds imply that there is equality everywhere, hence 
\begin{align*}
\dua{\vpsi^{(1)}}{\vH(\vv^{(2)},\vj^{(2)})\vpsi^{(1)}}&=E^{(2)},\\
\dua{\vpsi^{(2)}}{\vH(\vv^{(1)},\vj^{(1)})\vpsi^{(2)}}&=E^{(1)},
\end{align*}
so that $\vH(\vv^{(2)},\vj^{(2)})\vpsi^{(1)}=E^{(2)}\vpsi^{(1)}$ and $\vH(\vpsi^{(1)},\vj^{(1)})\vpsi^{(2)}=E^{(1)}\vpsi^{(2)}$ again by the variational principle.
This means that $\vpsi^{(1)}$ is a ground state to $\vH(\vv^{(2)},\vj^{(2)})$ as well as $\vpsi^{(2)}$ to $\vH(\vv^{(1)},\vj^{(1)})$.
Subtracting $\vH(\vv^{(i)},\vj^{(i)})\vpsi^{(i)}=E^{(i)}\vpsi^{(i)}$ from these ($i=1,2$), we obtain via \cref{hvjdef} that
\begin{equation}\label{hkpr1}
\left((E^{(2)}-E^{(1)})\iden_{\CC^{2^N}} + (\vv^{(1)}-\vv^{(2)})\cdot \vsigma_z + (\vj^{(1)}-\vj^{(2)})\cdot \vx \iden_{\CC^{2^N}} \right)_{\valpha\valpha}(\psi^{(i)})^\valpha(\vx)\equiv 0
\end{equation}
for a.e.\ $\vx\in\RR^M$ and all $\valpha\in \{+,-\}^N$. 
Letting $\vA^{(i)}(\vx)$ denote the $|I^{(i)}|\times|I^{(i)}|$ matrix in parenthesis with $I^{(i)}$ the set of spinor indices $\valpha$ for which $(\psi^{(i)})^\valpha\not\equiv 0$,
we can write \cref{hkpr1} equivalently as 
$$
\wt{\vpsi}^{(i)}(\vx)\in\ker \vA^{(i)}(\vx)
$$
for a.e.\ $\vx\in\RR^M$, where we have set $\wt{\vpsi}^{(i)}=((\psi^{(i)})^\valpha)_{\valpha\in I^{(i)}}$. 
Using \cref{ucpthm}, the functions $\wt{\vpsi}^{(i)} : \RR^M\to\CC^{I^{(i)}}$ cannot vanish on open subsets of $\RR^M$.
Hence, we obtain
\begin{equation}\label{hkspec}
0\in \sigma(\vA^{(i)}(\vx))
\end{equation}
for a.e.\ $\vx\in\RR^M$. Since $\vA(\vx)$ is simply a diagonal matrix, its spectrum reads
$$
\sigma(\vA^{(i)}(\vx))=\left\{ E^{(2)}-E^{(1)} + \sum_{n=1}^N (v^{(1)}_n-v_n^{(2)})(\sigma_z^n)_{\valpha\valpha} + (\vj^{(1)}-\vj^{(2)})\cdot \vx \right\}_{\valpha\in I^{(i)}}.
$$
Relation \cref{hkspec} then implies that there exists a function $\vchi^{(i)} : \RR^M\to \{\pm 1\}^N$ such that $(\vA^{(i)}(\vx))_{\vchi^{(i)}(\vx),\vchi^{(i)}(\vx)}=0$ a.e.\ $\vx\in\RR^M$. 
In detail,
$$
E^{(2)}-E^{(1)} + \sum_{n=1}^N (v^{(1)}_n-v_n^{(2)})(\sigma_z^n)_{\vchi^{(i)}(\vx),\vchi^{(i)}(\vx)} = - (\vj^{(1)}-\vj^{(2)})\cdot \vx,
$$
where, as a function of $\vx$, the l.h.s.\ is discontinuous or constant while the r.h.s.\ is continuous and nonconstant for $\vj^{(1)}\neq \vj^{(2)}$.
We deduce that $\vj=\vj^{(1)}=\vj^{(2)}$ and that $\vchi^{(i)}$ is constant, so $\wt{\vpsi}^{(i)}(\vx)\in\ker\vA^{(i)}$.
Hence, \eqref{hkpr1} reduces to
$$
\sum_{n=1}^N (v^{(1)}_n-v_n^{(2)})(\sigma_z^n)_{\valpha\valpha} = E^{(1)}-E^{(2)},
$$
for every $\valpha$ such that $(\psi^{(i)})^\valpha\not\equiv 0$ for some $i=1,2$. From this, the statement for the irregular case \eqref{hkthm:item:irregular} follows right away.

Now suppose that $\vsigma$ is regular and put $\vpsi=\vpsi^{(1)}$.
The preceding relation can be compactly written as a linear system
\begin{equation}\label{khprrel}
\vP\vOmega^\top(\vv^{(1)}-\vv^{(2)})=(E^{(1)}-E^{(2)})\vP\ve,
\end{equation}
where the $N\times 2^N$ matrix $\vOmega$ is given by 
$\Omega_{n,\valpha}=(\sigma_z^n)_{\valpha\valpha}$ and $\ve=(1,\ldots,1)\in \RR^{2^N}$. Also, $\vP$ is $2^N\times 2^N$ the orthogonal projector onto the linear hull of
$$
\{\ve_\valpha : \psi^\valpha\not\equiv 0\}=\{\ve_\valpha : \|\psi^\valpha\|^2\neq 0\}.
$$
The regularity of $\vsigma\in[-1,1]^N$ implies (is in fact equivalent to) that for all $\omega_\valpha\ge 0$ such that $\sum_\valpha\omega_\valpha=1$ and
$\vOmega\vomega=\vsigma$, we have $\Span \{\vOmega\ve_\valpha : \omega_\valpha\neq 0\}=\RR^N$ and $0\in\Aff \{\vOmega\ve_\valpha : \omega_\valpha\neq 0\}$.
We choose $\vomega=(\|\psi^\valpha\|^2)_{\valpha}$ in what follows.

We may distinguish two cases regarding the solvability of \cref{khprrel}.

\begin{enumerate}[(I)]
\item If $E^{(1)}=E^{(2)}$, then \cref{khprrel} is equivalent to 
$$
(\vv^{(1)}-\vv^{(2)})\in \ker \vP\vOmega^\top=(\ran \vOmega\vP)^\perp.
$$
But $\ran \vOmega\vP=\Span \{\vOmega\ve_\valpha : \omega_\valpha\neq 0\} \supseteq \Aff \{\vOmega\ve_\valpha : \omega_\valpha\neq 0\} =\RR^N$ by hypothesis, so $\vv^{(1)}=\vv^{(2)}$.

\item Suppose that $E^{(1)}\neq E^{(2)}$, which we can simply ``scale away'' in \cref{khprrel}. By the Fredholm alternative theorem,
$\vP\vOmega^\top\vu=\vP\ve$  does not have a solution $\vu\in\RR^N$ precisely if
$$
\vOmega\vP\vy=\bm{0}, \quad \vy \cdot \vP\ve\neq 0
$$
\emph{does} have a solution $\vy\in\RR^{2^N}$. In other words, if and only if there exists $\vw\in \ran \vP$ such that $\vOmega\vw=\bm{0}$ and  $\ve\cdot\vw\neq 0$.
By scaling this is equivalent to
$$
\sum_{\valpha : \psi^\valpha\not\equiv 0} w_\valpha \vOmega\ve_\valpha=0
$$
for some $w_\valpha\in\RR$ with $\sum_{\valpha : \psi^\valpha\not\equiv 0} w_\valpha=1$. In other words, $\bm{0}\in\Aff \{\vOmega\ve_\valpha : \omega_\valpha\neq 0\}$,
which holds by hypothesis.
Since the alternative has a solution, it must be that case (I) holds.
\end{enumerate}
This finishes the proof also for the regular case \eqref{hkthm:item:regular}.
\end{proof}

\subsection{Proofs of \cref{llsec}} 
\begin{proof}[Proof of \cref{repthm}]
For the $N$-representability of a given density pair $(\vsigma,\vxi)\in[-1,1]^N\times\RR^M$ define $\vpsi(\vx)= (2\pi)^{-M/4} e^{-\frac{1}{4}|\vx-\vxi|^2} \mathbf{c}$, where $\mathbf{c}\in \RR^{2^N}$ is to be determined. It is clear that $\vpsi\in Q_0$ and $|\mathbf{c}|^2=1$ is required for $\|\vpsi\|=1$. 
The constraint $\vsigma_\vpsi=\vsigma$ reads
$\sigma_z^n \mathbf{c}\cdot \mathbf{c}=\sigma_n$ for $n=1,\ldots,N$. More explicitly,
$$
\sum_\valpha |c^\valpha|^2 =1, \quad \text{and}\quad \sum_\valpha (\sigma_z^n)_{\valpha\valpha} |c^\valpha|^2=\sigma_n.
$$
These two are equivalent to finding a nonnegative solution $\beta^\valpha=|c^\valpha|^2$ of
$$
\mathbf{A} \bm{\beta}=\begin{pmatrix}
1\\
\vsigma
\end{pmatrix},
$$
where $\mathbf{A}$ is an $N+1$ by $2^N$ matrix whose first row is $(1,\ldots,1)$ and its $(n+1)$st row is the diagonal of $\sigma_z^n$.
We claim that the convex set $\{1\}\times[-1,1]^N$ is contained within the cone generated by $\mathbf{A}$, i.e., $\{\mathbf{A}\vx : x_m\ge 0\}$.
In fact, it is easy to see using the definition of the matrices $\sigma_z^n$, that the shifted hypercube $\{1\}\times[-1,1]^N$ has $2^N$ vertices and that the standard basis vectors $\mathbf{e}_\valpha$ of $\RR^{2^N}$ are mapped by $\mathbf{A}$ to these vertices.
Since $|\mathbf{c}|=1$, the constraint $\vxi_\vpsi=\vxi$ is also verified.
\end{proof}

\begin{proof}[Proof of \cref{fllexist}]
Let $\{\vpsi_j\}\subset Q_0$ be an optimizing sequence for $F_\LL(\vsigma,\vxi)$, i.e.
$\|\vpsi_j\|=1$, $\vsigma_{\vpsi_j}=\vsigma$, $\vxi_{\vpsi_j}=\vxi$ and $\dua{\vpsi_j}{\vH_0\vpsi_j}\to F_\LL(\vsigma,\vxi)$
as $j\to\infty$.
Since $\vH_0$ is bounded from below, $\{\vpsi_j\}\subset Q_0$ is bounded in the $Q_0$-norm.
Then, by the Banach--Alaoglu theorem, there exists a subsequence $\vpsi_j$ (not distinguished in notation) and $\vpsi\in Q_0$, 
such that $\vpsi_j\wconv\vpsi$ weakly in $Q_0$.
We need to show that $\vpsi$ verifies the constraints $\|\vpsi\|=1$, $\vsigma_{\vpsi}=\vsigma$, $\vxi_{\vpsi}=\vxi$ and its energy did not increase: $\dua{\vpsi}{\vH_0\vpsi}\le F_\LL(\vsigma,\vxi)$.

Since $\vV$ is a trapping potential ($|\vV(\vx)|\to\infty$ as $|\vx|\to\infty$), $-\lapl + \vV$ has compact resolvent, so
the embedding $Q_0=Q(-\lapl + \vV)\subset L^2(\RR^M,\CC^{2^N})$ is compact.
Then $\vpsi_j\wconv\vpsi$ implies that up to a subsequence (not distinguished in notation) $\vpsi_j\to \vpsi$ strongly in $L^2(\RR^M,\CC^{2^N})$, 
and hence $\|\vpsi\|=1$ and $\vsigma_{\vpsi}=\vsigma$. 
Moreover, by the Cauchy--Schwarz inequality
\begin{align*}
|\vxi_{\vpsi}-\vxi|&\le \int_{\RR^M} |\vx| |\vpsi-\vpsi_j|^2 \,\dd\vx\le \left( \int_{\RR^M} |x|^2 |\vpsi-\vpsi_j|^2\,\dd\vx\right)^{1/2} \|\vpsi_j-\vpsi\|\to 0,
\end{align*}
because the quantity in parenthesis is uniformly bounded due to the fact that  
$\{\vpsi_j\}\subset Q_0$ is bounded.

Instead of $\vpsi\mapsto\dua{\vpsi}{\vH_0\vpsi}$ we can consider $\vpsi\mapsto\dua{\vpsi}{(\vH_0+C)\vpsi}$ in the definition of $F_\LL(\vsigma,\vxi)$, which would simply yield $F_\LL(\vsigma,\vxi)+C$. But now $\vpsi\mapsto\dua{\vpsi}{(\vH_0+C)\vpsi}$ is a closed positive quadratic form, which means it is strongly lower semicontinuous in $Q_0$. By Mazur's theorem it is also weakly l.s.c., hence
$$
F_\LL(\vsigma,\vxi)+C=\lim_{j\to\infty} \dua{\vpsi_j}{(\vH_0+C)\vpsi_j}\ge \dua{\vpsi}{(\vH_0+C)\vpsi}.
$$
But we already know that $\|\vpsi\|=1$, so $C$ cancels from the above inequality and we obtain that $\vpsi$ is 
an optimizer for $F_\LL(\vsigma,\vxi)$. 
\end{proof}

\begin{proof}[Proof of \cref{fllprop}]
First, we note that the quadratic energy functional $\EC : Q_0 \to \RR$ that enters the definition of $F_\LL$ can be written in a convenient form as
$$
\EC(\vpsi)=\dua{\vpsi}{\vH_0\vpsi}=\|\grad\vpsi\|^2 + \|\vx\vpsi\|^2 + \dua{\vpsi}{\vx\cdot \vLambda\vsigma_z\vpsi} - \dua{\vpsi}{\vt\cdot\vsigma_x\vpsi}.
$$
For part \eqref{fllprop:item:displacement-rule}, consider the shift operator $\DC_\vzeta \vpsi(\vx)=\vpsi(\vx-\vzeta)$ for any $\vzeta\in\RR^M$. Clearly, $\|(\DC_\vzeta\psi)^\valpha\|^2=\|\psi^\valpha\|^2$, $\vsigma_{\DC_\vzeta \vpsi}=\vsigma_\vpsi$, $\vxi_{\DC_\vzeta \vpsi}=\vxi_\vpsi + \vzeta$ and furthermore
$$
\EC(\DC_\vzeta \vpsi)= \EC(\vpsi) + 2\vzeta\cdot\vxi + \vzeta\cdot\vLambda\vsigma + |\vzeta|^2.
$$
Using $\DC_\vzeta \vpsi$ as a trial state for $F_\LL(\vsigma,\vxi+\vzeta)$, where $\vpsi$ is an optimizer for $F_\LL(\vsigma,\vxi)$, we obtain the stated relation with a ``$\le$''.
Conversely, choosing $\DC_{-\vzeta}\vpsi$ as a trial state for $F_\LL(\vsigma,\vxi)$, where $\vpsi$ is an optimizer for $F_\LL(\vsigma,\vxi+\vzeta)$, we get the opposite inequality.

Next, for the real-valuedness part of \eqref{fllprop:item:real}, it is enough to note that we may decouple the real and the imaginary parts of $\vpsi$ within the constraints and the energy. The only nontrivial terms are
\begin{align*}
\dua{\vpsi}{\vx\cdot\vLambda\vsigma_z\vpsi}&=\sum_{m=1}^M\sum_{n=1}^N \Lambda_{mn}\dua{\Re\vpsi}{x_m\sigma_z^n\Re\vpsi} + \sum_{m=1}^M\sum_{n=1}^N \Lambda_{mn}\dua{\Im\vpsi}{x_m\sigma_z^n\Im\vpsi}
\end{align*}
and
\begin{align*}
\dua{\vpsi}{\vt\cdot\vsigma_x\vpsi}&=\sum_{n=1}^N t_n \dua{\vpsi}{\sigma_x^n\vpsi}=
\sum_{n=1}^N t_n \dua{\Re\vpsi}{\sigma_x^n\Re\vpsi} + \sum_{n=1}^N t_n \dua{\Im\vpsi}{\sigma_x^n\Im\vpsi},
\end{align*}
where the mixed terms cancel using the fact that both $\sigma_z^n$ and $\sigma_x^n$ are real symmetric.
This allows to minimize $\EC(\vpsi) = \EC(\Re\vpsi) + \EC(\Im\vpsi)$ with just $\Re\vpsi$.

To see the virial relation \eqref{fllprop:item:virial}, we employ the usual scaling argument~\cite{LevyPerdewPhysRevA1985}. 
Consider the family of wavefunctions $\vM\mapsto \vpsi_\vM(\vx):=(\det\vM)^{1/2}\vpsi(\vM\vx)$, where $\vM\in\RR^{M\times M}$ is
a diagonal matrix with positive entries $\vM=\diag(\mu_1,\ldots,\mu_M)$ and $\vpsi$ is an optimizer for $F_\LL(\vsigma,\bm{0})$.
We have $\|\vpsi_\vM\|=1$, $\vsigma_{\vpsi_\vM}=\vsigma$ and $\vxi_{\vpsi_\vM}=\vM^{-1}\vxi=0$. Moreover,
$$
\EC(\vpsi_\vM)= \|\vM\grad\vpsi\|^2 + \|\vM^{-1}\vx\vpsi\|^2
+ \dua{\vpsi}{\vM^{-1}\vx\cdot \vLambda\vsigma_z\vpsi} -
\dua{\vpsi}{\vt\cdot\vsigma_x\vpsi}.
$$
Since $\vpsi$ is an optimizer, we necessarily have $\frac{\partial}{\partial\mu_m} \EC(\vpsi_\vM)|_{\mu_1=\ldots=\mu_M=1}=0$, i.e., with $\ve_m$ the $m$th unit vector in $\RR^M$,
$$
0=2 \|\partial_{x_m}\vpsi\|^2 - 2\|x_m\vpsi\|^2 - \dua{\vpsi}{x_m\ve_m\cdot \vLambda\vsigma_z\vpsi},
$$
from which the stated formula follows.

To prove \eqref{fllprop:item:coupling}, fix $m$ and consider the shifted family 
$$
s\mapsto \vpsi_s(\vx)=( \psi^\valpha(\vx+ sc^\valpha\ve_m) )_{\valpha},
$$
where $\vpsi$ is a real-valued optimizer (according to \eqref{fllprop:item:real}) for $F_\LL(\vsigma,\vxi)$ and $c^\valpha$ is given by
$$
c^\valpha=\sum_{n=1}^N \Lambda_{mn}(\sigma_z^n)_{\valpha\valpha} - \sum_{n=1}^N \Lambda_{mn}\sigma_n
$$
so that
$$
\sum_{\valpha} c^\valpha \|\psi^\valpha\|^2=0.
$$
We have $\|\vpsi_s\|=1$,  $\vsigma_{\vpsi_s}=\vsigma$ and 
\begin{align*}
\vxi_{\vpsi_s}&=\sum_{\valpha}\int_{\RR^M} \vx |\psi^\valpha(\vx+ sc^\valpha\ve_m)|^2 \,\dd\vx= \sum_{\valpha}\int_{\RR^M} (\vx-sc^\valpha\ve_m) |\psi^\valpha|^2 \,\dd\vx=\vxi
\end{align*}
by construction. Now
\begin{align*}
\EC(\vpsi_s)&=\dua{\vpsi_s}{\vH_0\vpsi_s}=\|\grad\vpsi\|^2 + \sum_\valpha\int_{\RR^M}|\vx-sc^\valpha\ve_m|^2|\psi^\valpha|^2\,\dd\vx \\
&+ \sum_{n'=1}^N\sum_{m'=1}^M\Lambda_{m'n'}\sum_{\valpha}\int_{\RR^M} (x_{m'}-sc^\valpha\delta_{mm'})(\sigma_z^{n'})_{\valpha\valpha}|\psi^\valpha|^2\,\dd\vx\\
&- \sum_{n'=1}^N t_{n'}\sum_{\valpha,\vbeta}(\sigma_x^{n'})_{\vbeta\valpha} \int_{\RR^M} \psi^\vbeta(\vx+ sc^\vbeta\ve_m)\psi^\valpha(\vx+ sc^\valpha\ve_m)\,\dd\vx,
\end{align*}
and the optimality condition $\frac{\dd}{\dd s} \EC(\vpsi_s)|_{s=0}=0$ implies
\begin{align*}
0=&-2\sum_\valpha\int_{\RR^M} c^\valpha x_m|\psi^\valpha|^2 \,\dd\vx - \sum_{n'=1}^N\Lambda_{mn'}\sum_{\valpha}(\sigma_z^{n'})_{\valpha\valpha} c^\valpha \|\psi^\valpha\|^2\\
&- \sum_{n'=1}^N t_{n'}\sum_{\valpha,\vbeta}(\sigma_x^{n'})_{\vbeta\valpha} c^\vbeta \int_{\RR^M} (\partial_{x_m}\psi^\vbeta(\vx)) \psi^\valpha(\vx)\,\dd\vx\\
&- \sum_{n'=1}^N t_{n'}\sum_{\valpha,\vbeta}(\sigma_x^{n'})_{\vbeta\valpha} c^\valpha \int_{\RR^M} \psi^\vbeta(\vx) (\partial_{x_m}\psi^\valpha(\vx)) \,\dd\vx.
\end{align*}
Using the definition of $c^\valpha$ and the symmetry of $\sigma_x^n$ for the last term, we find
\begin{align*}
0=&-2\sum_\valpha\sum_{n=1}^N \int_{\RR^M} x_m \Lambda_{mn} (\sigma_z^n)_{\valpha\valpha} |\psi^\valpha|^2 \,\dd\vx + 2\sum_{n=1}^N \Lambda_{mn}\sigma_n \sum_\valpha\int_{\RR^M} x_m|\psi^\valpha|^2 \,\dd\vx \\
&- \sum_{n,n'=1}^N\Lambda_{mn'}\Lambda_{mn} \sum_{\valpha} (\sigma_z^{n'})_{\valpha\valpha}  (\sigma_z^n)_{\valpha\valpha} \|\psi^\valpha\|^2 + \left(\sum_{n=1}^N \Lambda_{mn}\sigma_n\right)^2  \\
&- 2\sum_{n,n'=1}^N t_{n'}\Lambda_{mn} \sum_{\valpha,\vbeta} (\sigma_x^{n'})_{\vbeta\valpha}\left((\sigma_z^n)_{\vbeta\vbeta}- \sigma_n \right) \int_{\RR^M} (\partial_{x_m}\psi^\vbeta(\vx))\psi^\valpha(\vx)\,\dd\vx
\end{align*}
from which
\begin{align*}
0=&-2 \dua{\vpsi}{x_m\ve_m\cdot \vLambda\vsigma_z\vpsi} + 2\sum_{n=1}^N \xi_m\Lambda_{mn}\sigma_n \\
&- \sum_{n,n'=1}^N\Lambda_{mn'}\Lambda_{mn} \dua{\sigma_z^{n'}\vpsi}{\sigma_z^{n}\vpsi} + \left(\sum_{n=1}^N \Lambda_{mn}\sigma_n\right)^2 \\
&- 2\sum_{n,n'=1}^N t_{n'}\Lambda_{mn}\dua{\sigma_x^{n'}\vpsi}{(\sigma_z^n-\sigma_n)\partial_{x_m}\vpsi}.
\end{align*}
The statement then follows from summation over $m$.
\end{proof}

\begin{proof}[Proof of \cref{tangthm}]
Consider the smooth map $\vf : Q_0\to \RR^{N+M+1}$ given by 
$$
\vf(\vpsi)=(\|\vpsi\|^2,\vsigma_\psi,\vxi_\vpsi).
$$
Then $\vf^{-1}(1,\vsigma,\vxi)=\MC_{\vsigma,\vxi}$. Note that
$$
\vf'(\vpsi)\vchi=2(\dua{\vpsi}{\vchi}, \dua{\sigma_z^1\vpsi}{\vchi}, \ldots, \dua{\sigma_z^N\vpsi}{\vchi}, \dua{x_1\vpsi}{\vchi},
\ldots, \dua{x_M\vpsi}{\vchi}).
$$
We claim that the differential $(\dd\vf)_\vpsi : Q_0\simeq \TC_\vpsi(Q_0) \to \TC_{(1,\vsigma,\vxi)}(\RR^{N+M+1})\simeq \RR^{N+M+1}$
of $\vf$ at any $\vpsi\in \vf^{-1}(1,\vsigma,\vxi)$ is surjective. To see this, we show that the functions
$$
\{\vpsi\}\cup\{\sigma_z^1\vpsi,\ldots,\sigma_z^N\vpsi\}\cup \{x_1\vpsi,\ldots,x_M\vpsi\}
$$
are linearly independent. Assume for contradiction that there are coefficients $a_0$, $a_1,\ldots,a_N\in\RR$ and $b_1,\ldots,b_M\in\RR$ (not all zero) such that
\begin{equation}\label{lindeppr}
\left(a_0 + \sum_{n=1}^N a_n (\sigma_z^n)_{\valpha\valpha} + \sum_{m=1}^M b_m x_m \right)\psi^\valpha(\vx)=0
\end{equation}
for all $\vx\in\RR^M$ and $\valpha\in\{+,-\}^N$.
Differentiating \cref{lindeppr} with respect to $x_p$, $p=1,\ldots,M$, we find
\begin{equation}\label{lindeppr2}
\left(a_0 + \sum_{n=1}^N a_n (\sigma_z^n)_{\valpha\valpha} + \sum_{m=1}^M b_m x_m \right)\partial_{x_p}\psi^\valpha(\vx)=-b_p \psi^\valpha(\vx).
\end{equation}
We can assume that $b_p\neq 0$ for at least one $p$, since if $b_1=\ldots=b_M=0$, then \cref{lindeppr} implies
$$
\vP\vOmega^\top\va=a_0\vP\ve,
$$
using the same notations as in the Hohenberg--Kohn theorem's proof, from which $a_0=0$ and $\va=\bm{0}$ follows by the same argument using the regularity of $\vsigma$.

Plugging \cref{lindeppr2} into \cref{lindeppr}, we find
$$
\left(a_0 + \sum_{n=1}^N a_n (\sigma_z^n)_{\valpha\valpha} + \sum_{m=1}^M b_m x_m \right)^2 \partial_{x_p}\psi^\valpha(\vx) =0
$$
for all $\vx\in\RR^M$ and $p=1,\ldots,M$ such that $b_p\neq 0$. Taking the inner product with $\partial_{x_p}\psi^\valpha(\vx)$, we find
$$
\left(a_0 + \sum_{n=1}^N a_n (\sigma_z^n)_{\valpha\valpha} + \sum_{m=1}^M b_m x_m \right)\partial_{x_p}\psi^\valpha(\vx)=0.
$$
Using \cref{lindeppr2}, we find $b_p\psi^\valpha(\vx)=0$. But then $b_p=0$ 
which is a contradiction.

Therefore, $(1,\vsigma,\vxi)$ is a regular value of $\vf$.
We conclude using the submersion theorem 
(see e.g.\ \cite[Theorem 3.5.4]{abraham2012manifolds}) that $\MC_{\vsigma,\vxi}$ is a closed submanifold of $Q_0$, which shows \eqref{tangthm:item:submersed},
and that the tangent space of $\MC_{\vsigma,\vxi}$ is given by $\TC_\vpsi(\MC_{\vsigma,\vxi})=\ker (\dd\vf)_\vpsi$,
from which the stated formulas in \eqref{tangthm:item:tangent-space} and \eqref{tangthm:item:cotangent-space} follow easily.
\end{proof}


\begin{proof}[Proof of \cref{elthm}]
Let $\mathcal{E}(\vpsi)=\dua{\vpsi}{\vH_0\vpsi}$, then the criticality condition reads
\begin{equation}\label{elpr}
0=(\dd\mathcal{E}|_{\MC_{\vsigma,\vxi}})_\vpsi(\vchi)=\dua{\vchi}{\vH_0\vpsi}
\end{equation}
for all $\vchi\in\TC_\vpsi(\MC_{\vsigma,\vxi})$. 
Using \cref{tangthm}~\eqref{tangthm:item:cotangent-space} this implies that there exists Lagrange multipliers $E\in\RR$, $\vv\in\RR^N$ and $\vj\in\RR^M$ such that
$$
\vH_0\vpsi=E\vpsi - \vv\cdot\vsigma_z \vpsi - \vj\cdot \vx\vpsi
$$
holds true in $Q_0^*$. Recall that $Q_0$ is a dense subspace of $\HC$, and so the preceding equation holds true strongly in $\HC$ as well. Hence, $\vH_0\vpsi\in\HC$ so $\vpsi$ is in the domain of $\vH(\vv,\vj)$.

To show the second order condition \cref{seccond}, let $\vgamma : (-1,1)\to \MC_{\vsigma,\vxi}$ be a $C^2$-curve such that $\vgamma(0)=\vpsi$ and $\dot{\vgamma}(0)=\vchi$. Differentiating the constraints
$\|\vgamma(t)\|^2=1$, $\vsigma_{\vgamma(t)}=\vsigma$ and $\vxi_{\vgamma(t)}=\vxi$ twice, we obtain 
\begin{equation}\label{constrdiff}
\begin{aligned}
2\Re\dua{\ddot{\vgamma}(0)}{\vpsi}&=- 2\|\vchi\|^2,\\
2\Re\dua{\ddot{\vgamma}(0)}{\vsigma_z^n\vpsi}&=-2\dua{\vchi}{\vsigma_z^n\vchi}\quad (n=1,\ldots,N),\\
2\Re\dua{\ddot{\vgamma}(0)}{x_m\vpsi}&=-2\dua{\vchi}{x_m\vchi}\quad (m=1,\ldots,M).
\end{aligned}
\end{equation}
We have
\begin{align*}
(\dd^2\EC)_\vpsi(\vchi,\vchi)&=\frac{\dd^2}{\dd t^2}\Big|_{t=0} \dua{\vgamma(t)}{\vH_0\vgamma(t)}= 2 \dua{\vchi}{\vH_0\vchi} + 2\Re\dua{\ddot{\vgamma}(0)}{\vH_0\vpsi}.
\end{align*}
Using $\vH(\vv,\vj)=\vH_0 + \vv\cdot \vsigma_z + \vj\cdot \vx$, we get
\begin{align*}
(\dd^2\EC)_\vpsi(\vchi,\vchi)&=2 \dua{\vchi}{\vH(\vv,\vj)\vchi} + 2\Re\dua{\ddot{\vgamma}(0)}{\vH(\vv,\vj)\vpsi}\\
&- 2\sum_{n=1}^N v_n\dua{\vchi}{\vsigma_z^n\vchi} - 2\sum_{m=1}^M j_m\dua{\vchi}{x_m\vchi} \\
&- 2\Re\sum_{n=1}^N v_n\dua{\ddot{\vgamma}(0)}{\vsigma_z^n\vpsi} - 2\Re\sum_{m=1}^M j_m\dua{\ddot{\vgamma}(0)}{x_m\vpsi} \\
&=2 \dua{\vchi}{(\vH(\vv,\vj)-E)\vchi},
\end{align*}
for all $\vchi\in\TC_\vpsi(\MC_{\vsigma,\vxi})$, where we used \cref{constrdiff} to cancel several terms.
Since the Hessian needs to be nonnegative at the optimizer $\vpsi$,
we obtain the stated relation.
\end{proof}

\begin{proof}[Proof of \cref{hkopt}]
The result follows from a similar argument as given in the proof of the Hohenberg--Kohn theorem (\cref{hkthm}) that allows us to conclude that ground states that share the same $(\vsigma,\vxi)$ are also ground states for each others Hamiltonians.
\end{proof}

To prove \cref{aufbauthm}, we need a well-known lemma.

\begin{lemma}\cite[Lemma~II.2]{lions1987solutions}\label{lionslemm}
Let $H$ be a Hilbert space, $A:H\to H$ a self-adjoint operator bounded from below. If $A$ is nonnegative on a subspace of codimension $d$,
then $A$ has at most $d$ nonpositive eigenvalues.
\end{lemma}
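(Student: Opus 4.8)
The plan is to bound the size of the negative spectral subspace of $A$ by a dimension count against the subspace on which $A$ is nonnegative. Let $P$ be the spectral projection of $A$ associated with the open half-line $(-\infty,0)$ and put $V=\ran P$; by the spectral theorem $\dim V$ is precisely the number of strictly negative eigenvalues of $A$ counted with multiplicity (and $\dim V=\infty$ if $A$ has negative essential spectrum). The quantitative fact I would establish first is that $A$ is \emph{strictly} negative on $V$, i.e.\ $\dua{\chi}{A\chi}<0$ for every nonzero $\chi\in V$: the spectral measure $\dd\dua{\chi}{P_\lambda\chi}$ of such a $\chi$ is a nonzero positive measure supported in $(-\infty,0)$, and splitting it at a level $-\varepsilon<0$ that carries at least half of its mass gives $\dua{\chi}{A\chi}\le -\tfrac{\varepsilon}{2}\|\chi\|^2<0$. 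This strictness is the ingredient that makes the argument close, and it is available precisely because the half-line is open.

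Next I would run the intersection argument. Write $W$ for the codimension-$d$ subspace on which $A\ge 0$, realized as $W=\ker L$ for some surjective linear map $L:H\to\RR^{d}$ (this is the meaning of codimension $d$). Suppose, for contradiction, that $A$ has more than $d$ negative eigenvalues, so that $\dim V\ge d+1$. Then the restriction $L|_V:V\to\RR^{d}$ has nontrivial kernel by rank--nullity, hence there is a nonzero $\chi\in V\cap W$. For this $\chi$ we have $\dua{\chi}{A\chi}<0$ by strict negativity on $V$, yet $\dua{\chi}{A\chi}\ge 0$ because $\chi\in W$ — a contradiction. Therefore $\dim V\le d$; that is, $A$ has at most $d$ negative eigenvalues, and the same conclusion simultaneously rules out any negative essential spectrum.

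The only genuinely delicate point — and the main obstacle — is the threshold at energy $0$. The clean contradiction above exploits \emph{strict} negativity on $V$ and so it controls the number of \emph{negative} eigenvalues only; it says nothing about $\ker A$, which is exactly the gap between the negative and the nonpositive count. In the intended application to \cref{aufbauthm}, where $A=\vH(\vv,\vj)-E$ and the optimizer $\vpsi$ satisfies $A\vpsi=0$ with $\vpsi\perp\TC_\vpsi$, this gap is closed by enlarging the nonnegativity subspace: from $A\vpsi=0$ one checks that $A\ge 0$ holds on $\TC_\vpsi\oplus\RR\vpsi$, a subspace whose codimension is one lower, and the negative-eigenvalue bound applied there yields the sharp statement that $\vpsi$ is at most the $(N+M)$th excited state. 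Robustness of the count against accumulation of spectrum at $0$ can be made precise by applying the bound to $A+\varepsilon\iden$ and letting $\varepsilon\downarrow 0$. Throughout, one must also ensure that every vector produced lies in the form domain of $A$ so that $\dua{\chi}{A\chi}$ is meaningful, which is automatic here since both $W$ and $V=\ran P$ sit inside the form domain of the semibounded operator $A$.
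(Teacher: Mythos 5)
Your argument is correct and proves exactly what the paper's own proof proves; only the packaging differs. The paper quotes the Courant--Fischer principle: with $\mu_1\le\mu_2\le\cdots$ the min-max values of $A$, the max-min formula with the $d$ constraint vectors chosen to span $W^\perp$ gives $\mu_{d+1}\ge\inf\{\dua{Au}{u}: u\in W,\ \|u\|=1\}\ge 0$, so at most $d$ min-max values are negative. You instead unpack that principle in the one instance needed: the negative spectral subspace $V=\ran P_{(-\infty,0)}(A)$ carries a strictly negative form (your mass-splitting argument is sound, and every $\chi\in V$ lies in $D(A)$ because $A$ is bounded below, so all pairings are defined), and if $\dim V\ge d+1$ it would meet the codimension-$d$ space $W$ nontrivially by rank--nullity, a contradiction. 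The mathematical core --- playing a $(d+1)$-dimensional strictly negative subspace against $W$ --- is the same, since this intersection argument is precisely how the max-min inequality is proven; what your version buys is self-containedness and a uniform treatment of the essential spectrum, where the paper's two-case split is a little awkward (its first case, ``at most $d$ eigenvalues below the bottom of the essential spectrum'', is only conclusive once one knows $\inf\sigma_{\mathrm{ess}}(A)\ge 0$, which really comes out of the second case's estimate).

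Your point about the threshold at $0$ is also well taken and identifies a genuine slip in the wording: read literally, ``at most $d$ nonpositive eigenvalues'' is false (take $A=0$ and $d=0$, where $0$ is an eigenvalue of infinite multiplicity), and both your proof and the paper's in fact establish the correct version --- at most $d$ strictly negative eigenvalues counted with multiplicity, and no negative essential spectrum. As you observe, this is all that \cref{aufbauthm} needs: there the nonnegativity subspace is first enlarged from $\TC_\vpsi(\MC_{\vsigma,\vxi})$ to $\TC_\vpsi(\MC_{\vsigma,\vxi})\oplus\RR\vpsi$ using $(\vH(\vv,\vj)-E)\vpsi=0$ (exactly the step taken in the paper's proof of that theorem), and the bound on the number of eigenvalues strictly below $E$ is what places $\vpsi$ at most at the $(N+M)$th excited level. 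One small quibble: your parenthetical characterization of $\dim V$ is slightly off, since $\dim V$ can also be infinite when negative eigenvalues accumulate at $0$ without any negative essential spectrum; but nothing in your proof uses that characterization --- you only need that $\dim V\ge d+1$ is contradictory.
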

\begin{proof}
Let $\mu_1\le\mu_2\le\ldots$ denote the min-max values of $A$. If $A$ has at most $d$ eigenvalues below the bottom of its essential spectrum (which can be $+\infty$),
then we are done. Otherwise, $\mu_1,\ldots,\mu_{d+1}$ are eigenvalues and the max-min formula implies that
$$
\mu_{d+1}\ge \inf_{\substack{u\in W\\\|u\|=1}} \dua{Au}{u}\ge 0,
$$
where $W\subset H$ is the $d$-codimensional subspace on which $A$ is nonnegative.
\end{proof}

\begin{proof}[Proof of \cref{aufbauthm}]
The condition \cref{seccond} implies that $\vH(\vv,\vj)-E$ is nonnegative on the $(N+M+1)$-codimensional subspace $\TC_\vpsi(\MC_{\vsigma,\vxi})$.
Moreover, we have that $\dua{\vpsi}{(\vH(\vv,\vj)-E)\vpsi}=0$
by \cref{elschr} and $\vpsi\in \TC_\vpsi(\MC_{\vsigma,\vxi})^\perp$, so 
actually $\vH(\vv,\vj)-E$ is nonnegative on an $(N+M)$-codimensional subspace.
\Cref{lionslemm} then implies that $\vH(\vv,\vj)-E$ has at most $N+M$ nonpositive eigenvalues.
\end{proof}

\subsection{Proofs of \cref{sec:AC}}

\begin{proof}[Proof of \cref{lemma:FLL0}] 
By virtue of $\vLambda = 0$, the $N$ two-level systems decouple. Thus, the two-level part of the wavefunction can be written as the $N$-fold tensor product of independent two-level wavefunctions, i.e.,
\begin{equation*}
    \vpsi(\vx) = \bigotimes_{n=1}^N \left(\psi^+_n,\, \psi^-_n \right)^\top \in L^2(\RR^M, \CC^{2^N}).
\end{equation*} 
With this simplification, the constraints for an optimizer $\vpsi$
can be combined into the equivalent constraints
\begin{align*}
\norm{\psi_i^\pm}^2 = \frac{1 \pm \sigma_i}{2} \left[\prod_{\substack{n=1\\n\neq i}}^N \left( \norm{\psi_n^+}^2 + \norm{\psi_n^-}^2\right) \right]^{-1}.
\end{align*}
By use of these constraints and the Cauchy--Schwarz inequality 
it follows that $\langle\vpsi, \sigma_x^i \vpsi \rangle \leq  \sqrt{1-\sigma_i^2}$.
Furthermore, $\norm{\grad \vpsi}^2 + \norm{\vx\vpsi}^2\geq M $ by the ground-state energy of the independent harmonic oscillators,  
implying the lower bound 
$\langle\vpsi, \vH_0 \vpsi\rangle \geq M - \sum_{n=1}^N t_n \sqrt{1-\sigma_n^2}.$
Consequently, $F_\mathrm{LL}^0(\vsigma,\vxi) \geq M - \sum_{n=1}^N t_n \sqrt{1-\sigma_n^2}$, and  by the displacement rule, \cref{fllprop}~\eqref{fllprop:item:displacement-rule}, that
\begin{equation*}
F_\mathrm{LL}^0(\vsigma,\vxi)  \geq M - \sum_{n=1}^N t_n \sqrt{1-\sigma_n^2} + |\vxi|^2.
\end{equation*}
Similarly to the proof of \cref{repthm}, suppose the trial state $\psi(\vx) = \pi^{-M/4} e^{-|\vx - \vxi|^2/2} \vc$ with $\vc\in\CC^{2^N}$ left unspecified. 
Then by the same calculation, it follows that $|\vc|^2=1$, $\vxi_\vpsi = \vxi$ and $ \langle \vc, \vsigma_z \vc \rangle =\vsigma$.
Furthermore, we have that $\norm{\grad\vpsi}^2 =\frac{M}{2}$ and $\norm{\vx \vpsi}^2 = \frac{M}{2} + |\vxi|^2$.
In the case $\vLambda=\bm{0}$, as discussed above, it is sufficient to consider $\vc \in \CC^{2N}$. Suppose the particular choice
of $\vc$ in \eqref{eq:FLL0TrialState}, then it immediately follows that 
the constraints are satisfied, and that  $ \langle \vc, \sigma_x^i \vc \rangle = \sqrt{1-\sigma_i^2}$.
Then
\begin{equation*}
\langle \vpsi, \vH_0 \vpsi\rangle = M + |\vxi|^2 - \sum_{n=1}^N t_n \sqrt{1-\sigma_n^2},
\end{equation*}
which in fact equals the lower bound for $F^0_\mathrm{LL}(\vsigma,\vxi)$,
i.e., $\vpsi$ is a minimizer.
\end{proof}

\begin{proof}[Proof of \cref{lemma-FLL-subdiff}]
For any optimizers $\vpsi_{s\vLambda}$ and $\vpsi_{s'\vLambda}$ of $F_\LL^{s\vLambda}(\vsigma,\vxi)$ and $F_\LL^{s'\vLambda}(\vsigma,\vxi)$ respectively, we have
\begin{align*}
F_\LL^{s'\vLambda}(\vsigma,\vxi)&=\dua{\vpsi_{s'\vLambda}}{\vH_0^{s'\vLambda}\vpsi_{s'\vLambda}}\le \dua{\vpsi_{s\vLambda}}{\vH_0^{s'\vLambda}\vpsi_{s\vLambda}}\\
&=\dua{\vpsi_{s\vLambda}}{\vH_0^{s\vLambda}\vpsi_{s\vLambda}} + \dua{\vpsi_{s\vLambda}}{\vx\cdot(s'-s)\vLambda\vsigma_z\vpsi_{s\vLambda}}\\
&=F_\LL^{s\vLambda}(\vsigma,\vxi) + (s'-s)\dua{\vpsi_{s\vLambda}}{\vx\cdot\vLambda\vsigma_z\vpsi_{s\vLambda}}.
\end{align*}
This implies that $\dua{\vpsi_{s\vLambda}}{\vx\cdot\vLambda\vsigma_z\vpsi_{s\vLambda}} \in \ol{\partial}_s F_\LL^{s\vLambda}(\vsigma,\vxi)(s)$.
\end{proof}

\subsection{Proofs of \cref{flsec}}

\begin{proof}[Proof of \cref{flexsist}]
The proof closely follows the one of Theorem~4.4 in \cite{Lieb1983}.
By shifting $\vH_0$ to $\vH_0+C$ as in the proof of \cref{fllexist}, we can achieve $\vH_0+C>0$. Set $\vh=\sqrt{\vH_0+C}$,
then $\vh^{-1}$ is compact, because the resolvent $(\vH_0+C)^{-1}$ is. Let $\{\vGamma_j\}\subset \DC(Q_0)$ be a minimizing sequence such that $\vsigma_{\vGamma_j}=\vsigma$, $\vxi_{\vGamma_j}=\vxi$. Then for any $\varepsilon>0$ there is a $J\in\mathbb{N}$ such that
$$
\Tr \vh\vGamma_j\vh=\Tr \vh^2\vGamma_j\le F_\LI(\vsigma,\vxi) -C+\varepsilon
$$
for all $j\geq J$. We will from here on switch to the subsequence indexed by $j\geq J$. The estimate above implies that the sequence $\vT_j=\vh\vGamma_j\vh$ is bounded in trace norm and is thus in $\mathfrak{S}_1(\HC)$.
Since $\mathfrak{S}_\infty(\HC)^*=\mathfrak{S}_1(\HC)$, with the dual pairing $\Tr \vT\vK$, where $\vT\in\mathfrak{S}_1(\HC)$ and $\vK\in\mathfrak{S}_\infty(\HC)$,
the Banach--Alaoglu theorem implies that there exists $\vT\in\mathfrak{S}_1(\HC)$, such that up to a subsequence,
\begin{equation}\label{flprconv}
\Tr \vT_j\vK \to \Tr \vT\vK\quad \text{for all}\quad \vK\in \mathfrak{S}_\infty(\HC).
\end{equation}
Since $\vT\ge 0$, we have $\liminf_{j\to\infty}\Tr (\vH_0+C)\vGamma_j=\liminf_{j\to\infty} \Tr \vT_j\ge \Tr \vT=\Tr (\vH_0+C)\vGamma$,
where we have set $\vGamma=\vh^{-1}\vT\vh^{-1}$.

We need to prove now that the self-adjoint, positive trace-class operator $\vGamma$ with finite $\vH_0$-energy
has the right constraints: $\Tr\vGamma=1$, $\vsigma_{\vGamma}=\vsigma$ and $\vxi_{\vGamma}=\vxi$.
First,  taking $\vK=\vh^{-2}$ in \cref{flprconv}, we find $1=\Tr\vGamma_j\to \Tr \vT\vh^{-2}=\Tr\vGamma$.
Next, we put $\vK=\vh^{-1} \sigma_z^n\vh^{-1}$, from which follows
$$
\sigma_n=\Tr \sigma_z^n\vGamma_j=\Tr \vT_j \vh^{-1} \sigma_z^n\vh^{-1} \to \Tr \vT \vh^{-1} \sigma_z^n\vh^{-1}=\Tr \sigma_z^n\vGamma,
$$
i.e.\ $\vsigma_{\vGamma}=\vsigma$.
Lastly, we choose $\vK=\vh^{-1} x_m \iden_{\CC^{2^N}} \vh^{-1}$, which is compact since $x_m$ is bounded in $Q_0$ and the embedding $Q_0\subset\HC$ is compact. Then
$$
\xi_m=\Tr x_m\vGamma_j=\Tr \vT_j \vh^{-1} x_m \iden_{\CC^{2^N}} \vh^{-1}\to \Tr \vT \vh^{-1} x_m \iden_{\CC^{2^N}} \vh^{-1}=\Tr x_m\vGamma,
$$
which finishes the proof.
\end{proof}

\begin{proof}[Proof of \cref{flconvprop}]
Part \eqref{flconvprop:item:lsc} can be proven like Theorem~4.4 and Corollary~4.5 in \cite{Lieb1983} and a similar proof is given here for \cref{flexsist}. Thus this will not be repeated here.

For the proof of \eqref{flconvprop:item:convex} we start from the definition of $F_\LI$,
$$
F_\LI(\vsigma,\vxi)=\inf_{\substack{\vGamma\in\DC(Q_0)\\ \vsigma_\vGamma=\vsigma\\ \vxi_\vGamma=\vxi}} \Tr(\vH_0\vGamma).
$$
Here, $\vGamma$ can always be written as an (infinite) convex combination of pure-state projectors as in \eqref{eq:Gamma-set}. With this and $\vsigma_j = \vsigma_{\vpsi_j}$, $\vxi_j=\vxi_{\vpsi_j}$ we have
$$
\Tr(\vH_0\vGamma) = \sum_j c_j \dua{\vpsi_j}{\vH_0\vpsi_j}, \quad \vsigma=\sum_j c_j \vsigma_{j}, \quad \vxi=\sum_j c_j \vxi_{j},\quad \sum_j c_j=1.
$$
So instead of over all $\vGamma$, the infimum can be taken first over all $c_j\in\RR_+$, $\vsigma_{j}\in[-1,1]^N$, and $\vxi_j\in\RR^M$ under the constraints above before then taking another infimum over all possible pure states that have $\vsigma_{\vpsi_j}=\vsigma_j$, $\vxi_{\vpsi_j}=\vxi_j$, i.e., they are from the constraint manifold $\MC_{\vsigma_j,\vxi_j}$, and further fulfil the constraints from \eqref{eq:Gamma-set}. Without denoting all the constraints this leads to
$$
F_\LI(\vsigma,\vxi) = \inf_{\vsigma_j,\vxi_j,c_j} \sum_j c_j  \inf_{\vpsi_j\in\MC_{\vsigma_j,\vxi_j}} \dua{\vpsi_j}{\vH_0\vpsi_j} = \inf_{\vsigma_j,\vxi_j,c_j} \sum_j c_j F_\LL(\vsigma_j,\vxi_j).
$$
Note that we were able to move the convex sum outside of the inner infimum and thus arrive exactly at the definition for $F_\LL$. But this expression is nothing else but the convex envelope. Since we now have that $F_\LI$ is convex and since the existence of an optimizer (\cref{flexsist}) also gives that $F_\LI$ is locally bounded on all of $(-1,1)^N\times\RR^M$, it is also locally Lipschitz on this set.

To see \eqref{flconvprop:item:subdiff}, we use a well-known characterization: $(-\vv,-\vj)\in\underline{\partial} F_\LI(\vsigma,\vxi)$ if and only if
the convex functional $(\vsigma',\vxi')\mapsto F_\LI(\vsigma',\vxi') + \vv\cdot\vsigma' + \vj\cdot\vxi'$ has global minimum at $(\vsigma,\vxi)$. 
By \eqref{eq:FLprop:E-inf}, this in turn is equivalent to 
\begin{align*}
E(\vv,\vj)&=F_\LI(\vsigma,\vxi) + \vv\cdot\vsigma + \vj\cdot\vxi\\
&=\Tr \vH_0\vGamma + \vv\cdot\vsigma + \vj\cdot\vxi=\Tr \vH(\vv,\vj)\vGamma
\end{align*}
for some $\vGamma\in\DC(Q_0)$ with $\vsigma_\vGamma=\vsigma$ and $\vxi_{\vGamma}=\vxi$. 

The last item, \eqref{flconvprop:item:Legendre},  is a direct consequence of how $E$ is given as the transformation of $F_\LI$ in \eqref{eq:FLprop:E-inf} and the fact that since $F_\LI$ is both convex and lower semicontinuous it must agree with the back-transformation of $E$. 
\end{proof}

\begin{proof}[Proof of \cref{flprop}]
We first show \eqref{flprop:item:displacement}. A proof like for \cref{fllprop}~\eqref{fllprop:item:displacement-rule} is possible here, but we will show another technique that employs the ladder operators $a_m^\dag,a_m$. Define
$a_m^\dag=(x_m-\partial_{x_m})/\sqrt{2}$ and $a_m=(x_m+\partial_{x_m})/\sqrt{2}$. Note that we have $[a_m,a_{m'}]=0$ and $[a_m,a_{m'}^\dag]=[a_m^\dag,a_{m'}]=0$ ($m\neq m'$), and $[a_m,a_m^\dag]=1$. For any $\vzeta\in\CC^M$ define the multimode displacement operator
$$
D(\vzeta)=\prod_{m=1}^M \exp\left( \frac{\zeta_m a_m^\dag - \ol{\zeta_m}a_m}{\sqrt{2}} \right).
$$
Clearly, $D(\vzeta)^{-1}=D(-\vzeta)=D(\vzeta)^\dag$, i.e.\ $D(\vzeta)$ is unitary.
Moreover,
\begin{equation}\label{displadder}
\begin{aligned}
D(\vzeta)^\dag a_m D(\vzeta)&=a_m + \zeta_m / \sqrt{2},\\
D(\vzeta)^\dag a_m^\dag D(\vzeta)&=a_m^\dag + \ol{\zeta}_m / \sqrt{2}.
\end{aligned}
\end{equation}
Notice that $\vxi_\vGamma$ can be alternatively written as 
$$
(\vxi_\vGamma)_m=\Tr\left( \frac{a_m + a_m^\dag}{\sqrt{2}}\sps{\Gamma} \right),
$$
so that 
$$
\vxi_{D(\vzeta) \vGamma D(\vzeta)^\dag}=\vxi_\vGamma + \Re\vzeta.
$$
Clearly, $D(\vzeta)^\dag \sigma_j^n D(\vzeta)=\sigma_j^n$, hence
$$
\vsigma_{D(\vzeta) \vGamma D(\vzeta)^\dag}=\vsigma_\vGamma.
$$
The internal part of the Hamiltonian in terms of ladder operators is
$$
\vH_0=2\sum_{m=1}^M \left(a_m^\dag a_m + \frac{1}{2}\right) + \sum_{m=1}^M\sum_{n=1}^N \frac{a_m + a_m^\dag}{\sqrt{2}}\Lambda_{mn}\sigma_z^n - \sum_{n=1}^N t_n\sigma_x^n.
$$
Using \cref{displadder}, the individual parts transform as
\begin{align*}
D(\vzeta)^\dag a_m^\dag a_m D(\vzeta) &= (a_m^\dag + \ol{\zeta}_m/ \sqrt{2})(a_m + \zeta_m/ \sqrt{2}) \\
&=a_m^\dag a_m + \zeta_m a_m^\dag/ \sqrt{2}  + \ol{\zeta}_m a_m/ \sqrt{2} + \tfrac{1}{2}|\zeta_m|^2\\
&=a_m^\dag a_m + (\Re \zeta_m)x_m  + (\Im \zeta_m) \i\partial_{x_m} + \tfrac{1}{2}|\zeta_m|^2,\\
D(\vzeta)^\dag \frac{a_m + a_m^\dag}{\sqrt{2}}\Lambda_{mn} \sigma_z^n D(\vzeta) &= \frac{a_m + a_m^\dag}{\sqrt{2}}\Lambda_{mn} \sigma_z^n + (\Re \zeta_m) \Lambda_{mn}\sigma_z^n,\\
D(\vzeta)^\dag t_n\sigma_x^n D(\vzeta) &= t_n\sigma_x^n.
\end{align*}
Consequently, the expectation value of the internal part of the Hamiltonian transforms as
\begin{equation}\label{displham}
\Tr(\vH_0 D(\vzeta) \vGamma D(\vzeta)^\dag)=\Tr \vH_0\vGamma + 2(\Re\vzeta)\cdot\vxi_\vGamma + 2(\Im\vzeta)\cdot \langle \i\grad\rangle_{\vGamma} + (\Re\vzeta)\cdot\vLambda\vsigma_\vGamma + |\vzeta|^2.
\end{equation}
After these preparations, we are ready to prove the stated formula. First, let $\vGamma$ be an optimizer for $F_\LI(\vsigma,\vxi)$, and 
put $D(\vzeta) \vGamma D(\vzeta)^\dag$, $\vzeta\in\RR^M$, as a trial state for $F_\LI(\vsigma,\vxi+\vzeta)$, to get
$$
F_\LI(\vsigma,\vxi+\vzeta)\le F_\LI(\vsigma,\vxi) + 2\vzeta\cdot\vxi + \vzeta \cdot \vLambda\vsigma  +  |\vzeta|^2.
$$
Conversely, let $\vGamma$ be an optimizer for $F_\LI(\vsigma,\vxi+\vzeta)$ and put $D(-\vzeta) \vGamma D(-\vzeta)^\dag$ as a trial state to $F_\LI(\vsigma,\vxi)$
to obtain the opposite inequality.

Next, for \eqref{flprop:item:real}, note that the complex conjugation $\vGamma\mapsto\ol{\vGamma}$ leaves the energy $\Tr\vH_0\vGamma$ invariant,
which can be easily verified by writing it with the integral kernel $\ol{\vGamma(\vx,\vx')}=\vGamma(\vx',\vx)$. Let $\vGamma$ be an optimizer for $F_\LI(\vsigma,\vxi)$ and note that
$(\vGamma+\ol{\vGamma})/2$ is a density matrix as well that has same internal energy $F_\LI(\vsigma,\vxi)$. Since it is real-valued, this proves \eqref{flprop:item:real}.

To show \eqref{flprop:item:virial}, we consider the one-parameter family of scaled density matrices $\vGamma_\vM(\vx,\vx')=(\det\vM)\vGamma(\vM\vx,\vM\vx')$ for any diagonal
matrix with positive entries $\vM=\diag(\mu_1,\ldots,\mu_M)\in\RR^{M\times M}$. If $\vGamma$ is an optimizer for $F_\LI(\vsigma,\bm{0})$ then $\vsigma_{\vGamma_{\vM}}=\vsigma_\vGamma = \vsigma$ and 
$\vxi_{\vGamma_{\vM}}=\vM^{-1}\vxi_{\vGamma}=0$. Moreover, 
$$
\Tr\vH_0\vGamma_{\vM}=\Tr(- \vM^2 (\grad\otimes\grad) \vGamma) + \Tr |\vM^{-1}\vx|^2\vGamma + \Tr (\vM^{-1}\vx\cdot\vLambda\vsigma_z\vGamma) - \Tr (\vt\cdot\vsigma_x\vGamma).
$$
Since $\vGamma$ is an optimizer, $\frac{\partial}{\partial \mu_m} \Tr\vH_0\vGamma_{\vM}|_{\mu_m=1}=0$, i.e.,
$$
0=2\Tr((-\partial_{x_m}^2 - x_m^2)\vGamma) - \Tr( x_m\ve_m\cdot \vLambda\vsigma_z\vGamma)
$$
from which the stated formula follows after summation over $m$.

For \eqref{flprop:item:no-momentum}, let $\vGamma$ be an optimizer for $F_\LI(\vsigma,\vxi)$ and consider for a fixed $m$ the mapping $s\mapsto \vGamma_s=D(\i s\ve_m) \vGamma D(\i s\ve_m)^\dag$. Using \cref{displham},
we find
$$
0=\frac{\dd}{\dd s}\Big|_{s=0} \Tr \vH_0\vGamma_s = 2 \langle \i\partial_{x_m} \rangle_{\vGamma},
$$
as required.
\end{proof}

\begin{proof}[Proof of \cref{prop:force-balance}]
Since $\Tr \vH(\vv,\vj)\vGamma=E(\vv,\vj)$ the $\vGamma$ is the optimizer in \eqref{eq:FLprop:Gamma-inf} and thus from \eqref{eq:FLprop:E-inf} we have by characterizing the optimizer with the subdifferential that
$$
(-\vv,-\vj) \in \underline{\partial} F_\LI (\vsigma_\vGamma,\vxi_\vGamma).
$$
Now put in the displacement rule \eqref{flprop:item:displacement} for $F_\LI (\vsigma_\vGamma,\vxi_\vGamma)$ and from differentiation with respect to $\vxi$ directly get
$$
-\vj = \vLambda\vsigma_\vGamma + 2\vxi_\vGamma,
$$
which concludes the proof. The same result can be achieved as a ``hypervirial theorem'' \cite{hirschfelder1960classical} with respect to the momentum operator $-\i\vnabla$.
\end{proof}

\subsection{Proofs of \cref{specsec}}

\begin{proof}[Proof of \cref{fllprop_N=M=1}]
For \eqref{item:sign-flip}, we use the invertible transformation
$$
\psi(x)\mapsto \wt{\psi}(x)=\begin{pmatrix}
\psi^-(-x)\\
\psi^+(-x)
\end{pmatrix}.
$$
Then $\|\wt{\psi}^+\|^2=\frac{1-\sigma}{2}$, $\|\wt{\psi}^-\|^2=\frac{1+\sigma}{2}$ 
 and $\xi_{\wt{\psi}}=-\xi$ for any $\psi$ such that $\|\psi^+\|^2=\frac{1+\sigma}{2}$,
$\|\psi^-\|^2=\frac{1-\sigma}{2}$ and $\xi_\psi=\xi$. Also, $\dua{\wt{\psi}}{H_0\wt{\psi}}=\dua{\psi}{H_0\psi}$, from which the claim follows.

As for the real-valuedness part of \eqref{item:LL-prop-non-neg}, it is enough to see that the real and imaginary parts of $\psi$ decouple in the expression of the quadratic form $\dua{\psi}{H\psi}$, and that the minimization can be carried out for the real and imaginary parts separately. This is obvious, except for the term $-t\langle\sigma_x\rangle_\psi = -2t\Re\dua{\psi^+}{\psi^-}$. But if we write for any admissible $\psi$,
$$
2\Re\dua{\psi^+}{\psi^-}= 1 - \|\psi^+ - \psi^-\|^2=1 - \|\Re\psi^+ -\Re\psi^-\|^2 - \|\Im\psi^+ -\Im\psi^-\|^2
$$
using the polarization identity, then real-valuedness follows.
To see non-negativity, let $\psi$ be an admissible wavefunction. Define the level sets $A_+^\alpha=\{x\in\RR :  \psi^\alpha(x)\ge 0\}$ and $A_-^\alpha=\{x\in\RR :  \psi^\alpha(x)<0\}$. Set
$$
\wt{\psi}=\begin{cases}
\psi & A_+^+\cap A_+^- \\
(\psi^+,-\psi^-) & A_+^+\cap A_-^-\\
(-\psi^+,\psi^-) & A_-^+\cap A_+^-\\
-\psi & A_-^+\cap A_-^-
\end{cases}
$$
which is non-negative.
It is clear that the constraints and all the terms in $\dua{\wt{\psi}}{H_0\wt{\psi}}$ are unchanged except for $-2t\Re\dua{\wt{\psi}^+}{\wt{\psi}^-}$, which, again by the polarization identity, can be reduced to looking at
\begin{align*}
\|\wt{\psi}^+ - \wt{\psi}^-\|^2 &=
\int_{A_+^+\cap A_+^-} |\psi^+ - \psi^-|^2 + \int_{A_+^+\cap A_-^-} |\psi^+ + \psi^-|^2 \\
&+ \int_{A_-^+\cap A_+^-} |\psi^+ + \psi^-|^2
+ \int_{A_-^+\cap A_-^-} |\psi^+ - \psi^-|^2.
\end{align*}
Here, the middle two terms did not increase (now these integrands possibly contain a sum of a positive and a negative function) and the other two integrals remained invariant. We deduce that the transformation $\psi\mapsto\wt{\psi}$ did not increase the energy.

To see \eqref{item:N=1,M=1,VR}, consider the one-parameter family of wavefunctions $\psi_\mu(x)=\sqrt{\mu}\psi(\mu x)$, which has $\|\psi_\mu^+\|^2=\frac{1+\sigma}{2}$,
$\|\psi_\mu^-\|^2=\frac{1-\sigma}{2}$ and $\xi_{\psi_\mu}=\frac{1}{\mu}\xi_\psi$. Moreover,
$$
\dua{\psi_\mu}{H_0\psi_\mu}=\int \left[ \mu^2 |\psi'|^2 + \frac{x^2}{\mu^2} |\psi|^2 \right] \dd x +
\frac{\lambda}{\mu} \int x\left( |\psi^+|^2 - |\psi^-|^2\right)  \dd x -2t\Re\dua{\psi^+}{\psi^-}.
$$
But $\frac{\dd}{\dd\mu}|_{\mu=1}\dua{\psi_\mu}{H_0\psi_\mu}=0$ which is equivalent to the stated relation if we further use $\int x\left(|\psi^+|^2 + |\psi^-|^2\right) \dd x = \xi_\psi = \xi$.

Next, for \eqref{item:N=1,M=1,1stmoment} we use the transformation
\begin{equation*}
    \psi\mapsto \psi_s = \begin{pmatrix}
    \psi^+(x-(1-\sigma)s)\\
    \psi^-(x+(1+\sigma)s)
    \end{pmatrix}
\end{equation*}
which is chosen such that it keeps
$\sigma_{\psi_s} = \sigma_{\psi} = \sigma$, $\xi_{\psi_s} = \xi_{\psi} = \xi$
(not necessarily zero) constant. We consider the derivative $\partial_s \langle H_0 \rangle_{\psi_s}$ that must be zero at $s=0$ if $\psi$ is an optimizer. The parts of $H_0$ are,
\begin{align*}
    \langle -t\sigma_x \rangle_{\psi_s} &= -2t \int \psi^+(x-(1-\sigma)s) \psi^-(x+(1-\sigma)s) \,\dd x,\\
    \langle \lambda \sigma_z x \rangle_{\psi_s} &= \lambda \langle x+(1-\sigma)s \rangle_{\psi^+} - \lambda \langle x-(1+\sigma)s \rangle_{\psi^-} \nonumber\\
      &= \lambda (\langle x \rangle_{\psi^+} - \langle x \rangle_{\psi^-}) + \lambda s (1-\sigma^2),\\
    \langle p^2 \rangle_{\psi_s} &= \langle p^2 \rangle_{\psi}, \\
    \langle x^2 \rangle_{\psi_s} &= \langle (x+(1-\sigma)s)^2 \rangle_{\psi^+} + \langle (x-(1+\sigma)s)^2 \rangle_{\psi^-} \nonumber\\
      &= \langle x^2 \rangle_{\psi} + 2s(\langle x \rangle_{\psi^+} - \langle x \rangle_{\psi^-}) - 2s\sigma\xi + s^2(1+\sigma^2).
\end{align*}
Now the derivative at $s=0$ yields
\begin{equation*}\begin{aligned}
    \frac{\dd}{\dd s} \Big|_{s=0} \langle H_0 \rangle_{\psi_s} = 2&t(1-\sigma) \int (\psi^+)' \psi^- - 2t(1+\sigma) \int \psi^+ (\psi^-)'  \\
    &+ \lambda(1-\sigma^2) + 2 \langle x \rangle_{\psi^+} - 2 \langle x \rangle_{\psi^-} - 2 \sigma\xi = 0.
\end{aligned}\end{equation*}
As $\langle x \rangle_{\psi^-}  = \xi- \langle x \rangle_{\psi^+} $, this yields the stated result.
Moreover, if we similarly consider $\frac{\dd^2}{\dd s^2}\big|_{s=0} \langle\psi_s, H_0 \psi_s\rangle \geq 0$ we find the stated inequality.

Finally, \eqref{item:decoupling-FLL-explicit} is just a special case of \cref{lemma:FLL0}.
\end{proof}

\begin{proof}[Proof of \cref{thrm:EulerLagrange}]
Existence of the Lagrange multipliers in the regular case $\sigma\neq\pm 1$ was already treated in \cref{elthm}, and uniqueness follows from the fact that the functions $\psi$, $\sigma_z\psi$ and $x\psi$ are linearly independent.

It remains to consider the irregular case $\sigma=\pm 1$. We only look at $\sigma=+1$, the other case is analogous, and proceed similarly as in the proof of \cref{elthm}. The criticality condition reads $\langle \chi, H_0\psi \rangle = 0$ with $\psi = (\psi^+,0)^\top$, $\chi = (\chi^+,0)^\top$, $\int\chi^+\psi^+=0$ and $\int x\chi^+\psi^+=0$.
This implies, by a similar argument as before, that
$$
-(\psi^+)'' + x^2 \psi^+ + \lambda x \psi^+= \mu^+\psi^+ + \mu^0 x\psi^+
$$
for some unique $\mu^+,\mu^0\in\RR$. Rearranging, we get
$$
-(\psi^+)'' + \left(x+\frac{\lambda -\mu^0}{2}\right)^2 \psi^+ = \left(\mu^+ + \frac{(\lambda-\mu^0)^2}{4}\right) \psi^+.
$$
The constraint $\xi_{\psi}=\xi$ implies that $\xi=-(\lambda - \mu^0)/2$, or $\mu^0=\lambda+2\xi$.
From the eigenvalues of the harmonic oscillator, we find $\mu^+=2n+1-\xi^2$. The stated equation follows by translation.
\end{proof}

\begin{proof}[Proof of \cref{uniquevrep}]
We know from \cref{fllprop_N=M=1}~\eqref{item:LL-prop-non-neg} that there is at least one optimizer $\psi$ of $F_\LL(\sigma,\xi)$ which is non-negative.
Since for $\sigma\neq\pm 1$ this $\psi$ satisfies the Schr\"odinger equation \cref{eleqs} with $H(v,j)$ by \cref{thrm:EulerLagrange}, it is even positive a.e.\ by the strong UCP (\cref{ucpthm}). We also have $\exp(-tH(v,j))$ as a positivity improving operator~\cite{Hirokawa2014,Nguyen2024}, which together with steps 3 and 4 in \cite{uniqueness} means that the ground-state eigenvector of $H(v,j)$ is strictly positive and non-degenerate. Any excited eigenstate is orthogonal to the ground state and consequently must change sign. Thus the optimizer $\psi$ must also be the unique ground state.

For the uniqueness part, suppose that $\wt{\psi}$ is another optimizer of $F_\LL(\sigma,\xi)$ with potentials $(\wt{v},\wt{j})$, then 
$E(v,j)=E(\wt{v},\wt{j})+\sigma(v-\wt{v}) + \xi(j-\wt{j})$. We have
\begin{align*}
E(v,j)&\le \dua{\wt{\psi}}{H(v,j)\wt{\psi}}=E(\wt{v},\wt{j}) + \sigma(v-\wt{v}) + \xi(j-\wt{j})=E(v,j),
\end{align*}
hence there is equality. Therefore, $\dua{\wt{\psi}}{H(v,j)\wt{\psi}}=E(v,j)$, so $\wt{\psi}$ satisfies $H(v,j)\wt{\psi}=E(v,j)\wt{\psi}$,
which contradicts the fact that the ground state of $H(v,j)$ is unique. 

Finally, if $\sigma=\pm 1$ then note that by \cref{thrm:EulerLagrange} the $\psi$ corresponding to $(\sigma,\xi)$ must satisfy the Schr\"odinger equation for the harmonic oscillator \cref{eleqsharm} and is thus unique. But \cref{eleqsharm} is not a Schr\"odinger equation of the form $H(v,j)\psi=E\psi$. More generally, if $H(v,j)\psi=E\psi$ for some $(v,j)$ and $\psi$,
then $\psi^{\mp}=0$ implies that $\psi^{\pm}=0$ from the coupling term $-t\sigma_x$, $t\neq 0$, so $\psi\equiv 0$,
hence $v$-representability does not hold in this case.
\end{proof}

\begin{proof}[Proof of \cref{prop:F-diff-N=M=1}]
We start by showing $F_\LI=F_\LL$ on $(-1,1)\times\RR$. By \cref{uniquevrep}~\eqref{uniquevrep:item:inner} the (unique) optimizer of $F_\LL(\sigma,\xi)$ is a ground state of $H(v,j)\psi=E(v,j)\psi$ with $\sigma_\psi=\sigma$ and $\xi_\psi=\xi$. Any ensemble state $\Gamma$ can be written as a convex combination over pure-state projectors onto normalized $\psi_j$, i.e., $\Gamma=\sum_j c_j |\psi_j\rangle\langle\psi_j|$ with $\sum_j c_j =1$. With the variational principle for the ground state $\psi$ this means
\begin{align*}
E(v,j) &= F_\LL(\sigma,\xi) + \sigma v + \xi j = \sum_j c_j \langle \psi,H(v,j)\psi \rangle \\
&\leq \sum_j c_j \langle \psi_j,H(v,j)\psi_j \rangle = \Tr H(v,j)\Gamma = \Tr H_0\Gamma + \sigma v + \xi j.
\end{align*}
Now taking the infimum over all ensemble states $\Gamma$ with $\sigma_\Gamma=\sigma$ and $\xi_\Gamma=\xi$ yields $F_\LL\leq F_\LI$. Since anyway $F_\LI\leq F_\LL$ we have that $F_\LI=F_\LL$ on $(-1,1)\times\RR$. Since further the $(v,j)$ such that $\psi$ is a ground state of $H(v,j)\psi=E(v,j)\psi$ are unique, the subdifferential $\partial F_\LI(\sigma,\xi)$ is a singleton and thus the functional is automatically differentiable on $(-1,1)\times\RR$.

The cases $\sigma=\pm 1$ are now shown separately. First, by \cref{flprop}~\eqref{flprop:item:displacement} shift to $F_\LI(\sigma,\xi) = F_\LI(\sigma,0) + \lambda\sigma\xi + \xi^2$, then use the property that $F_\LI(\sigma,0)$ can be evaluated as the infimum over all convex combinations of $F_\LI(\sigma_j,0)$ with $\sum_j c_j\sigma_j=\sigma$ and $\sum_j c_j = 1$ \cite[Eq.~(4.6)]{Lieb1983}. But if $\sigma=\pm 1$ then also all $\sigma_j=\pm 1$, so no non-trivial convex combination is possible and we directly have $F_\LI(\sigma,0) = F_\LL(\sigma,0)$ from which follows $F_\LI(\sigma,\xi) = F_\LL(\sigma,\xi)$ by the displacement rule.
\end{proof}



\section*{References}

\end{document}